\documentclass[12pt]{amsart}
\usepackage{amsmath,amsthm,amssymb,mathtools,yhmath}
\usepackage{graphicx,adjustbox}
\usepackage{stmaryrd}
\usepackage{amsfonts}
\usepackage{mathrsfs}
\usepackage{enumerate, url}
\usepackage[margin=1in]{geometry}
\usepackage{caption}
\usepackage{algorithm}
\usepackage{algpseudocode}
\usepackage{mleftright}
\usepackage{hyperref}
\usepackage{tikz}
\usepackage{tabu}
\usepackage{float}
\usepackage{indentfirst}
\usepackage{mathdots} %\ddots, \vdots, \iddots
\usepackage[all]{xy}
\usepackage{subfigure}
\usepackage{tikz-cd}  % graph
\usetikzlibrary{positioning}
\usepackage{xcolor}
\hypersetup{
	colorlinks,
	linkcolor={red!50!black},
	citecolor={blue!50!black},
	urlcolor={blue!80!black}
}
\usepackage{hyperref}
\usepackage{tabu}
\usepackage{subfigure}
%%%%%%%%%%%%%%%%%%%%%%%%%%%%%%%%%%%%%%%%%%%%%%%%%%%%%%%%%%%%%%%%%%
\newtheorem{theorem}{Theorem}[section]
\newtheorem{proposition}[theorem]{Proposition}
\newtheorem{proposition/definition}[theorem]{Proposition/Definition}
\newtheorem{lemma}[theorem]{Lemma}
\newtheorem{corollary}[theorem]{Corollary}

\theoremstyle{definition}
\newtheorem{definition}[theorem]{Definition}

\newtheorem{example}[theorem]{Example}

\theoremstyle{remark}

\newtheorem{remark}[theorem]{Remark}

%%%%%%%%%%%%%%%%%%%%%%%%%%%%%%%%%%%%%%%%%%%%%%%%%%%%%%%%%%%%%%%%%%

\newcommand{\Rmnum}[1]{\expandafter\@slowromancap\Romannumeral #1@}

%%%%%%%%%%%%%%%%%%%%%%%%%%%%%%%%%%%%%%%%%%%%%%%%%%%%%%%%%%%%%%%%%%

\DeclareMathOperator{\im}{im}

\DeclareMathOperator{\diag}{diag}
\DeclareMathOperator{\spn}{span}

\DeclareMathOperator{\Ass}{Ass}
\DeclareMathOperator{\Lcm}{Lcm}

\DeclareMathOperator{\Frac}{Frac}
\DeclareMathOperator{\Sqrt}{Sqrt}
\setcounter{MaxMatrixCols}{15}
%%%%%%%%%%%%%%%%%%%%%%%%%%%%%%%%%%%%%%%%%%%%%%%%%%%%%%%%%%%%%%%%%%
\begin{document}
	\title{Commutative algebra-enhanced topological data analysis}
	\author[C.S.~Hu]{Chuan-Shen Hu}
	\address{School of Physical and Mathematical Sciences,
		Nanyang Technological University,
		21 Nanyang Link, Singapore 637371}
	\email{chuanshenhu.official@gmail.com}
	\author[Y.~Wang]{Yu Wang}
	\address{KLMM, Academy of Mathematics and Systems Science, Chinese Academy of Sciences, Beijing 100190, China}
	\email{wangyu2020@amss.ac.cn}
	\author[K.L.~Xia]{Kelin Xia}
	\address{School of Physical and Mathematical Sciences,
		Nanyang Technological University,
		21 Nanyang Link, Singapore 637371}
	\email{xiakelin@ntu.edu.sg}
	\author[K.~Ye]{Ke Ye }
	\address{KLMM, Academy of Mathematics and Systems Science, Chinese Academy of Sciences, Beijing 100190, China}
	\email{keyk@amss.ac.cn}
	\author[Y.P.~Zhang]{Yipeng Zhang}
	\address{School of Physical and Mathematical Sciences,
		Nanyang Technological University,
		21 Nanyang Link, Singapore 637371}
	\email{yipeng001@e.ntu.edu.sg}
	\date{\today}
	
	%%%%%%%%%%%%%%%%%%%%%%%%%%%%%%%%%%%%%%%%%%%%%%%%%%%%%%%%%%%%%%%%%%
	\begin{abstract}
		%Topological data analysis (TDA) has many application in data science, machine learning,  chemical and biological information theory and so on. It mainly focus on the topological information of the data and illustrate the data in the topological aspect. In this paper, we may introduce some commutative algebraic and algebraic geometric tools to obtain more topological information. We also introduce some algebraic invariants and algebraic geometric invariants to illustrate the structure of the data cloud. This may provide some new sights for the TDA community.	
		Topological Data Analysis (TDA) combines computational topology and data science to extract and analyze intrinsic topological and geometric structures in data set in a metric space. While the persistent homology (PH), a widely used tool in TDA, which tracks the lifespan information of topological features through a filtration process,  has shown its effectiveness in applications, %such as bioinformatics, chemical structure analysis, and materials science. 
		it is inherently limited in homotopy invariants %like connected components, loops and voids, 
		and overlooks finer geometric and combinatorial details. To bridge this gap,  we introduce two novel commutative algebra-based frameworks which extend beyond homology by incorporating tools from computational commutative algebra : (1) \emph{the persistent ideals} derived from the decomposition of algebraic objects associated to simplicial complexes, like those in theory of edge ideals and Stanley--Reisner ideals, which will provide new commutative algebra-based barcodes and offer a richer characterization of topological and geometric structures in filtrations. %By leveraging primary decomposition on this construction, we establish a commutative algebra-based persistence barcode , providing an algebraic geometric perspective on persistent topology without loss of information from conventional PH.  
		(2)\emph{persistent chain complex of free modules} associated with traditional persistent simplicial complex by labelling each chain in the chain complex of the persistent simplicial complex with elements in a commutative ring, which will enable us to detect local information of the topology via some pure algebraic operations. \emph{Crucially, both of the two newly-established framework can recover topological information got from conventional PH and will give us more information.} Therefore, they provide new insights in computational topology, computational algebra and data science.
	\end{abstract}
	\maketitle
	\section{Introduction}

		Topological Data Analysis (TDA) is an interdisciplinary field that integrates computational topology, homological algebra, and data science to uncover and analyze intrinsic topology and geometry within high-dimensional, noisy, and complex datasets~\cite{carlsson2009topology, ghrist2008barcodes, Barcodes_for_shapes, singh2007topological, Epstein_2011}. By leveraging algebraic topology, TDA provides a robust framework for detecting and quantifying topological structures. Due to its ability to extract meaningful topological information from structured and unstructured data, TDA has found applications across various scientific and engineering areas such as bioinformatics~\cite{HOZUMI2024115842,Liu2022,Bhaskar2023}, material science~\cite{nakamura2015persistent,hiraoka2016hierarchical,murakami2019ultrahigh}, and deep and machine learning models~\cite{zhu2023tidal,Liu2022,chen2021sars}.
		
		%In particular, %such as connected components, loops, and voids, which persist across multiple scales in data. 
		A key tool in TDA, known as \textit{persistent homology} (PH), tracks the evolution of these features throughout a filtration, offering a multi-scale representation of the point cloud data. To summarize and extract features from persistent homology, the persistence barcode (PB) is a widely-used representation that encodes the lifespan of topological features across a filtration~\cite{barannikov1994framed,frosini1992measuring,ghrist2008barcodes,Barcodes_for_shapes,mischaikow2013morse,robins1999towards}. It serves as a complete homological invariant, detecting and recording homological information within topological filtrations~\cite{barannikov1994framed}. In particular, homological changes—such as the birth and death of loop structures—are effectively captured as intervals called \textit{bars} in a barcode. Generally, longer bars correspond to significant topological structures, while shorter bars are often considered less informative or associated with noise. With their ability to measure shapes and structures using size functions~\cite{frosini1992measuring}, capture geometric descriptors in Morse theory~\cite{mischaikow2013morse}, and infer the topology and geometry of attractors in dynamical systems~\cite{robins1999towards}, persistence barcodes provide a compact yet insightful summary of the topological evolution of data. This makes them particularly valuable in applications where shape and connectivity play a crucial role.
		
		However, despite their effectiveness, topology- and homology-based approaches have certain limitations. While these methods---such as PBs---capture essential homological features, including the birth and death of connected components, loops, and voids, they often fail to preserve finer geometric and combinatorial information present in the data. For instance, distinct datasets with different underlying geometric structures can yield similar or even identical PBs, limiting their discriminative power in applications that require more nuanced representations. Moreover, tools like PBs are typically represented as unordered multisets of intervals, which present challenges for statistical comparison, vectorization, and integration into machine learning workflows. In particular, local information---such as the specific locations and involved data points associated with persistence intervals---is often omitted. These limitations have motivated the development of more expressive tools that incorporate richer geometric detail, including algebraically enriched frameworks such as combinatorial Hodge Laplacian-based methods \cite{HORAK2013303,guelen2023generalization,PersistentLaplacians,qiu2023persistent}, sheaf-theoretic approaches \cite{curry2015topological,curry2016discrete,Curry2018,hansen2019toward}, and commutative algebraic constructions \cite{ren2018weighted, liu2023persistent, faulstich2024algebraic,Schenck2022algebraic}. These extensions aim to capture deeper structural information and improve compatibility with modern data analysis pipelines.
		To address these challenges, this work introduces a novel framework grounded in computational commutative algebra for analyzing topological filtrations and their persistence.

		\subsection*{Main Results } This paper introduces two different commutative algebra-based methods for TDA. The first method is about the \emph{persistent ideals.} We first introduce some commutative algebraic objects--edge ideals and Stanley-Reisner ideals--linked to simplicial complex. Then we prove a persistent version of the decomposition of the ideals (cf. Theorem~\ref{Proposition: lemma for the birth-death information}  and therefore we can define \emph{the persistent associated primes} as new commutative algebra-based persistence barcodes. Then we prove that via some algebraic operation, the newly-defined barcodes based on the Stanley-Reisner ideals will re-establish the information got from traditional PB (cf. Theorem~\ref{Persistent affine variety covers persistent Betti number} and Theorem~\ref{Persistent affine variety covers persistent Betti number: dicrete version}), and thus shows that they encode finer combinatorial and geometric features. It needs to be mentioned that we really provide a much comprehensive method for constructing the persistent ideals via Theorem~\ref{Proposition: lemma for the birth-death information} and the method is also valid for other ideals  defined by the simplicial complex such as those mentioned in \cite{Hibi2011monomial, Hibi2018binomial}--though the information we get may vary.
		
		The second method consists of the \emph{persistent chain complex of free modules} associated with the simplicial one by labeling each chain with certain elements in a \emph{unique factorization domain }(UFD). These complex can be viewed as a "twisted version" of the tranditional one, like \emph{the weighted persistent homology}~\cite{dawson1990homology, ren2018weighted} and also mentioned in \cite[Chapter 8]{Schenck2022algebraic} and \cite{Carlsson2009} in the form \emph{multiparameter persistence}. We prove some equivalence between this construction and the original persistent chain complex and illustrate some new features that can be read from the new barcodes. To be specific, we can prove that it can also recover the information obtained from the traditional PB by two different ways (cf. Theorem~\ref{equivalence of the two chain complexes via evaluation} and Theorem~\ref{equivalence of the two chain complexes via localization}). Furthermore, by adjusting the label elements in the UFD, we may get the information from \emph{any full subcomplex}(cf. Corollary~\ref{corollary: local information via evaluation} and Corollary~\ref{corollary: local information via localization}), which illustrates the "local inforamtion". Thus by flexible choice of the label elements, we can get vast topological features that are not easy to obtain through the traditional method.  More importantly, since we can use different UFD for labelling, this method offers a unified framework from algebraic geometric viewpoints.

	%%%%%%%%%%%%%%%%%%%%%%%%%%%%%%%%%%%%%%%%%%%%%%%%%%%%%%%%%%%%%%%%

	\section{Preliminary}
	Readers may be assumed to be familiar with basic abstract algebra, especially basic theory about rings and modules, such as the tensor product.  Books like \cite{Lang,Jacobsonn1989vol1,Jacobsonn1989vol2} will be good references for it.  The basic knowledge of general topology and algebraic topology, at least those used in traditional topological data analysis, is obviously required. Books like \cite{hatcher2002AT,munkres1984elements,massey1991AT} are good references for topology. 
	
		%\section{Mathematical Backgrounds}\label{Section: Mathematical Backgrounds}
		
		\subsection{Notations and Terminologies}\label{Section: Notations and Terminologies}
		
		The notations \( \mathbb{N}, \mathbb{Z}, \mathbb{Q}, \mathbb{R} \), and \( \mathbb{C} \) represent the sets of natural numbers, integers, rational numbers, real numbers, and complex numbers, respectively. The number zero is included in the set of natural numbers %i.e., \( 0 \in \mathbb{N} \), 
		and the set of all positive integers is denoted by \( \mathbb{Z}_+ \).  For any \( n \in \mathbb{Z}_+ \), the permutation group on the set \( [n] := \{1, \dots, n\} \) is denoted by \( S_n \), and the sign of a permutation \( \phi \in S_n \) is written as \( \operatorname{sgn}(\phi) \). The power set of \( [n] \), denoted by \( 2^{[n]} \), consists of all subsets of \( [n] \).

		All rings $R$ considered in this paper are assumed to be commutative with a multiplicative identity $1_R$ (or simply $1$), unless stated otherwise. Typical examples include the ring of integers $\mathbb{Z}$, fields $\mathbb{F}$, polynomial rings $\mathbb{S}_n := \mathbb{F}[x_1, \dots, x_n]$ over a field $\mathbb{F}$ with $n$ variables, and the corresponding rational function fields $\mathbb{F}(x_1, \dots, x_n)$. %The polynomial ring \( \mathbb{S}_n \) is also denoted simply by \( \mathbb{S} \) when there is no ambiguity in the context.
		
		For any subset \( S \) of a ring $R$ , the ideal generated by \( S \) is denoted by \( \langle S \rangle_{R} \), or simply \( \langle S \rangle \) when the ring $R$ is clear from the context. Specifically, the ideal $\langle S \rangle$ consists of all finite sums of the form $\sum_{i = 1}^m r_is_i$ where $m \in \mathbb{N}$, $r_i \in R$, and $s_i \in S$. In particular, if $S = \{ s_1, s_2, \dots, s_k \}$ is a finite set with $k$ elements, the ideal $\langle S \rangle$ is also denoted by $\langle s_1, s_2, \dots, s_k \rangle$. %Importantly, ideals in $\mathbb{S}_n$ that are generated by monomials, known as \textit{monomial ideals}, play a central role in this paper. Specifically, for a vector $\mathbf{u} = (u_1, ..., u_n) \in \mathbb{N}^n$, the monomial $x_1^{u_1} \cdots x_n^{u_n}$ is denoted by $x^\mathbf{u}$. On the other hand, for every $\sigma \subset \{1,\dots,n\}$, $x_\sigma:=\prod_{i\in \sigma}x_i = x^{\mathbf{u}(\sigma)}$, where $\mathbf{u}(\sigma) = (u_1, ..., u_n) \subset \{ 0, 1 \}^n$ with $u_i = 1$ if and only if $i \in I$. 
		
		As for the ring $\mathbb{S}_n$, we denote $x^{\alpha}:=x_1^{\alpha_1}\cdots x_{n}^{\alpha_n}$ for $\alpha\in \mathbb{N}^n$. And for a nonempty set $\sigma\subset \{1,2,\dots,n\}$, we denote $x_{\sigma}:=\Pi_{i\in \sigma}x_i$. By convention, $x_{\emptyset}:=1$.
		
		%Beyond algebraic structures, this paper explores certain combinatorial and topological structures to bridge tools from commutative algebra with topological data analysis. Specifically, 
		The topological terminations we use here are mainly from simplicial homology in the combinatoric way. So an (\textit{abstract}) \textit{simplicial complex} over $[n]:=\{1, 2, \dots, n\}$ means a nonempty collection $\Delta$ of subsets $\sigma \in 2^{[n]}$ satisfying the following property: if $\sigma \in \Delta$ and $\emptyset\neq \tau \subset \sigma$, then $\tau \in \Delta$. By convention, we require that a simplicial complex does always NOT contain $\emptyset$.. The \textit{dimension} of a simplex $\sigma \in \Delta$, denoted as $\operatorname{dim}(\sigma)$, is defined as $\#\sigma - 1$.
	On the other hand, we will use the term "\emph{reduced simplicial complex}" for the complex containing $\emptyset$ as the \emph{$-1$ dimensional face}, and we omit the case when $\Delta=\{\emptyset\}$. This is coherent with the \emph{reduced homology}. 
	
	By \emph{simplex (or a facet) over $W\subset \{1,\dots,n\}$}, we mean the simplicial complex defined by all the subsets of $W$ except for the empty set.  And the \emph{reduced simplex} for the one containing $\emptyset$.  
	For \emph{subcomplex} $\Delta'$ of a simplicial complex $\Delta$, we mean that $\Delta'\subset \Delta$ as the family of subsets and $\Delta'$ itself is a simplicial complex. The $0-$dimensional faces are called the \emph{vertices} as usual. For a simplicial complex $\Delta$ over $\{1,2,\dots,n\}$ and $W\subset \{1,2,\dots,n\}$ contained in the set of vertices of $\Delta$,  we use the term \emph{the full subcomplex} $\Delta'$ of $\Delta$ over $W$ to denote the subcomplex $\Delta':=\{\sigma\in \Delta|\sigma\subset W \}$.

	%For a simplicial complex $\Delta$ over $\{1,\dots,n\}$, we define the graph $\mathcal{G}(\Delta)$ the graph as $\mathcal{V}(\mathcal{G}(\Delta)):=\{1,\dots,n\}$ and $\mathcal{E}(\mathcal{G}(\Delta))$ is the set of $1-$dimensional faces of $\Delta$. {\color{red} Where do we use this setting?}

		\begin{example}[Clique complexes]
			Let \( G = (V, E) \) be an undirected simple graph. A subset \( \sigma \subset V \) is called a \textit{clique} of $G$ if $\# \sigma =1$ or \( \{ u, v \} \in E \) for every \( u, v \in \sigma \) . The \textit{clique complex} of \( G \), denoted by \( \Delta(G) \), is defined as a simplicial complex consisting of all cliques in \( G \), which can be proved indeed a simplicial complex over $V$.
		\end{example}
	%In a commutative ring $R$, suppose $S\subset R$, we will use $\langle S\rangle_{R}$,  and $\langle S\rangle$ if $R$ is specified, to denote the ideal generated by $S$ in $R$. In the polynomial ring $R[x_1,\dots,x_n]$ over a commutative ring, we will use $x^{\alpha}$ to denote $x_1^{\alpha_1}x_2^{\alpha_2}\cdots x_{n}^{\alpha_n}$ for $\alpha\in\mathbb{N}^n$ and $x_{\sigma}$ to denote $\Pi_{i\in\sigma}x_i$ for $\sigma\subset\{1,2,\dots,n\}$.

	\subsection{ A brief review of traditional TDA}
	Let's have a quick review of traditional topological data analysis. Given a data cloud $\mathfrak{D}:=\{v_1,\dots,v_n\}$ in a metric space, we want to find some topological information to describe it. The usual topological information we use is the \emph{persistent homology} (PH) from a filtration. For any $\epsilon>0$, denote the $\epsilon-$ball centered at $v$ as $B_{\epsilon}(v)$. We can construct a graph $G_\epsilon=\{V_\epsilon,E_\epsilon\}$ as follows:
	$V_\epsilon=\{1,\dots,n\}$ and $\{i,j\}\in E_\epsilon$ if and only if $d(v_i,v_j)<\epsilon$, i.e., the two $\frac{\epsilon}{2}-$balls centered at $v_i$ and $v_j$ intersect.  Then we can construct a simplicial complex linked with the graph. 
	
	The construction of the simplicial complex is not unique and here we use the \emph{clique complex} of the graph as an example, which is called the \emph{Vietoris-Rips complex}. By definition, the Vietoris-Rips complex is the simplicial complex $\Delta_\epsilon$ consisting of all the clique with $k+1-$vertices of the graph $G_{\epsilon}$ as its $k-$dimensional faces. With a simplicial complex as a topological space, we can construct a chain complex as follows:
	\begin{equation}\label{equation of the chain complex of persistent homology}
		\begin{tikzcd}
			\cdots\arrow[r,"\delta_{k+1}^\epsilon"]&C_k^\epsilon\arrow[r,"\delta_{k}^\epsilon"]&\cdots \arrow[r,"\delta_{2}^\epsilon"]&C_1^\epsilon\arrow[r,"\delta_1^\epsilon"]& C_0^{\epsilon}\arrow[r,"\delta_0^\epsilon"]& 0,
		\end{tikzcd}
	\end{equation}
	where the \emph{chain group} $C_{k}^\epsilon$ is the free abelian group/$\mathbb{F}-$vector space generated by the $k-$ dimensional faces of $\Delta_\epsilon$ and the linear map $\delta_k^\epsilon:C_k^\epsilon\longrightarrow C_{k-1}^\epsilon$ is defined as $\delta_k^{\epsilon}(\sigma):=\sum_{u=1}^{k+1} (-1)^u \sigma\setminus\{i_u\}$ where $\sigma=\{i_1,\dots,i_{k+1}\}\in\Delta_{\epsilon}$ with $i_1<\dots<i_{k+1}$, and extended to the whole chain group linearly. Then we can calculate the $k-$th \emph{(persistent) homology group} $H_k^\epsilon:=\ker(\delta_k^{\epsilon})/\im(\delta_{k+1}^\epsilon)$. Here the word "persistent" means we can get a sequence of homology group when $\epsilon$ increases continuously or discretely. Usually, we use the \emph{rank}/\emph{dimension} of the homology groups as numerical indicators for the topological information. The rank/dimension of the $k-$th homology group is a \emph{topological invariant} of $\Delta_\epsilon$, called the \emph{$k-$th Betti number} of the topological space $\Delta_\epsilon$ and denoted by $b_k^\epsilon$. By varying $\epsilon$, we can make a tablet of the change of the $k-$th Betti number, which is called the tablet of \emph{persistence barcode}. The tablet consists of some topological information of the data cloud as well as some \emph{metric} information, since the change of $\epsilon$ can represent the change of the distance in the metric space.
	
	Of course, the Betti numbers aren't the only topological invariants for the barcode. For example, the \emph{Euler character} $\chi^\epsilon=\sum_{u=0}^l (-1)^u b_u^\epsilon$, where $l$ is the maximal number for which the $l-$th chain group isn't $0$, is another topological invariant. The aim of this paper is to introduce some commutative algebraic tools to obtain more topological or algebraic invariants. Also notice that in reality, the complex above may be directly given from more geometric data like the chemical data and the data can be also from some geometrical objects like manifolds, and in these cases, the persistent process may need some changes.
	%%%%%%%%%%%%%%%%%%%%%%%%%%%%%%%%%%%%%%%%%%%%%%%%%%%%%%%%%%%%%%%%%%%%%%%%%%%%%%%%%%%%%%%%%%%%%%
	
		\subsection{Commutative algebraic objects linked to simplicial complexes}\label{section: algebraic objects linked to simplicial complexes}
		 In this part, we will introduce some commutative algebraic objects linked to simplicial complexes, namely the edge ideal and Stanley--Reisner ideal of a simplicial complex. Basic knowledge of commutative algebra is presented in the Appendix for those in need. Most of the objects introduced here are studied by scholars in combinatoric commutative algebra (c.f. \cite{miller2006combinatorial,Hibi2011monomial,Nguyen2024,Seyed2025,Ha2024,ta2024,hien2024,fr2024,ibar2024}), but their main focus is on the algebraic or combinatoric aspects and does little with TDA. In the next section we will discuss  how they can be applied to TDA and give a \emph{persistent} version of these algebraic objects. 
		 
		 \subsubsection{Edge Ideals}
		 Let \( G = (V, E) \) be an undirected simple graph with vertex set \( V = \{1,\dots,n\} \), and \( \mathbb{S}_n = \mathbb{F}[x_1, \dots, x_n] \). The \emph{edge ideal} of \( G \), denoted by \( I_{\rm edge}(G)\), is the ideal defined as  
		 \begin{equation*}
		 	I_{\rm edge}(G) := \langle x_i x_j \mid \{i, j\} \in E \rangle_{\mathbb{S}_n}.
		 \end{equation*}
		 %The variety associated with the ideal \( I_{\rm edge}(G) \) is called the \emph{edge variety} of the graph. %For simplicity, the ideal \( I_{\rm edge}(G) \) may also be denoted as \( I(G) \) there is no ambiguity regarding the graph \( G \) and the ring \( \mathbb{S} \).
		 
		 We can also generalize the edge ideal to a simplicial complex. Specifically, let $\Delta$ be a simplicial complex over the vertex set $\{1,\dots,n\}$, the edge ideal of $\Delta$, denoted as $I_{\rm edge}(\Delta)$, is defined by 
		 \begin{equation*}
		 	I_{\rm edge}(\Delta) := \langle x_\sigma \mid \sigma \in \Delta, \ \dim(\sigma) = 1 \rangle_{\mathbb{S}_n}.    
		 \end{equation*}
		 	\subsubsection{Stanley--Reisner Ideals}
		 
		 As for any simplicial complex $\Delta$ over $\{1,\dots,n\}$, we define the \emph{Stanley–Reisner ideal (over $\mathbb{S}_n$)} $$I_\Delta:=\langle x_\sigma|\sigma\subset \{1,2,\dots,n\}, \sigma\neq \emptyset ,\sigma\not\in\Delta\rangle_{\mathbb{S}_n}$$ and the \emph{face ring or Stanley–-Reisner ring} is defined as the quotient ring $R_\Delta:=\mathbb{S}_n/I_\Delta$. Note that for a simplex $\Delta$ over $\{1,2,\dots,n\}$, since $\{\sigma|\sigma\subset \{1,2,\dots,n\}, \sigma\neq \emptyset ,\sigma\not\in\Delta\}=\emptyset$, then by logical convention, $I_{\Delta}=0$.
		% Also we call the variety defined by $I_\Delta$ the \emph{Stanley–Reisner variety} of $\Delta$. 
		
		% Before further analysis, we will give some remark.
		 \begin{remark}\label{remark on ideals and complexes}
		 	\begin{enumerate}
		 		\item By definition, given the vertex set $\{1,\dots,n\}$, the edge ideal of a graph is determined by the graph and vice versa. However, since many simplicial complexes share the same set of one-dimensional faces, there doesn't exist a one-to-one correspondence between simplicial complexes and a special kind of ideals via the definition of the edge ideal.
		 		\item By definition, given the vertex set $\{1,\dots,n\}$, the Stanley–Reisner ideal is determined by the simplicial complex, so is the face ring. Conversely, any \emph{radical} monomial ideal $I\subset\mathbb{S}_n$ can determine a simplicial complex over $\{1,\dots,n\}$ whose Stanley–Reisner ideal in $\mathbb{S}_n$ is $I$ \cite[Theorem 1.7,Chapter 1]{miller2006combinatorial}. That's why Stanley–Reisner ideals are important in studying monomial ideals.
		 	\end{enumerate}
		 \end{remark}
		 For our purpose, the topology of the simplicial complex is vital, so the following rigidity theorem of Stanley–Reisner ideals are somehow important for us:
		 \begin{theorem}\cite[Theorem]{georgian1996invariance}
		 	Let $\Delta$ and $\Delta'$ be two abstract simplicial complexes defined on the vertex sets $ \{1,\dots,n\}$ and $\{1,\dots,m\}$ respectively. Suppose $\mathbb{F}[\Delta]$ and $\mathbb{F}[\Delta']$ are isomorphic as $\mathbb{F}-$algebra. Then $m = n$ and there exists a permutation $\phi \in S_n$ such that $\phi$ induces an isomorphism between $\Delta$ and $\Delta'$.
		 \end{theorem}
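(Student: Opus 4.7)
My plan is to recover the combinatorial data of $\Delta$ from the $\mathbb{F}$-algebra $\mathbb{F}[\Delta]$, so that any $\mathbb{F}$-algebra isomorphism $\Phi\colon\mathbb{F}[\Delta]\to\mathbb{F}[\Delta']$ must come from a relabeling of vertices. I proceed in four stages.

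First I reduce to the case where every $i\in[n]$ is a vertex of $\Delta$ (so $\{i\}\in\Delta$), and similarly for $\Delta'$: if $\{i\}\notin\Delta$ then $x_i\in I_\Delta$ vanishes in the quotient, and $i$ can be dropped from the variable set without altering $\mathbb{F}[\Delta]$. Under this standing assumption every minimal generator of $I_\Delta$ is a squarefree monomial of degree $\ge 2$, so $I_\Delta\subseteq\mathfrak{m}^2$ for the graded maximal ideal $\mathfrak{m}=\langle x_1,\dots,x_n\rangle$ of $\mathbb{S}_n$. If $\bar{\mathfrak{m}}$ denotes the image of $\mathfrak{m}$ in $\mathbb{F}[\Delta]$, then $\dim_{\mathbb{F}}(\bar{\mathfrak{m}}/\bar{\mathfrak{m}}^2)=n$. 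To use this against $\Phi$, I need $\bar{\mathfrak{m}}$ to be intrinsic to the algebra structure; this can be achieved by characterizing $\bar{\mathfrak{m}}$ as the unique maximal ideal of $\mathbb{F}[\Delta]$ that contains every minimal prime (under a mild non-degeneracy of $\Delta$, e.g.\ when no vertex lies in every facet), or by appealing to the canonical standard grading of a Stanley--Reisner ring. Either way $\Phi$ sends $\bar{\mathfrak{m}}$ to its counterpart, yielding $n=m$.

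Next, the classical primary decomposition
\[
I_\Delta \;=\; \bigcap_{F\in\mathcal{F}(\Delta)} \langle x_i : i\notin F\rangle_{\mathbb{S}_n}
\]
identifies the minimal primes of $\mathbb{F}[\Delta]$ with the facets $\mathcal{F}(\Delta)$, so $\Phi$ induces a bijection $\Psi\colon\mathcal{F}(\Delta)\to\mathcal{F}(\Delta')$. Moreover, $\mathbb{F}[\Delta]/P_F\cong\mathbb{F}[x_i:i\in F]$ is a polynomial ring of Krull dimension $|F|$, and for any collection $F_1,\dots,F_k$ of facets the quotient $\mathbb{F}[\Delta]/(P_{F_1}+\cdots+P_{F_k})$ is a polynomial ring of dimension $|F_1\cap\cdots\cap F_k|$, so $\Psi$ preserves every intersection size. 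To produce the desired vertex permutation $\phi\in S_n$, I use that each facet $F$ gives, via $\Phi$, an $\mathbb{F}$-algebra isomorphism $\mathbb{F}[x_i:i\in F]\cong\mathbb{F}[y_j:j\in\Psi(F)]$, and I argue (rigidity) that the family of such local isomorphisms is collectively induced by a single permutation of coordinate lines in $\bar{\mathfrak{m}}/\bar{\mathfrak{m}}^2$, up to scalar rescaling. Once $\phi$ is built, verification is immediate: $\tau\in\Delta$ iff $x_\tau\neq 0$ in $\mathbb{F}[\Delta]$, and since $\Phi(x_\tau)$ equals $y_{\phi(\tau)}$ up to a nonzero scalar and higher-order corrections that do not affect vanishing, $\tau\in\Delta$ iff $\phi(\tau)\in\Delta'$.

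The main obstacle is the rigidity step. A priori a polynomial ring admits the full $\mathbb{F}$-algebra automorphism group, containing for instance maps $x_i\mapsto x_i+p_i(x)$ with $p_i$ of degree $\ge 2$, which is much larger than the symmetric group on its variables; so no single local isomorphism is rigid by itself. What forces the permutation structure is simultaneous compatibility across all intersections of facets, together with the fact that each variable $x_i$ is characterized inside $\mathbb{F}[\Delta]$ by its distribution among the minimal primes (namely, $x_i\in P_F$ iff $i\notin F$). Stitching these local compatibilities into a global permutation, and handling the possibility that $\Phi$ does not literally preserve the standard grading, is the delicate combinatorial--algebraic heart of the argument, as carried out in \cite{georgian1996invariance}.
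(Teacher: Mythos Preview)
The paper does not contain a proof of this theorem at all: it is stated with the citation \cite{georgian1996invariance} and used as a black box, with no argument given. There is therefore nothing in the paper to compare your proposal against.

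As for the proposal itself, it is a reasonable outline of the strategy in the cited source, but it is not a proof. Two points deserve flagging. First, your two suggested ways of making $\bar{\mathfrak{m}}$ intrinsic are both incomplete: the condition ``no vertex lies in every facet'' can fail (e.g.\ a single simplex), and appealing to ``the canonical standard grading'' is circular, since the hypothesis is only an $\mathbb{F}$-algebra isomorphism and the whole difficulty is that $\Phi$ need not be graded. Second, and more seriously, you correctly identify the rigidity step as the heart of the matter and then explicitly defer it to \cite{georgian1996invariance}; everything before that point is standard bookkeeping about minimal primes of Stanley--Reisner rings, so what remains is precisely the nontrivial content of the theorem. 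If you intend this as an actual proof rather than a roadmap, that step needs to be supplied.
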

		 %And for a simplicial complex $\Delta$, we may define the \emph{edge ideal} $I_{edge}(\Delta):=\{x_A|A \text{ is a one dimensional face of }\Delta\}$.
		%%%%%%%%%%%%%%%%%%%%%%%%%%%%%%%%%%%%%%%%%%%%%%%%%%%%%%%%%%%%%%%%
		\section{Persistent ideals linked to persistent simplicial complex}
		In this section, we will first discuss some properties of edge ideals and Stanley–Reisner ideals. Then we will give the definition of \emph{the persitent ideals linked to persistent simplicial complex} and use these as the new barcodes.  Based on the properties we have discussed, we will prove that these new barcodes will cover the information of the traditional ones. Most inportantly, the method we propose here is not just for edge ideals and Stanley–Reisner ideals, but can also be applied to other ideals linked to simplicial complex.

		\subsection{Properties of the edge ideals and Stanley–Reisner ideals}
		Note that both edge ideal and Stanley–Reisner ideal are monomial ideals, so we can consider the \emph{geometric or topological} meanings of concepts and properties mentioned in the Lemma~\ref{minimal basis of monomial ideals},\ref{radical monomial ideal},\ref{primary decomposition of a radical monomial ideal} and Corollary~\ref{prime decomposition of a radical monomial ideal} in the appendix.
		\begin{proposition}\label{minimal basis of edge ideal and the Stanley–Reisner ideals}
		\begin{enumerate}
		\item For the edge ideal $I_{edge}(\Delta)$ of a simplicial complex $\Delta$, the set of monomials $\{x_\sigma| \sigma\in\Delta,\#\sigma=1\}$ in the definition is a minimal basis of $I_{edge}(\Delta)$.
		\item For the Stanley–Reisner ideal $I_\Delta$ of a simplicial complex $\Delta$,  the set $\{x_\sigma|\sigma\neq \emptyset , \sigma\not\in \Delta, F\sigma\subset \Delta\}$, where $F\sigma$ is the set of all proper nonempty subset of $\sigma$ ,is a minimal basis.
		\end{enumerate}	
		\end{proposition}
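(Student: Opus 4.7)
The plan is to leverage the standard fact, presumably supplied by the appendix Lemma~\ref{minimal basis of monomial ideals}, that a monomial ideal $I \subset \mathbb{S}_n$ admits a unique minimal monomial generating set, and that a monomial $m$ lies in a monomial ideal $\langle m_1, \dots, m_k\rangle$ if and only if some $m_i$ divides $m$. Both parts of the proposition then reduce to verifying (a) generation, and (b) no generator is divisible by another.

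For part (1), generation is immediate from the definition of $I_{\rm edge}(\Delta)$ (with the understanding that the condition in the statement should read $\dim(\sigma) = 1$, i.e. $\#\sigma = 2$, matching the defining condition of the edge ideal). For minimality, I would observe that every generator $x_\sigma$ with $\sigma$ a $1$-simplex is a squarefree monomial of total degree $2$; if $x_{\sigma'}$ divided $x_\sigma$ for two distinct edges, then $\sigma' \subset \sigma$ with $\#\sigma' \leq \#\sigma = 2$, and since both have the same cardinality $2$, we would force $\sigma' = \sigma$, a contradiction. Hence the proposed set is a minimal monomial basis.

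For part (2), generation requires a small induction. Given any $\tau \neq \emptyset$ with $\tau \notin \Delta$, I would induct on $\#\tau$: either $\tau$ is itself a minimal non-face (i.e.\ every proper nonempty subset lies in $\Delta$), in which case $x_\tau$ is one of the proposed generators, or there is a proper nonempty subset $\tau' \subsetneq \tau$ with $\tau' \notin \Delta$. By the induction hypothesis $x_{\tau'}$ lies in the ideal generated by the proposed set, and then $x_\tau = x_{\tau \setminus \tau'} \cdot x_{\tau'}$ does as well. For minimality, suppose $\sigma$ is a minimal non-face and $x_{\sigma'}$ (with $\sigma' \neq \sigma$ another minimal non-face) divides $x_\sigma$; since both monomials are squarefree, this gives $\sigma' \subsetneq \sigma$, so $\sigma' \in F\sigma \subset \Delta$, contradicting $\sigma' \notin \Delta$. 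Thus no element of the proposed set can be removed.

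I do not anticipate a substantive obstacle: once the general divisibility criterion for monomial ideals is in hand, both items reduce to bookkeeping on squarefree monomials. The only subtle point is the inductive generation step in (2), where one must be careful to take a \emph{proper} non-face subset to keep the induction strictly decreasing in cardinality; this is guaranteed because any non-face contains at least one minimal non-face among its subsets.
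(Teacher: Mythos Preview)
Your proposal is correct and follows essentially the same approach as the paper's proof: both appeal to the divisibility criterion for monomial ideals (Lemma~\ref{Criterion for members of monomial ideals}) and the minimal-basis lemma (Lemma~\ref{minimal basis of monomial ideals}), using an inductive descent to a minimal non-face for generation in part~(2) and a squarefree-divisibility argument for minimality. Your observation that the condition in part~(1) should read $\#\sigma = 2$ (equivalently $\dim(\sigma)=1$) rather than $\#\sigma = 1$ is correct and worth flagging as a typo in the statement.
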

		\begin{proof}
			The first assertion is directly from the definition of edge ideal and Lemma~\ref{minimal basis of monomial ideals}.
			
			For the second assertion, denote $W:=\{x_\sigma|\sigma\neq \emptyset , \sigma\not\in \Delta, F\sigma\subset \Delta\}$. For any $\sigma\neq \emptyset$ and $\sigma\not\in \Delta$, if $F\sigma\not\subset \Delta$, we can choose $\sigma'\subset \sigma$ such that $F\sigma'\subset\Delta$ by induction. Note that $x_{\sigma'}|x_\sigma$. Combining this with the definition of Stanley–Reisner ideal and Lemma~\ref{Criterion for members of monomial ideals}, we can conclude that $W$ is a set of generators of $I_\Delta$. The minimality of $W$ is easy to see from the definition of $W$ and $I_\Delta$. 
		\end{proof}
		\begin{proposition}\label{SR ideals and edge ideals are radical}
			The Stanley–Reisner ideal $I_\Delta$ and the edge ideal $I_{edge}(\Delta)$ for a simplicial complex $\Delta$ over $\{1,\dots,n\}$ are always radical.
		\end{proposition}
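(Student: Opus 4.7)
The plan is to reduce the statement to the well-known fact that a monomial ideal is radical if and only if it admits a generating set consisting of squarefree monomials; this is presumably the content of Lemma~\ref{radical monomial ideal} cited in the appendix. Thus the work is to exhibit a squarefree generating set in each case.

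For the edge ideal $I_{\mathrm{edge}}(\Delta)$, I would simply invoke the defining generating set $\{x_\sigma \mid \sigma \in \Delta,\ \dim(\sigma) = 1\}$. Each such $x_\sigma = x_i x_j$ with $i \neq j$ is visibly squarefree, so $I_{\mathrm{edge}}(\Delta)$ is radical.

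For the Stanley--Reisner ideal $I_\Delta$, I would use the minimal basis identified in Proposition~\ref{minimal basis of edge ideal and the Stanley–Reisner ideals}, namely $\{x_\sigma \mid \sigma \neq \emptyset,\ \sigma \notin \Delta,\ F\sigma \subset \Delta\}$ (or even the full defining generating set $\{x_\sigma \mid \sigma \notin \Delta\}$). Since $\sigma$ is a \emph{set}, every monomial $x_\sigma = \prod_{i \in \sigma} x_i$ has each variable appearing at most to the first power, so again all generators are squarefree and the ideal is radical.

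The only potential obstacle is ensuring the cited auxiliary lemma is invoked correctly; if for some reason one wanted to give a direct proof, one would argue as follows. Suppose $f \in \sqrt{I}$ for such a squarefree monomial ideal $I$; then some power $f^N \in I$, and by working with the monomial support of $f$ together with the standard fact that a monomial lies in a monomial ideal iff it is divisible by one of the generators (Lemma~\ref{Criterion for members of monomial ideals}), one shows that each monomial in $f$ must already be divisible by some squarefree generator, hence $f \in I$. This argument applies uniformly to both $I_{\mathrm{edge}}(\Delta)$ and $I_\Delta$. I expect no real difficulty here; the proof is essentially a one-line appeal to the squarefree-generators criterion.
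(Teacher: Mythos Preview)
Your proposal is correct and matches the paper's proof essentially verbatim: the paper simply observes that each generator $x_\sigma$ is squarefree (since $\sigma$ is a subset of $\{1,\dots,n\}$) and then invokes Lemma~\ref{radical monomial ideal}. Your additional remarks (citing the minimal basis and sketching a direct argument) are fine but unnecessary for what the paper actually does.
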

		\begin{proof}
			Just notice that $x_\sigma$ for $\sigma\subset \{1,2,\dots,n\}$ is square-free and the assertion holds according to Lemma~\ref{radical monomial ideal}.
		\end{proof}
		\begin{corollary}\label{decomposation of ideals of complex}
			The Stanley–Reisner ideal $I_\Delta$ and the edge ideal $I_{\rm edge}(\Delta)$ for a simplicial complex $\Delta$ over $\{1,\dots,n\}$ can be prime decomposed as the intersection of linear prime ideals.
		\end{corollary}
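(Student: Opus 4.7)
The plan is to cascade the two preceding results directly. By Proposition~\ref{SR ideals and edge ideals are radical}, both $I_\Delta$ and $I_{\rm edge}(\Delta)$ are radical monomial ideals of $\mathbb{S}_n$, because each generator $x_\sigma$ is a square-free monomial. One then invokes the appendix's Corollary~\ref{prime decomposition of a radical monomial ideal} on the prime decomposition of radical monomial ideals to conclude. Concretely, this yields decompositions
\begin{equation*}
I_\Delta = \bigcap_{i} P_i, \qquad I_{\rm edge}(\Delta) = \bigcap_j Q_j,
\end{equation*}
where every $P_i$ and $Q_j$ is a prime ideal of $\mathbb{S}_n$ of the form $\langle x_{k_1},\dots,x_{k_m}\rangle$ for some subset $\{k_1,\dots,k_m\}\subseteq [n]$. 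Since each generator $x_k$ is a linear polynomial, these are exactly the "linear prime ideals" appearing in the statement.

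The reason this works is purely structural: the classical theory of square-free monomial ideals guarantees that every minimal prime over such an ideal is generated by a subset of the variables, and this is precisely what the appendix corollary encapsulates. So the bulk of the argument is delegated to the appendix, and the proof reduces to matching the hypothesis "radical monomial" to our setting, which is supplied by Proposition~\ref{SR ideals and edge ideals are radical}.

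There is essentially no obstacle here; the corollary is a one-step consequence of two results already in place. The only expository care needed is to clarify that, in the terminology of the statement, a \emph{linear prime ideal} means a prime ideal generated by a collection of the variables $x_1,\dots,x_n$ (equivalently, by linear forms of a very restricted shape), so that the output of Corollary~\ref{prime decomposition of a radical monomial ideal} literally matches the form asserted. Once this dictionary is established, the conclusion is immediate.
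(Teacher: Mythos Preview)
Your proposal is correct and mirrors the paper's own proof exactly: the paper simply cites Proposition~\ref{SR ideals and edge ideals are radical} together with Corollary~\ref{prime decomposition of a radical monomial ideal}, which is precisely the cascade you describe. Your additional remarks about what ``linear prime ideal'' means are helpful exposition but not logically required beyond what the paper already states.
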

		\begin{proof}
			Directly from the Proposition \ref{SR ideals and edge ideals are radical} and the corollary \ref{prime decomposition of a radical monomial ideal}.
		\end{proof}
%		\begin{remark}
%			In terms of geometry, the linear prime ideal represents a union of several \textit{coordinate subspaces}, i.e. linear subspaces defined by $x_{i_1}=x_{i_2}=\cdots=x_{i_k}=0$ for some subset $\{i_1,\dots,i_k\}\subset \{1,\dots,n\}$, and thus it's easy for us to illustrate the irreducible components of an affine variety defined by a radical monomial ideals like $I_\Delta$ and $I_{edge}(\Delta)$.  
%		\end{remark}
		Let's see what the prime decomposition and the associated primes really mean in the case of $I_\Delta$ and $I_{edge}(\Delta)$(or $I_{edge}(G)$ for an undirected simple graph $G$). Recall that a \emph{vertex cover} $W$ of a graph $G=(V,E)$ is a subset of $V$ such that for each $e=\{i,j\}\in E$, either $i\in W$ or $j\in W$. And a vertex cover $W$ is said to be \emph{minimal} if any proper subset of $W$ is not a vertex cover of $G$.
		\begin{lemma}\label{relations between the ideals and simplex/graph}
			\begin{enumerate}
				\item For two graphs $G_1,G_2$ with vertex set $\{1,\dots,n\}$, $G_1$is a subgraph of $ G_2$ $\Longleftrightarrow I_{edge}(G_1)\subset I_{edge}(G_2)$. 
				\item For two simplicial complexes $\Delta,\Delta'$ over $\{1,\dots,n\}$, $\Delta'\subset \Delta\Longleftrightarrow I_\Delta \subset I_{\Delta'}$
				\item Let $W\subset \{1,\dots,n\}$ and $W^c=\{i\in\{1,\dots,n\}|i\not\in\mathcal{W}\}$. Then $\Delta$ is the simplex over $W$ if and only if its Stanley–Reisner ideal $I_{\Delta}$ in $\mathbb{S}_n$ is the linear prime ideal generated by $\{x_i|i\in W^c\}$.
			\end{enumerate}
		\end{lemma}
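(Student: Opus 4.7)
\smallskip
\noindent\textbf{Proof plan.} The three parts are essentially translations between combinatorial statements about faces/edges and divisibility statements about squarefree monomials, so my plan is to reduce each equivalence to the divisibility criterion for membership in a monomial ideal (Lemma on ``Criterion for members of monomial ideals'' in the appendix) together with the minimal-basis descriptions in Proposition~\ref{minimal basis of edge ideal and the Stanley–Reisner ideals}.

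For (1), the forward direction is immediate: if $E(G_1)\subset E(G_2)$ then each generator $x_ix_j$ of $I_{\mathrm{edge}}(G_1)$ lies among the generators of $I_{\mathrm{edge}}(G_2)$. For the converse, I would take an edge $\{i,j\}\in E(G_1)$, note that $x_ix_j\in I_{\mathrm{edge}}(G_1)\subset I_{\mathrm{edge}}(G_2)$, and apply the divisibility criterion: some generator $x_kx_l$ of $I_{\mathrm{edge}}(G_2)$ must divide $x_ix_j$. Since $x_ix_j$ is squarefree of degree $2$, forcing $\{k,l\}=\{i,j\}$ gives $\{i,j\}\in E(G_2)$.

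For (2), forward: if $\Delta'\subset\Delta$, then $\sigma\notin\Delta$ implies $\sigma\notin\Delta'$, so every generator $x_\sigma$ of $I_\Delta$ is a generator of $I_{\Delta'}$. Converse: assume $I_\Delta\subset I_{\Delta'}$ and pick $\sigma\in\Delta'$; suppose for contradiction $\sigma\notin\Delta$, so $x_\sigma\in I_\Delta\subset I_{\Delta'}$. By the divisibility criterion, some $x_\tau$ with $\tau\notin\Delta'$ divides $x_\sigma$, i.e.\ $\tau\subset\sigma$; but $\Delta'$ is closed under nonempty subsets, so $\tau\in\Delta'$, a contradiction. For (3), if $\Delta$ is the simplex over $W$, then $\sigma\neq\emptyset$ fails to lie in $\Delta$ exactly when $\sigma\not\subset W$, equivalently $\sigma\cap W^c\neq\emptyset$; so every generator $x_\sigma$ of $I_\Delta$ is divisible by some $x_i$ with $i\in W^c$, giving $I_\Delta=\langle x_i\mid i\in W^c\rangle$, which is a linear prime ideal because $\mathbb{S}_n/\langle x_i\mid i\in W^c\rangle\cong\mathbb{F}[x_i\mid i\in W]$ is an integral domain. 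Conversely, if $I_\Delta=\langle x_i\mid i\in W^c\rangle$, then for $\sigma\neq\emptyset$ one has $\sigma\in\Delta\iff x_\sigma\notin I_\Delta\iff\sigma\cap W^c=\emptyset\iff\sigma\subset W$, so $\Delta$ is exactly the simplex over $W$.

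I do not anticipate a serious obstacle: all three equivalences boil down to the fact that squarefree monomial containment is equivalent to a divisibility (hence subset) relation on supports. The only spot requiring a small amount of care is the converse direction of (2), where one has to invoke the downward-closure of simplicial complexes to get the contradiction; this is the step most prone to being glossed over, so I would state it explicitly.
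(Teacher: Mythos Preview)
Your proposal is correct and follows the same approach as the paper, which dispatches the lemma with the single line ``Simply from the definitions.'' You have merely spelled out in full the divisibility and downward-closure arguments that the paper leaves implicit, so there is no substantive difference in method.
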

		\begin{proof}
			Simply from the definitions.
		\end{proof}
		\begin{proposition}\label{topological meaning of the associated prime}
			\begin{enumerate}
				\item\cite[Theorem 34\& Corollary 35, Part 2]{bigatti2013} For an undirected simple graph $G=(V,E)$ with $V=\{1,\dots,n\}$ and $W\subset V$, 
				$I_{edge}(G)$ is a subset of $P_{W}:=\langle x_i|i\in W\rangle_{\mathbb{S}_n}$ 
				if and only if $W$ is a vertex cover of $G$.
				In particular, $P_{W}$ belongs to $\Ass(I_{edge}(G))$ if and only if $W$ is a minimal vertex cover of $G$.
				\item \cite[Proposition 4.11]{francisco2014survey} For a simplicial complex $\Delta$ on $\{1,\dots,n\}$ and $W\subset\{1,\dots,n\}$, 
				$I_{\Delta}$ is a subset of $P_{W}:=\langle x_i|i\in W\rangle_{\mathbb{S}_n}$ 
				if and only if the full subcomplex on $W^c:=\{j\in \{1,\dots,n\}|j\not\in W\}$ of $\Delta$ is a simplex.
				In particular, $P_{W}$ belongs to $\Ass(I(\Delta))$ if and only if the full subcomplex of $\Delta$ on $W^c$  is a maximal simplex in the sense that for $U\supset W^c$, the full subcomplex on $U$ of $\Delta$ is not a simplex. 
			\end{enumerate}
		\end{proposition}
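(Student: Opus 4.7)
The plan is to handle both parts uniformly by exploiting the primeness of $P_W = \langle x_i \mid i \in W\rangle_{\mathbb{S}_n}$ together with the monomial structure of the two ideals in question. First I would establish the underlying elementary lemma: a square-free monomial $x_\sigma$ lies in $P_W$ if and only if $\sigma \cap W \neq \emptyset$. One direction is clear since $x_i \in P_W$ for $i \in W$; the converse uses that $P_W$ is prime, so at least one variable appearing in $x_\sigma$ must itself lie in $P_W$, forcing some index of $\sigma$ to be in $W$.

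For the containment portion of each part, I would invoke Proposition~\ref{minimal basis of edge ideal and the Stanley–Reisner ideals} to reduce the check ``$I \subset P_W$'' to the minimal generators. In case (1), generators $x_ix_j$ with $\{i,j\} \in E$ lie in $P_W$ precisely when every edge meets $W$, which is exactly the definition of a vertex cover. In case (2), generators $x_\sigma$ with $\emptyset \neq \sigma \notin \Delta$ lie in $P_W$ precisely when every such $\sigma$ satisfies $\sigma \cap W \neq \emptyset$; by contraposition this is equivalent to saying that every nonempty $\sigma \subset W^c$ belongs to $\Delta$, i.e.\ the full subcomplex of $\Delta$ on $W^c$ coincides with the entire simplex on $W^c$.

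For the associated-prime portion I would combine Proposition~\ref{SR ideals and edge ideals are radical} with Corollary~\ref{decomposation of ideals of complex}: both ideals are radical and decompose as intersections of linear prime ideals of the form $P_W$, hence have no embedded components, so $\Ass(I)$ equals the set of minimal primes over $I$. Using the obvious equivalence $P_{W_1} \subset P_{W_2} \Longleftrightarrow W_1 \subset W_2$, minimality of $P_W$ among linear primes containing $I$ translates directly to minimality of $W$ for the property established above. For (1) this is exactly a minimal vertex cover; for (2) it is minimality of $W$ such that ``$W^c$ supports a full simplex subcomplex'', which is the same as maximality of $W^c$ with that property, yielding the stated characterization.

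The main obstacle I expect is not in the combinatorial bookkeeping, which is mechanical, but in cleanly justifying the identification $\Ass(I) = \{\text{minimal primes over } I\}$ for a radical monomial ideal. This is classical (for radical ideals in Noetherian rings there are no embedded associated primes, so every associated prime is minimal over $I$), but since it underpins the entire translation between commutative algebra and the combinatorics of covers and full subcomplexes, it deserves an explicit citation or a brief recollection before the argument above is carried out.
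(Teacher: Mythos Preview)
Your proposal is correct. The paper itself does not supply a proof of this proposition: it simply cites the external references \cite{bigatti2013} and \cite{francisco2014survey} and moves on. So there is no ``paper's own proof'' to compare against, and your argument actually \emph{adds} content rather than duplicating anything.

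A couple of brief remarks on the write-up. First, the obstacle you flag at the end---that $\Ass(I)$ coincides with the set of minimal primes over $I$ when $I$ is radical---is already recorded in the appendix as Proposition~\ref{Proposition: Associated primes are precisely the minimal elements}, so you can cite that directly rather than calling it ``classical''. Second, when you translate ``$P_W$ is a minimal prime over $I$'' into ``$W$ is minimal with the given combinatorial property'', you implicitly use that every minimal prime over $I$ is of the form $P_{W'}$ for some $W'$; this is exactly Corollary~\ref{prime decomposition of a radical monomial ideal} (equivalently Corollary~\ref{decomposation of ideals of complex}), which guarantees that any prime strictly between $I$ and $P_W$ would still contain some linear $P_{W'}\subsetneq P_W$, contradicting minimality of $W$. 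Making that one step explicit would remove the only place where a reader might pause.
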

		\subsection{Persistent ideals and their properties}
		Now we consider the "persistent version" of decomposition of ideals:
		\begin{theorem}[Persistent version of the decomposition of ideals]\label{Proposition: lemma for the birth-death information}
			
			Let $I_1 \supseteq I_2 \supseteq I_3 \supseteq \cdots$ be a descending chain of ideals of a ring $R$. Suppose each $I_i$ is a prime decomposable ideal.  If $P \in \Ass(I_i)$ and $P \notin \Ass(I_{i+1})$, then $P \notin \Ass(I_j)$ whenever $j \geq i+1$. 
		\end{theorem}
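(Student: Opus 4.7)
The plan is to exploit the fact that, for a prime decomposable ideal $I$, the set $\mathrm{Ass}(I)$ consists precisely of the minimal primes over $I$; that is, writing $I=\bigcap_{Q\in\mathrm{Ass}(I)}Q$, no element of $\mathrm{Ass}(I)$ strictly contains another. This is the standard fact about radical/prime decompositions (and matches the usage in the preceding Corollary~\ref{decomposation of ideals of complex}). Given this, the statement reduces to a monotonicity check based on the descending chain $I_1\supseteq I_2\supseteq \cdots$.

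First I would establish the base observation: if $P\in\mathrm{Ass}(I_i)$ then $P\supseteq I_i\supseteq I_k$ for every $k\ge i$, so $P$ is at least a prime containing $I_k$ for all later indices; the only question is whether $P$ is \emph{minimal} among such. The hypothesis $P\notin\mathrm{Ass}(I_{i+1})$, together with $P\supseteq I_{i+1}$, says exactly that $P$ is not minimal over $I_{i+1}$. Hence there exists some $Q\in\mathrm{Ass}(I_{i+1})$ with $Q\subsetneq P$.

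Next, for any $j\ge i+1$, I would use the descending chain to get $I_j\subseteq I_{i+1}\subseteq Q$. Since $I_j=\bigcap_{Q'\in\mathrm{Ass}(I_j)}Q'$ and $Q$ is prime containing this intersection, primality forces $Q\supseteq Q'$ for some $Q'\in\mathrm{Ass}(I_j)$. Chaining the inclusions gives $Q'\subseteq Q\subsetneq P$, so $P$ strictly contains a member of $\mathrm{Ass}(I_j)$, hence is not minimal over $I_j$, hence $P\notin\mathrm{Ass}(I_j)$. This is exactly the desired conclusion.

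The argument is essentially bookkeeping once the right framing is chosen, so there is no serious obstacle; the only subtle point is the first step, where one must invoke that ``prime decomposable'' is being used in the strong sense that $\mathrm{Ass}(I_k)$ coincides with the set of minimal primes of the decomposition (so that ``$P\notin\mathrm{Ass}(I_{i+1})$ while $P\supseteq I_{i+1}$'' really does yield a strictly smaller associated prime). Given the radical, square-free setting from Proposition~\ref{SR ideals and edge ideals are radical} and Corollary~\ref{decomposation of ideals of complex}, this interpretation is automatic, and the rest of the proof is the short prime-containment chase described above.
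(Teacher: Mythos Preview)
Your proposal is correct and follows essentially the same route as the paper's proof. Both arguments hinge on the characterization that, for a prime decomposable (hence radical) ideal, $\Ass(I)$ is exactly the set of minimal primes containing $I$ (the paper's Proposition~\ref{Proposition: Associated primes are precisely the minimal elements}); from $P\supseteq I_{i+1}$ and $P\notin\Ass(I_{i+1})$ one extracts a strictly smaller associated prime, and then the chain $I_j\subseteq I_{i+1}$ shows $P$ cannot be minimal over any later $I_j$. Your extra step of passing from $Q\supseteq I_j$ to some $Q'\in\Ass(I_j)$ with $Q'\subseteq Q$ via primality of $Q$ is just an explicit unwinding of the second application of that same minimality proposition.
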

		\begin{proof}
			Because $P \in \Ass(I_i)$, $I_{j} \subset I_{i+1} \subset I_i \subset P$ for $j\geq i+1$. On the other hand, since $P\not\in\Ass(I_{i+1})$, by proposition~\ref{Proposition: Associated primes are precisely the minimal elements}, there exists $P'\in \Ass(I_{i+1})$ such that $I_j\subset I_{i+1}\subset P'\subset P$ for $j\geq i+1$ and $P'\neq P$. So again by proposition~\ref{Proposition: Associated primes are precisely the minimal elements}, $P\not\in\Ass(P_{j})$ for all $j\geq i+1$.
		\end{proof}
		By the Lemma~\ref{relations between the ideals and simplex/graph}, we can get that for an increasing filtration of persistent simplicial complex $\Delta_{\epsilon}$  when $\epsilon$ increases,  the Stanley–Reisner ideals forms a descending chain of ideals and the edge ideals forms an ascending chain of ideals, both with finite length.  
		Note that Proposition~\ref{Proposition: lemma for the birth-death information} is also valid for an ascending chain with \emph{finite length}, just by relabeling the ideals. Thus we can have the following definition:
		\begin{definition}\label{definition of persistent ideals}
		For a persistent (reduced) simplicial complex $\Delta_\epsilon$, we define \emph{the persistent ideals} for $\Delta_{\epsilon}$ as \emph{the set of associated primes} $\Ass(I_{\Delta_\epsilon})$ or $\Ass(I_{edge}(\Delta_\epsilon))$ and use them as new persistence barcodes.
		\end{definition}
		\begin{remark}
			\begin{enumerate}
				\item From 	Proposition~\ref{topological meaning of the associated prime}, the persistent ideals have good topological illustration: the birth and death of the associated primes of the Stanley–Reisner ideal reflect the birth and death (more precisely, the loss of maximality) of the maximal simplex in the simplicial complex; the birth and death of the associated primes of the edge ideal inform us the birth and death of the minimal vertex cover of the graph.
				\item In commutative algebra, Proposition~\ref{Proposition: lemma for the birth-death information} states that the \emph{height} of the ideals in a descending chain is decreasing. So this numerical quantity can be used as the Betti numbers in the persistent process.
				\item Note that this persistent method is not just valid for the two ideals defined above, but it's also valid for other ideals like \emph{the cover ideals} defined in \cite{bigatti2013} or those in \cite{Hibi2011monomial, Hibi2018binomial}. We may get different topological data from different ideals.
			\end{enumerate}
		\end{remark}
		%In parallel, we can define \textbf{the persistent varieties} as  \textbf{the set of irreducible components} of the edge varieties or the Stanley–Reisner varieties and get similar persistent version of decomposition of affine varieties by Proposition~\ref{relation between ideals and varieties}, Proposition~\ref{irreducible decomposation of an affine variety}   and Theorem~\ref{Proposition: lemma for the birth-death information}.
%		\begin{corollary}\label{persistent verion of varieties}
%		Let $\mathbf{V}_1\subset\mathbf{V}_2\subset\mathbf{V}_3\subset\cdots$ be an ascending chain of affine varieties in $\mathbb{A}^n$. If an irreducible variety $\mathbf{V}$ is an irreducible component of $\mathbf{V}_i$ but is not an irreducible component of $\mathbf{V}_{i+1}$. Then $\mathbf{V}$ is not an irreducible component of $\mathbf{V}_j$ for all $j>i$.
% 		\end{corollary}
		
			The persistent ideals ensure us to construct new barcodes instead of the usual one like the Betti numbers. Whether the new one can cover the information of the old ones is a natural question. The following example gives us some inspiration to confirm the question.
			\begin{example}
				Figure~\ref{fig:enter-label} illustrates an example of the new persistence barcodes and the old ones. The point cloud $\mathcal{D}\subset \mathbb{R}^3$ here is defined  by the distant matrix shown in the left part of the figure. In the figure, we can easily see that all the lifespan information of the tranditional barcodes of dimension 0 can be read in the new barcodes defined by the associate primes of the Stanley--Reisner ideals.
			\end{example}
			\begin{figure}
				\centering
				\includegraphics[width=\linewidth]{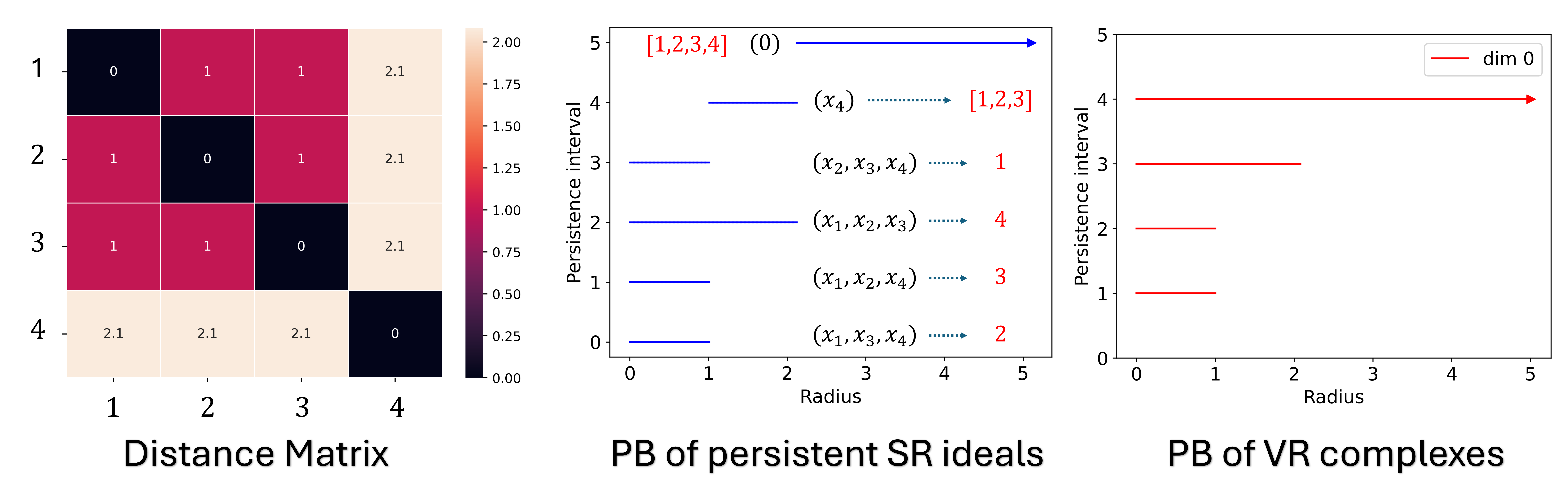}
				\caption{Illustration of an example distance matrix for a point cloud consisting of four points indexed by 1, 2, 3, and 4, along with the corresponding persistence barcodes (PBs) induced by the persistent Stanley--Reisner (SR) ideals and conventional persistent homology. Both PBs are derived from the same Vietoris--Rips (VR) filtration constructed over the point cloud. Each persistence interval in the PB of the persistent SR ideals corresponds to a monomial ideal generated by a subset of \( \{ x_i \mid i = 1, 2, 3, 4 \} \), which is annotated by black text next to the corresponding interval. Equivalently, as indicated by the red annotations, each monomial ideal corresponds to a maximal simplex in the simplicial complex associated with a fixed radius in the VR filtration, encoding the lifespan information of maximal simplices of various dimensions within the VR filtration.}
				\label{fig:enter-label}
			\end{figure}

			 Now we give a strict proof of the following theorem to affirm the answer for the questions above in the case of the persistent associated primes of the Stanley–Reisner ideal and the Betti numbers in traditional TDA in a more general setting. 
			
			First we make some notations and assumption. Let $\{\Delta_\epsilon\}$ be persistent simplicial complexes over $\{1,\dots,n\}$. Denote $b_{i}(\epsilon)$ be the i-th Betti number of $\Delta_\epsilon$ and view it as a function of $\epsilon$ from $\mathbb{R}$ to $\mathbb{N}$. We assume that $b_{i}(\epsilon+):=\lim_{\rho\to\epsilon+}b_i(\rho)$ and $b_i(\epsilon-):=\lim_{\rho\to\epsilon-}b_{i}(\rho)$ exist for any $\epsilon$. 
			
			For a  prime ideal $P$, we define $P(\epsilon)$ to be the function of $\epsilon$ from $\mathbb{R}$ to $\{0,1\}$ such that $P(\epsilon)=1$ if $P\in \Ass(I_{\Delta_{\epsilon}})$ and $P(\epsilon)=0$ if $P\not\in \Ass(I_{\Delta_{\epsilon}})$. We also assume that $P(\epsilon+):=\lim_{\rho\to\epsilon+}P(\rho)$ and $P(\epsilon-):=\lim_{\rho\to\epsilon-}P(\rho)$ exist for any $\epsilon$.  
			
%			Similarly, for an irreducible variety $\mathbf{V}$,  we define $\mathbf{V}(\epsilon)$ to be the function of $\epsilon$ from $\mathbb{R}$ to $\{0,1\}$ such that $\mathbf{V}(\epsilon)=1$ if $\mathbf{V}$ is an irreducible component of $I_{\Delta_\epsilon}$ and $\mathbf{V}(\epsilon)=0$ if $\mathbf{V}$ isn't an irreducible component of $I_{\Delta_\epsilon}$. We also assume that $\mathbf{V}(\epsilon+):=\lim_{\rho\to\epsilon+}\mathbf{V}(\rho)$ and $\mathbf{V}(\epsilon-):=\lim_{\rho\to\epsilon-}\mathbf{V}(\rho)$ exist for any $\epsilon$.  
			
			\begin{remark}\label{remark:continuity assumption}
				Since $\mathbb{N}$ has the discrete topology, then the assumption that $b_{i}(\epsilon+)$ exists is in fact equivalent to the assumption that for every $\epsilon$, there exists $\delta>0$, such that $b_{i}(\epsilon+\rho)$ is constant for $\rho\in (0,\delta)$. This is consistent with the intuition of the traditional barcode, so in the traditional persistent complexes (which is a filtration), the assumption holds. Similar arguments hold for the assumption that $b_i(\epsilon-),P(\epsilon+),P(\epsilon-)$ exist because $\{0,1\}$ also has discrete topology.
			\end{remark}

			\begin{theorem}[Persistent ideals preserve informations in the traditional ones]\label{Persistent affine variety covers persistent Betti number}
				
				With the notation and assumption above, for any $i_0$ and $\epsilon_0$, if $b_{i_0}(\epsilon_0+)-b_{i_0}(\epsilon_0-)\neq 0$, then there exists a linear prime ideal $P$ of the form $P=\langle x_{i_1},x_{i_2},\dots,x_{i_k}\rangle_{\mathbb{S}_n}$ with $1\leq i_1<i_2<\dots<i_k\leq n$, such that $P(\epsilon_0+)-P(\epsilon_0-)\neq 0$. 
				
%				Similarly, for any $i_0$ and $\epsilon_0$, if $b_{i_0}(\epsilon_0+)-b_{i_0}(\epsilon_0-)\neq 0$, then there exists a irreducible variety $\mathbf{V}$ which is a coordinate subspace, i.e. a linear subspace defined by $x_{i_1}=x_{i_2}=\cdots=x_{i_k}=0$ for some subset $\{i_1,\dots,i_k\}\subset \{1,\dots,n\}$, such that $\mathbf{V}(\epsilon_0+)-\mathbf{V}(\epsilon_0-)\neq 0$. 
			\end{theorem}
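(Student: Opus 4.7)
The plan is to argue that a jump in $b_{i_0}$ at $\epsilon_0$ forces the underlying simplicial complex to change between the left and right of $\epsilon_0$, and that any such change must already be visible in the set of associated primes of the Stanley--Reisner ideal. The algebraic bridge is Corollary~\ref{decomposation of ideals of complex}: $I_\Delta$ is a radical monomial ideal, so it is the intersection of its associated primes, each of which is linear of the prescribed form $\langle x_{i_1},\ldots,x_{i_k}\rangle_{\mathbb{S}_n}$.

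First, I would use the hypotheses that $b_{i_0}(\epsilon_0\pm)$ and $P(\epsilon_0\pm)$ exist, together with the observation in Remark~\ref{remark:continuity assumption} that discrete-valued one-sided limits are equivalent to local constancy, to choose $\delta>0$ small enough that $\Delta_\epsilon$ itself is constant on each of $(\epsilon_0-\delta,\epsilon_0)$ and $(\epsilon_0,\epsilon_0+\delta)$; this is possible since the simplicial complexes on $\{1,\dots,n\}$ form a finite collection of subsets of $2^{[n]}$. Denote these left and right limit complexes by $\Delta_-$ and $\Delta_+$; the filtration property gives $\Delta_-\subseteq \Delta_+$. If $\Delta_-=\Delta_+$, then their chain complexes coincide and $b_{i_0}(\epsilon_0-)=b_{i_0}(\epsilon_0+)$, contradicting the assumption. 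Hence $\Delta_-\subsetneq \Delta_+$, and Lemma~\ref{relations between the ideals and simplex/graph}(2) then forces $I_{\Delta_+}\subsetneq I_{\Delta_-}$.

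The remaining step is to upgrade this strict inclusion to an inequality of associated-prime sets. Both $I_{\Delta_\pm}$ are radical by Proposition~\ref{SR ideals and edge ideals are radical}, hence equal to the intersection of their associated primes (Corollary~\ref{decomposation of ideals of complex}); since, for a radical ideal in a Noetherian ring, the associated primes coincide with the minimal primes over the ideal and jointly determine it, the inequality $I_{\Delta_+}\neq I_{\Delta_-}$ forces $\Ass(I_{\Delta_+})\neq \Ass(I_{\Delta_-})$. Any $P$ in the symmetric difference of these two finite sets is linear of the required form by Corollary~\ref{decomposation of ideals of complex}, and by local constancy of $P(\cdot)$ on each side of $\epsilon_0$ we obtain $P(\epsilon_0+)\neq P(\epsilon_0-)$, which finishes the proof.

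I expect the main obstacle to be recording cleanly the injectivity of the map $I\mapsto \Ass(I)$ on radical ideals; this is precisely what rules out the pathology in which the complex changes but no prime is born or dies. Once that is in place, matching one-sided limits of $P(\cdot)$ to $\Ass(I_{\Delta_\mp})$ is immediate from the local constancy furnished by Remark~\ref{remark:continuity assumption}, and the structural statement that each associated prime has the stated linear form is a direct appeal to Corollary~\ref{decomposation of ideals of complex}.
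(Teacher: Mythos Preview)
Your argument is correct and is essentially the contrapositive of the paper's proof: the paper assumes no $P$ jumps, takes a common $\delta$ over the finitely many linear primes, reconstructs $I_{\Delta_\epsilon}$ from $\Ass(I_{\Delta_\epsilon})$ on each side, and concludes the Betti number cannot jump; you run the same chain of implications in reverse. The only cosmetic point is that your justification ``since the simplicial complexes on $\{1,\dots,n\}$ form a finite collection'' is really being used to say there are finitely many linear primes $P$, so a common $\delta$ exists on which every $P(\cdot)$ is constant---local constancy of $\Delta_\epsilon$ then follows from the injectivity of $I\mapsto\Ass(I)$ you invoke later, not from finiteness alone (and the filtration inclusion $\Delta_-\subseteq\Delta_+$ is not needed).
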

			\begin{proof}
%				Clearly the second assertion is directly from the first one and we'll focus on our attention on the proof of the first one.
				
				Assume that for $i_0$ and $\epsilon_0$, $b_{i_0}(\epsilon_0+)-b_{i_0}(\epsilon_0-)\neq 0$. We'll prove that $P(\epsilon_0+)-P(\epsilon_0-)\neq 0$ for some linear prime ideal $P$ by contradiction. Denote $\mathcal{LP}$ as the set of linear prime ideal $P$ of the form $P=\langle x_{i_1},\dots,x_{i_k}\rangle_{\mathbb{S}_n}$ with $1\leq i_1<i_2<\dots<i_k\leq n$.
				
				Suppose by contrary that $P(\epsilon_0+)-P(\epsilon_0-)=0$ for $P\in \mathcal{LP}$ . Note that $\#\mathcal{LP}$ is finite. So by the remark ~\ref{remark:continuity assumption}, there exists $\delta>0$, such that for any $P\in\mathcal{LP}$ and all $\rho\in (0,\delta)$,
				\begin{equation}\label{eq:local constant of associated prime}
						P(\epsilon_0+\rho)=P(\epsilon_0+)\quad
						P(\epsilon_0-\rho)=P(\epsilon_0-)
				\end{equation}
				Let $J:=\bigcap_{P\in \mathcal{LP},P(\epsilon_0+)=1}P$ and the simplicial complex $\Delta'$ whose Stanley–Reisner ideal is $J$, by Remark~\ref{remark on ideals and complexes}. Denote $b_{i_0}(\Delta')$ the $i_0$-th Betti number of $\Delta'$.
				Then by definition and \eqref{eq:local constant of associated prime}, we get that for any $\rho\in (0,\delta)$,  $I_{\Delta_{\epsilon_0-\rho}}=J= I_{\Delta_{\epsilon_0+\rho}}$. %So by lemma ~\ref{relation between ideals and varieties}, we
				%So we get that for $\rho\in (0,\delta)$, $\Delta_{\epsilon_0+\rho}$ and $\Delta_{\epsilon_0-\rho}$ don't change. 
				Therefore by definition,  $b_{i_0}(\epsilon_0+)=b_{i_0}(\Delta')=b_{i_0}(\epsilon_0-)$, which leads to a contradiction.  
			\end{proof}
			Combining  Theorem~\ref{Persistent affine variety covers persistent Betti number} and  Lemma~\ref{relations between the ideals and simplex/graph}, we give the result for the Vietoris-Rips complex on the information provided by the new persistence barcodes like the traditional one in terms of the distance matrix, but it is stronger than the traditional one.  %Here we use the Vietoris-Rips complex \emph{only as an example}. Other commonly-used complex like \emph{$\breve{C}$ech} complex 
			\begin{theorem}[Information from the new barcode: case of the distance matrix]\label{Persistent affine variety covers persistent Betti number: dicrete version}
				
				Given $\mathcal{D}:=\{x_1,\dots,x_n\}$ contained in a metric space $(X,d)$, denote the distance matrix of $\mathcal{D}$ by $H:=(h_{ij})$, $h_{ij}=d(x_i,x_j)$. Denote $\mathcal{LP}$ as the set of linear prime ideals $P$ of the form $P=\langle x_{i_1},\dots,x_{i_k}\rangle_{\mathbb{S}_n}$ with $1\leq i_1<i_2<\dots<i_k\leq n$. %and $\mathcal{Z}$ the set of coordinate subspaces, i.e. linear subspaces defined by $x_{i_1}=x_{i_2}=\cdots=x_{i_k}=0$ for some subset $\{i_1,\dots,i_k\}\subset \{1,\dots,n\}$.
				 For the Vietoris-Rips complex, all the birth and death information of $\frac{h_{ij}}{2}(\forall i\neq j)$ can be reflected in the birth and death information of one of the persistence barcodes of elements in $\mathcal{LP}$ as for the Stanley–Reisner ideal. %or Stanley–Reisner variety.
			\end{theorem}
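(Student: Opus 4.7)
My plan is to prove the theorem directly by tracking how each Vietoris--Rips transition propagates through the chain of objects $\Delta_\epsilon \to I_{\Delta_\epsilon} \to \Ass(I_{\Delta_\epsilon})$, rather than routing through the Betti numbers and Theorem~\ref{Persistent affine variety covers persistent Betti number}. The point is that each $\epsilon_0 = h_{ij}/2$ is precisely the threshold at which the edge $\{i,j\}$ enters $\Delta_\epsilon$ (working in the ball-radius convention underlying the Vietoris--Rips construction), and I want to show that this crossing forces the set of associated primes of the Stanley--Reisner ideal to change.

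Concretely, I would fix $i \neq j$ and $\epsilon_0 = h_{ij}/2$, and first observe that $\{i,j\} \in \Delta_{\epsilon_0+} \setminus \Delta_{\epsilon_0-}$, so $\Delta_{\epsilon_0-} \subsetneq \Delta_{\epsilon_0+}$ (the extra faces may include higher-dimensional cliques that appear together with the new edge). By Lemma~\ref{relations between the ideals and simplex/graph}(2) this strict inclusion is equivalent to the strict reverse inclusion $I_{\Delta_{\epsilon_0+}} \subsetneq I_{\Delta_{\epsilon_0-}}$, with $x_i x_j$ as an explicit witness: it lies in $I_{\Delta_{\epsilon_0-}}$ because $\{i,j\} \notin \Delta_{\epsilon_0-}$, but not in $I_{\Delta_{\epsilon_0+}}$ since $\{i,j\} \in \Delta_{\epsilon_0+}$. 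Both ideals are radical by Proposition~\ref{SR ideals and edge ideals are radical}, so by Corollary~\ref{decomposation of ideals of complex} each equals the intersection of its associated primes, all of which lie in $\mathcal{LP}$. Because a radical ideal is determined by the set of minimal primes it intersects to, the fact that the two ideals differ forces $\Ass(I_{\Delta_{\epsilon_0-}}) \neq \Ass(I_{\Delta_{\epsilon_0+}})$, and so some $P \in \mathcal{LP}$ lies in exactly one of the two sets. This $P$ satisfies $P(\epsilon_0-) \neq P(\epsilon_0+)$, which is a birth or death event in its barcode at $\epsilon_0$, as required. The one-sided limit assumptions from Section~3 are automatic here because the VR filtration has only finitely many critical values, so $\Delta_\epsilon$ is locally constant off that finite set.

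The main conceptual point, rather than any technical obstacle, is that this direct ideal-theoretic route is genuinely needed and is not a stylistic preference: one cannot in general route the argument through Theorem~\ref{Persistent affine variety covers persistent Betti number}. If $h_{ij} = \max\{h_{ij}, h_{ik}, h_{jk}\}$ for some $k$, then at $\epsilon_0 = h_{ij}/2$ the edge $\{i,j\}$ and the triangle $\{i,j,k\}$ enter $\Delta_\epsilon$ simultaneously, so the newborn 1-cycle is immediately killed by its bounding 2-cell and every $b_k$ stays constant across $\epsilon_0$. In such cases classical persistent homology does not detect $\epsilon_0$, yet the Stanley--Reisner ideal does change and the associated-prime barcode still records the event, which is precisely the sense in which the new barcode is strictly stronger than the traditional one, as foreshadowed right before the theorem.
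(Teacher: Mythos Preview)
Your argument is correct and follows essentially the same logic as the paper's proof: at $\epsilon_0=h_{ij}/2$ the complex strictly grows, hence the Stanley--Reisner ideal strictly shrinks, and since a radical ideal equals the intersection of its associated primes this forces $\Ass(I_{\Delta_{\epsilon_0-}})\neq\Ass(I_{\Delta_{\epsilon_0+}})$. The paper packages this step as an appeal to ``Proposition~\ref{topological meaning of the associated prime} and Theorem~\ref{Persistent affine variety covers persistent Betti number}'', but what is actually being used is the contrapositive established \emph{inside} the proof of Theorem~\ref{Persistent affine variety covers persistent Betti number} (equal associated-prime sets $\Rightarrow$ equal ideals $\Rightarrow$ equal complexes), not its stated Betti-number hypothesis; your direct unpacking via Lemma~\ref{relations between the ideals and simplex/graph}, Proposition~\ref{SR ideals and edge ideals are radical}, and Corollary~\ref{decomposation of ideals of complex} makes this explicit, and your closing paragraph correctly explains why the literal statement of Theorem~\ref{Persistent affine variety covers persistent Betti number} cannot be invoked here.
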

			\begin{proof}
				Since $H$ is a symmetric matrix, we only need to consider the case where $i<j$. Let $\delta_1:=\frac{1}{4}\min_{i<j,k<l, i\neq k, j\neq l} |h_{ij}-h_{kl}|$, $\delta_2:=\frac{1}{4}\min_{i<j} h_{ij}$, $\delta:=\min\{\delta_1,\delta_2\}$. According to the definition of Vietoris-Rips complex: for any $i<j$, let $\epsilon=\frac{h_{ij}}{2}$, then for $\rho\in (\epsilon-\delta,\epsilon)$, $\{i,j\}\not \in \Delta^{\rho}$; and for $\eta\in (\epsilon,\epsilon+\delta)$, $\{i,j\}\in \Delta^{\eta}$. So by Proposition~\ref{topological meaning of the associated prime} and Theorem~\ref{Persistent affine variety covers persistent Betti number}, there exists a linear prime $P\in \mathcal{LP}$ such that $P(\rho)\neq P(\eta)$. Let $\eta\mapsto \epsilon+, \rho\mapsto \epsilon-$, we get $P_{\mathfrak{p}}(\epsilon+)\neq P_{\mathfrak{p}}(\epsilon-)$. Therefore $\epsilon$ is the time of birth or death of the barcode of $P$.
			\end{proof}
			\begin{remark}
				\begin{enumerate}
					\item The converse of the Theorem~\ref{Persistent affine variety covers persistent Betti number} and Theorem~\ref{Persistent affine variety covers persistent Betti number: dicrete version} are NOT true as the following Example~\ref{counterexample: the old barcode doesn't covers the new ones} shows.
					Actually, the set of the barcodes of persistent linear prime ideals is the \emph{complete} set of algebraic invariants for the persistent complexes, meaning that the set of persistent linear prime ideals  uniquely determines and is determined by the persistent complexes, by Remark~\ref{remark on ideals and complexes} and the uniqueness of the prime decomposition~\ref{Strong uniqueness of prime decomposition}. But all the common topological quantities like Betti numbers are \emph{partial} set of algebraic invariants for the persistent complexes in general.
					Thus the assumption made above is just for convenience to prove but not in essence. And the role of the Betti numbers can also be replaced by any other common topological invariants and Theorem~~\ref{Persistent affine variety covers persistent Betti number} and Theorem~~\ref{Persistent affine variety covers persistent Betti number: dicrete version} are also true. 
					\item Other kinds of persistent ideals may also cover the information from the old PB, but we'll not discuss them here.  
				\end{enumerate}
			\end{remark}
		\begin{example}\label{counterexample: the old barcode doesn't covers the new ones}[Counterexample for the invserse of Theorem~\ref{Persistent affine variety covers persistent Betti number} and Theorem~\ref{Persistent affine variety covers persistent Betti number: dicrete version}]
			Consider the data cloud $\mathcal{D}\subset \mathbb{R}^2$ consists of three points $(0,0),(2,0),(0,2)$ named by $1,2,3$ respectively. The distance matrix $H$ of $\mathcal{D}$ is defined as follows:
			$$H:=\begin{pmatrix}
				0&2&2\\
				2&0&2\sqrt{2}\\
				2&2\sqrt{2}&0
			\end{pmatrix}$$ 
			The persistent simplicial complex $\Delta^{\epsilon}$ can be shown as follows:
			\[
			\begin{tikzpicture}
					\filldraw [black] (-5,0) circle (2pt)node[anchor=east]{1};
				  \filldraw [black] (-3,0) circle (2pt)node[anchor=west]{2};
                  \filldraw [black] (-5,2) circle (2pt)node[anchor=east]{3};
                  \node at (-4,-1){\text{Complex for} $\epsilon\in [0,2)$};
                  \draw [->] (-2,1)--(-1,1);
				\filldraw [black] (0,0) circle (2pt)node[anchor=east]{1};
				\filldraw [black] (2,0) circle (2pt)node[anchor=west]{2};
				\filldraw [black] (0,2) circle (2pt)node[anchor=east]{3};
				\draw[blue,thick] (0,0)--(0,2);
		    	\draw[blue,thick] (0,0)--(2,0);
				\node at (1,-1){\text{Complex for} $\epsilon\in [2,\sqrt{2})$};
				\draw [->] (3,1)--(4,1);
				\filldraw [black] (5,0) circle (2pt)node[anchor=east]{1};
				\filldraw [black] (7,0) circle (2pt)node[anchor=west]{2};
				\filldraw [black] (5,2) circle (2pt)node[anchor=east]{3};
				\draw[blue,thick] (5,0)--(5,2);
				\draw[blue,thick] (5,0)--(7,0);
				\draw[blue,thick] (5,2)--(7,0);
				\filldraw[gray] (5,0)--(5,2)--(7,0)--cycle;
				\node at (6,-1){\text{Complex for} $\epsilon\in [2,\sqrt{2})$};
			\end{tikzpicture}
			\]
			Note that the last two figures are both \emph{contractible} and connected, so for any $\epsilon\geq 2$, the \emph{topological type} of $\Delta^{\epsilon}$ is always the same. %so for any $\epsilon\geq 2$, $b_{0}^{\epsilon}=1, b_{i}=0, \forall i\geq 1$.
			 So the quantity $\frac{h_{2,3}}{2}=\sqrt{2}$ will not be presented in the tranditional barcodes of Betti numbers and other topological invariants. But if we consider the persistent ideals as barcodes, we can get $\frac{h_{2,3}}{2}=\sqrt{2}$ as the death time of the linear prime ideal $\langle x_2\rangle_{\mathbb{S}_3}$ or $\langle x_3 \rangle_{\mathbb{S}_3}$. This is because for $\epsilon\in[2,\sqrt{2})$, $I_{\Delta^{\epsilon}}=\langle x_2x_3\rangle_{\mathbb{S}_3}=\langle x_2\rangle_{\mathbb{S}_3}\bigcap\langle x_3\rangle_{\mathbb{S}_3}$, but for $\epsilon\in[\sqrt{2},+\infty)$, $I_{\epsilon}=0$. 
		\end{example}	
		Finally, for a graph $G$, let's see the relation between the Stanley–Reisner ideal of the \emph{clique complex} $\Delta$ of $G$ and the edge ideal of the complemet graph of the graph $G$.
		\begin{proposition}[Edge ideal vs Stanley–Reisner ideal, case of clique complex]\label{relation between the edge ideal and the Stanley–Reisner ideal}
			Let $G=(V,E)$ be a graph with vertices $\{1,\dots,n\}$ and $\Delta$ the clique complex of $G$. Let $\tilde{G}$ be the complement graph of $G$. Then $I_{\Delta}=I_{edge}(\tilde{G})$. 
		\end{proposition}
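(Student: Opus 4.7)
The plan is to verify the equality by exhibiting a common generating set for both ideals. Specifically, I would compute a minimal basis of $I_\Delta$ using Proposition~\ref{minimal basis of edge ideal and the Stanley–Reisner ideals}(2) and show that it coincides with the natural generating set of $I_{\rm edge}(\tilde G)$. The key structural observation is that, because $\Delta$ is a clique complex, a subset $\sigma \subset V$ lies in $\Delta$ if and only if every pair $\{u,v\} \subset \sigma$ with $u \ne v$ is an edge of $G$; in other words, clique complexes are precisely the simplicial complexes whose minimal non-faces have size two.

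First I would recall that by Proposition~\ref{minimal basis of edge ideal and the Stanley–Reisner ideals}(2), a minimal basis of $I_\Delta$ is given by $\{x_\sigma \mid \sigma \neq \emptyset,\ \sigma \notin \Delta,\ F\sigma \subset \Delta\}$. I would then argue that any such $\sigma$ must have cardinality exactly two: since $\sigma \notin \Delta$, the clique criterion produces distinct $u,v \in \sigma$ with $\{u,v\} \notin E$, hence $\{u,v\} \notin \Delta$; if $\sigma \neq \{u,v\}$, then $\{u,v\}$ is a proper nonempty subset of $\sigma$, contradicting $F\sigma \subset \Delta$. Conversely, every two-element set $\{u,v\}$ with $\{u,v\} \notin E$ is a minimal non-face, because it fails to be a clique while its proper nonempty subsets are singletons and therefore automatically in $\Delta$.

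Combining these two directions, the minimal basis of $I_\Delta$ is exactly $\{x_u x_v \mid u \ne v,\ \{u,v\} \notin E\}$, and this is precisely the defining generating set of $I_{\rm edge}(\tilde G)$ since $\{u,v\}$ is an edge of $\tilde G$ iff $\{u,v\} \notin E$. Hence $I_\Delta = I_{\rm edge}(\tilde G)$. There is no real obstacle here; the only point that requires a moment of care is checking that singletons always belong to $\Delta$ (so that the proper subsets of a candidate $\{u,v\}$ are genuinely in $\Delta$), which is immediate from the convention that every vertex is a $1$-element clique.
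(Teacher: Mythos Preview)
Your proposal is correct and follows essentially the same approach as the paper: both arguments invoke Proposition~\ref{minimal basis of edge ideal and the Stanley–Reisner ideals}(2) to reduce to the minimal non-faces of $\Delta$, then use the defining property of a clique complex to show these are exactly the two-element non-edges of $G$, i.e.\ the edges of $\tilde G$. The only cosmetic difference is that the paper phrases the key step as a contrapositive (if $\#\sigma\ge 3$ and $F\sigma\subset\Delta$ then $\sigma\in\Delta$), whereas you locate a non-edge pair inside $\sigma$ directly; the content is the same.
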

		\begin{proof}
			By proposition~\ref{minimal basis of edge ideal and the Stanley–Reisner ideals}, $I_{\Delta}$ has a minimal set of generators $\mathcal{B}:=\{x_\sigma|\sigma\subset \{1,\dots,n\}, \sigma\not\in\Delta, \sigma\neq \emptyset, F\sigma\subset \Delta\}$, where $F\sigma$ is the set of all proper subfaces of $\sigma$. For $\sigma\subset \{1,\dots,n\}$ and $\# \sigma\geq 3$, if $F\sigma\subset \Delta$, then surely all the $1-$dimensional subfaces of $\sigma$ belong to $\Delta$ and by definition of clique complex, $\sigma\in \Delta$.  So $\mathcal{B}$ contains $x_\sigma$ with $\#\sigma\leq 2$. Since the vertex set $V$ is the full set $\{1,\dots,n\}$, then by definition $\forall i\in\{1,\dots,n\}$, $\{i\}\in\Delta$. Thus $\mathcal{B}=\{x_\sigma|\sigma\not\in\Delta,\#\sigma=2\}$. Therefore again by definition $I_{\Delta}=\langle \mathcal{B}\rangle_{\mathbb{S}_n}=I_{edge}(\tilde{G})$.  
		\end{proof}
		\begin{remark}
			\begin{enumerate}
			\item It needs to emphasize that Proposition~\ref{relation between the edge ideal and the Stanley–Reisner ideal} ONLY holds for the \emph{clique complex} of a graph, not for other complex of a graph, see the following example~\ref{eg: relation between edge ideals and SR ideals}.
			\item Since the Vietoris-Rips complex is the clique complex of its underlying graph, then by Proposition~\ref{relation between the edge ideal and the Stanley–Reisner ideal},  Theorem~\ref{Persistent affine variety covers persistent Betti number} and Theorem~\ref{Persistent affine variety covers persistent Betti number: dicrete version}, we can get that \emph{in the case of the Vietoris-Rips complex, the persistent ideals of edge ideals can also covers the information provided by the traditional persistence barcodes.}
			\end{enumerate} 
		\end{remark}
		\begin{example}\label{eg: relation between edge ideals and SR ideals}
			Let $G=(V,E)$ be the graph with 
			$$V:=\{1,2,3,4\},\quad E:=\{\{1,2\},\{1,3\},\{2,3\},\{3,4\}\}.$$
			Its complement graph $\tilde{G}:=(V,\tilde{E})$, where $\tilde{E}:=\{(i,j)|(i,j)\not\in E\}=\{\{2,4\},\{1,4\}\}$. Then the clique complexes $\Delta$ and $\tilde{\Delta}$ of $G$ and $\tilde{G}$ are defined by
			\begin{equation*}
				\begin{aligned}
					\Delta&=\{\{1\},\{2\},\{3\},\{4\},\{3,4\},\{1,2\},\{1,3\},\{2,3\},\{1,2,3\}\},\\
					\bar{\Delta}&=\{\{1\},\{2\},\{3\},\{4\},\{1,4\},\{2,4\}\}.
				\end{aligned}
			\end{equation*}
			
			It's easy to calculate that 
			\begin{equation*}
			\begin{aligned}
				I_\Delta&=\langle x_1x_4,x_2x_4,x_1x_2x_4,x_1x_3x_4,x_2x_3x_4,x_1x_2x_3x_4\rangle_{\mathbb{S}_4}=\langle x_1x_4,x_2x_4\rangle_{\mathbb{S}_4}\\
				I_{\bar{\Delta}}&=\langle x_1x_2,x_1x_3,x_2x_3,x_3x_4,x_{\sigma}|\sigma\subset \{1,2,3,4\},\#\sigma\geq 3\rangle_{\mathbb{S}_4}=\langle x_1x_2,x_1x_3,x_2x_3,x_3x_4\rangle_{\mathbb{S}_4}\\
				I_{edge}(\Delta)&=\langle x_1x_2,x_1x_3,x_2x_3,x_3x_4\rangle_{\mathbb{S}_4}\\
				I_{edge}(\tilde{\Delta})&=\langle x_1x_4,x_2x_4\rangle_{\mathbb{S}_4}.
			\end{aligned}
			\end{equation*}
			So the conclusion of the proposition~\ref{relation between the edge ideal and the Stanley–Reisner ideal} holds. 
			
			But if we consider the complex $\Delta'$ defined by $G$  is $$\Delta'=\{\{1\},\{2\},\{3\},\{4\},\{3,4\},\{1,2\},\{1,3\},\{2,3\}\},$$
			then $$I_\Delta=\langle x_1x_4,x_2x_4,x_1x_2x_4,x_1x_3x_4,x_2x_3x_4,x_1x_2x_3x_4,x_1x_2x_3\rangle_{\mathbb{S}_4}=\langle x_1x_4,x_2x_4,x_1x_2x_3\rangle_{\mathbb{S}_4}$$ 
			and it's not equal to $I_{edge}(\tilde{\Delta})$.
			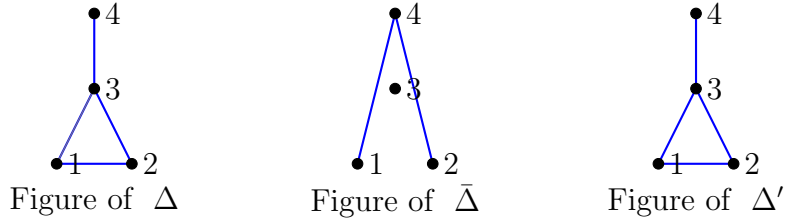
\begin{figure}[!htbp]
				\centering
					\begin{tikzpicture}
					\draw[blue,thick] (-4,0.5)--(-3.5,-0.5);
					\draw[blue,thick] (-4,0.5)--(-4,1.5);
					\draw[blue,thick] (-4,0.5)--(-4.5,-0.5);
					\draw[blue,thick] (-4.5,-0.5)--(-3.5,-0.5);
					\filldraw[gray] (-4,0.5)--(-4.5,-0.5)--(-4.5,-0.5)--cycle;
					\filldraw [black] (-4,0.5) circle (2pt)node[anchor=west]{3};
					\filldraw [black] (-4.5,-0.5) circle (2pt)node[anchor=west]{1};
					\filldraw [black] (-3.5,-0.5) circle (2pt)node[anchor=west]{2};
					\filldraw [black] (-4,1.5) circle (2pt)node[anchor=west]{4};
					\node at (-4,-1) {\text{Figure of } $\Delta$};
					\draw[blue,thick] (0.5,-0.5)--(0,1.5);
					\draw[blue,thick] (0,1.5)--(-0.5,-0.5);
					\filldraw [black] (0,0.5) circle (2pt)node[anchor=west]{3};
					\filldraw [black] (-0.5,-0.5) circle (2pt)node[anchor=west]{1};
					\filldraw [black] (0.5,-0.5) circle (2pt)node[anchor=west]{2};
					\filldraw [black] (0,1.5) circle (2pt)node[anchor=west]{4};
					\node at (0,-1) {\text{Figure of } $\bar{\Delta}$};
					\draw[blue,thick] (4,0.5)--(4,1.5);
					\draw[blue,thick] (4,0.5)--(3.5,-0.5)--(4.5,-0.5)--cycle;
					\filldraw [black] (4,0.5) circle (2pt)node[anchor=west]{3};
					\filldraw [black] (3.5,-0.5) circle (2pt)node[anchor=west]{1};
					\filldraw [black] (4.5,-0.5) circle (2pt)node[anchor=west]{2};
					\filldraw [black] (4,1.5) circle (2pt)node[anchor=west]{4};
					\node at (4,-1) {\text{Figure of } $\Delta'$};
				\end{tikzpicture}%
				\caption{Figures for the simplicial complexes in Example~\ref{eg: relation between edge ideals and SR ideals}}
			\end{figure}
		\end{example}	
		
%%%%%%%%%%%%%%%%%%%%%%%%%%%%%%%%%%%%%%%%%%%%%%%%%%%

		\section{Persistent chain complex of free modules}
		\subsection{Persistent labelled complex and persistent chain complex of free modules}
		
		\rm There is another way to associate a simplicial complex with algebraic objects,  by associating each vertex with some element in a ring $R$. It is discussed in many textbooks on commutative algebra like \cite{Hibi2011monomial}\cite{eisenbud2005syzygies} for studying the free resolution of an ideal.  It is also mentioned in the papers \cite{dawson1990homology, ren2018weighted} as the \emph{weighted complex} and in the references \cite[Chapter 8]{Schenck2022algebraic} and \cite{Carlsson2009} of the form \emph{multiparameter persistence} for TDA, but mainly about different properties from those discussed in our paper. 
		
		To be specific, let $R$ be a \emph{unique factorization domain (UFD)} and $\Delta$ is a simplicial complex over $\{1,\dots,n\}$. For each $i$, choose an element $m_i\in R$ with $m_i\neq 0$, which will be called the \emph{label element}, and denote $S=\{m_1\dots,m_n\}$.  Then we associate each face $\sigma\in\Delta$ the element $m_\sigma:=\Lcm\{m_i|i\in \sigma\}$(which are also called the label elements) where $\Lcm$ is the operator to get one of the \emph{least common multiplier}  of the given set of elements. This is well-defined because in UFD, the least common multiplier of any set of nonzero elements exists and uniquely defined by moduling a unit. Here we require that once we fix the choice of the least common multipliers, we will not change them by multiplying a unit in the following calculation.  A simplicial complex with label elements is called a \emph{labelled complex} and will be denoted by $(\Delta,S)$.
		
		  Given a labelled complex $(\Delta,S)$, we can get the following chain complex $\mathcal{C}(\Delta,S)$, called \emph{the associated chain complex of free $R-$modules} (for the label complex $(\Delta,S)$) :   
		\begin{equation}\label{associated chain complex}
		\mathcal{C}(\Delta,S):	\begin{tikzcd}
				0\arrow[r,"\tilde{\delta}_{n+1}"] &C_n(S) \arrow[r,"\tilde{\delta}_n"]& \cdots \arrow[r,"\tilde{\delta}_{2}"] & C_1(S) \arrow[r,"\tilde{\delta}_1"]& C_0(S) \arrow[r,"\tilde{\delta}_0"] &0
			\end{tikzcd}
		\end{equation}
		
		Here $C_k(S)$ is the free $R-$module with a basis of the $k-$dimensional faces of $\Delta$, that is, $$C_k(S)=\left\{ 
		\left. \sum_{\sigma\in \Delta,\# \sigma=k+1} p_\sigma \sigma \ \right| \ p_\sigma\in R \right\}.$$ The $R-$linear map $\tilde{\delta}_k:C_k(S)\longrightarrow C_{k-1}(S)$ is defined by 
		\begin{equation}\label{the definition of the boundary map of the associated complex}
			\tilde{\delta}_k(\sigma)=\sum_{u}(-1)^{u} \frac{m_\sigma}{m_{\sigma\setminus\{i_u\}}}\sigma\setminus\{i_u\}
		\end{equation}
		for $\sigma=\{i_1,\dots,i_{k+1}\}\in \Delta$ with $i_1<\dots<i_{k+1}$ and extended linearly to the whole $R-$module. Note that the coefficient $\frac{m_\sigma}{m_{\sigma\setminus\{i_u\}}}$ is indeed in $R$ by definition.
		
		Recall that the original chain complex $\mathcal{C}(\Delta)$ is defined as:
		\begin{equation}\label{original chain complex}
			\begin{tikzcd}
				0\arrow[r,"\delta_{n+1}"] &C_n \arrow[r,"\delta_n"]& \cdots \arrow[r,"\delta_{2}"] & C_1 \arrow[r,"\delta_1"]& C_0 \arrow[r,"\delta_0"] &0
			\end{tikzcd}
		\end{equation}
		with free $\mathbb{Z}-$modules $C_k=\{\sum_{\sigma\in \Delta,\# \sigma=k} q_\sigma\sigma|q_\sigma\in \mathbb{Z}\}$ and the $\mathbb{Z}-$linear map 
		\begin{equation}\label{the definition of the boundary map of the original complex}
			\delta_k(\sigma)=\sum_{u=1}^{k+1}(-1)^{u} \sigma\setminus\{i_u\}
		\end{equation}
		for $\sigma=\{i_1,\dots,i_{k+1}\}\in \Delta$ with $i_1<\dots<i_{k+1}$ and extended linearly to the whole $\mathbb{Z}-$module. If we choose $R=\mathbb{Z}$ and $m_i=1$ for all $i$, it's easy to see that the two complexes are equal to each other. So the associated complex just generalizes the original chain complex. 
		\begin{remark}\label{difference between the method of change of base ring/group and the method of associated complex}
			The construction we present here is similar to that constructed in the so-called \emph{universal coefficient theorem of homology/cohomology} in algebraic topology (cf. \cite[Chapter 3]{hatcher2002AT}):
			\[
			\begin{tikzcd}
				0\arrow[r,"\delta_{n+1}\otimes id_\mathcal{G}"] &C_n\otimes \mathcal{G} \arrow[r,"\delta_n\otimes id_\mathcal{G}"]& \cdots \arrow[r,"\delta_{2}\otimes id_\mathcal{G}"] & C_1\otimes \mathcal{G} \arrow[r,"\delta_1\otimes id_\mathcal{G}"]& C_0\otimes \mathcal{G} \arrow[r,"\delta_0\otimes id_\mathcal{G}"] &0
			\end{tikzcd}
			\]
			with $\mathcal{G}$ is an arbitrary group. But it's not the same because the map $\delta_{k}\otimes id_\mathcal{G}$ doesn't change the $\mathcal{G}-$coefficient while our definition of $\tilde{\delta}_k$ may "multiply" some $R-$coefficients.  Note that it may not easy to generalize our definition to $\mathcal{G}-$'module' for arbitrary group $\mathcal{G}$ since the least common multiplier can't be easy to generalize.  
		\end{remark}
		Similarly, we can also define the \emph{reduced labelled complex} and the \emph{reduced associated chain complex of free $R-$modules}. Here we \emph{always} require that $\emptyset$ is labeled by the unit element $1\in R$ and the associated $-1-$dimensional chain group $C_{-1}(\Delta)$ is the rank one free $R-$module with a $R-$basis $\mathcal{B}_{-1}:=\{\emptyset\}$. Naturally, the $R-$linear map $\tilde{\delta}_0: C_{0}(S)\longrightarrow C_{-1}(0)$ is defined by $\tilde{\delta}_0(\{i\})=-m_i\emptyset$ and extend linearly to the whole module. Note that in the original reduced chain complex, $\delta_0(\{i\})=-\emptyset$.
		
		\begin{definition}\label{definition: persistent labelled complex }
			Given a UFD $R$ and $S=\{m_1,\dots,m_n\}\subset R$. Suppose $\Delta_\epsilon$ is a persistent simplicial complex with vertex set $\{1,\dots,n\}$. We can 	define the \emph{persistent labelled complex}(labeled by $S$) as $(\Delta_\epsilon,S)$ and the corresponding \emph{persistent chain complex of free modules} as $\mathcal{C}(\Delta_\epsilon,S)$.  
		\end{definition}
		\subsection{Properties of persistent chain complex of free modules }
		First let's see a simple but crucial "formal" lemma for the relation between the two chain complexes: $\mathcal{C}(\Delta,S)$ as~\eqref{associated chain complex} and  $\mathcal{C}(\Delta)$ as~\eqref{original chain complex}. Since $\mathcal{B}_k:=\{A\in \Delta|\#A=k+1\}$ is a $\mathbb{Z}-$basis of $C_k$ and also an $R-$basis of $C_k(\mathcal{S})$, then we call this \emph{ the canonical basis} of the two complexes.
		\begin{lemma}\label{proposition:homotopy_matrix}
			Denote $\mathcal{B}_k=\{e_{k,1},\dots,e_{k,n_k}\}$ with $n_k=\dim(C_k)$ and let $D_k$ and $\tilde{D}_k$ be the matrices of the map $\delta_k$ and $\tilde{\delta}_k$ under $\mathcal{B}_k$ respectively, then we have the following \emph{formal} relation:
			\begin{equation}\label{homotopy_matrix}
				\tilde{D}_k=\diag(\frac{1}{m_{e_{k-1,1}}},\cdots,\frac{1}{m_{e_{k-1,n_{k-1}}}})D_k \diag(m_{e_{k,1}},\cdots,m_{e_{k,n_k}})
			\end{equation}
		\end{lemma}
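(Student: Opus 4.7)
The plan is to verify the matrix identity entry-by-entry by directly applying the two boundary maps to an arbitrary basis element of $C_k$ and reading off columns of the resulting matrices.

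First, fix a $k$-simplex $\sigma = e_{k,j} = \{i_1 < \cdots < i_{k+1}\} \in \mathcal{B}_k$. From \eqref{the definition of the boundary map of the original complex} the $j$-th column of $D_k$ has entries $(D_k)_{ij} = (-1)^u$ whenever $e_{k-1,i} = \sigma \setminus \{i_u\}$ for some $u$, and $(D_k)_{ij}=0$ otherwise. Doing the same computation with \eqref{the definition of the boundary map of the associated complex}, the $j$-th column of $\tilde{D}_k$ has entries
\[
(\tilde{D}_k)_{ij} \;=\; (-1)^u \, \frac{m_{e_{k,j}}}{m_{e_{k-1,i}}} \;=\; \frac{m_{e_{k,j}}}{m_{e_{k-1,i}}}\,(D_k)_{ij}
\]
when $e_{k-1,i} = \sigma \setminus \{i_u\}$, and $0$ otherwise. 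Since this relation $(\tilde{D}_k)_{ij} = \frac{m_{e_{k,j}}}{m_{e_{k-1,i}}}(D_k)_{ij}$ holds uniformly for every $(i,j)$, it is exactly the entry-wise description of the right-hand side of \eqref{homotopy_matrix}, namely multiplication on the right by $\diag(m_{e_{k,1}},\dots,m_{e_{k,n_k}})$ (which rescales columns) and on the left by $\diag(1/m_{e_{k-1,1}},\dots,1/m_{e_{k-1,n_{k-1}}})$ (which rescales rows).

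The only subtle point, and the reason the statement is advertised as \emph{formal}, is that the scalars $1/m_{e_{k-1,i}}$ need not belong to $R$; a priori the factorization makes sense only over $\Frac(R)$. To finish I would observe that on any nonzero entry of $D_k$ one automatically has $e_{k-1,i}\subset e_{k,j}$, so
\[
m_{e_{k-1,i}} \;=\; \Lcm\{m_\ell : \ell\in e_{k-1,i}\}
\]
divides $m_{e_{k,j}} = \Lcm\{m_\ell : \ell \in e_{k,j}\}$ in the UFD $R$; hence the product $\frac{m_{e_{k,j}}}{m_{e_{k-1,i}}}(D_k)_{ij}$ is a genuine element of $R$ and \eqref{homotopy_matrix} is a bona fide identity of matrices over $R$, obtained from an identity over $\Frac(R)$ that factors through non-integral diagonal matrices.

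There is no real obstacle: the lemma is essentially a bookkeeping unwinding of the definitions \eqref{the definition of the boundary map of the associated complex} and \eqref{the definition of the boundary map of the original complex}. The mildly delicate step — and what I would be most careful about when writing it up — is precisely the passage to $\Frac(R)$ and back, i.e.\ distinguishing between the formal diagonal conjugation taking place over the fraction field and the $R$-integral identity that survives because of the divisibility built into $\Lcm$.
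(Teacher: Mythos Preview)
Your proof is correct and is essentially the same approach as the paper's, which simply says ``Just combine the definition~\eqref{the definition of the boundary map of the associated complex} and~\eqref{the definition of the boundary map of the original complex}.'' You have spelled out that combination entry-by-entry and additionally clarified the meaning of ``formal'' via the passage through $\Frac(R)$, which the paper leaves implicit.
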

		\begin{proof}
			Just combine the definition~\eqref{the definition of the boundary map of the associated complex} and~\eqref{the definition of the boundary map of the original complex}. 
		\end{proof}
		
		With this lemma, we can get the relation between the persistent chain complex of free modules and the tranditional persistent chain complex. 
		\begin{theorem}[Equivalence of $\mathcal{C}(\Delta)$ and $\mathcal{C}(\Delta,S)$, I]\label{equivalence of the two chain complexes via evaluation}
			
			Let $\mathfrak{m}$ be a maximal ideal of $R$ that contains none of the elements $m_i$ and $\mathbb{F}=R/\mathfrak{m}$ be the quotient ring (which is actually a field). Then the two following chain complexes are equivalent :
			\[
			\mathcal{C}(\Delta)\otimes_{\mathbb{Z}} \mathbb{F}:
			\begin{tikzcd}[row sep=1.5cm,column sep=1.5cm]
				\cdots \arrow[r, "\delta_{k+1}\otimes id_\mathbb{F}"]&C_k\otimes_{\mathbb{Z}} \mathbb{F} \arrow[r, "\delta_{k}\otimes id_\mathbb{F}"]& C_{k-1}\otimes_{\mathbb{Z}} \mathbb{F} \arrow[r, "\delta_{k-1}\otimes id_{\mathbb{F}}"] &\cdots 
			\end{tikzcd}
			\]
			\[ \mathcal{C}(\Delta,S)\otimes_{R} \mathbb{F}:
			\begin{tikzcd}[row sep=1.5cm,column sep=1.5cm]
				\cdots \arrow[r, "\tilde{\delta}_{k+1}\otimes id_\mathbb{F}"]&C_k(S)\otimes_{R} \mathbb{F} \arrow[r, "\tilde{\delta}_{k}\otimes id_{\mathbb{F}}"]& C_{k-1}(S)\otimes_{R} \mathbb{F} \arrow[r, "\tilde{\delta}_{k-1}\otimes id_\mathbb{F}"] &\cdots 
			\end{tikzcd}
			\]
			And hence the homology $\mathbb{F}-$modules of the two complexes are isomorphic.
			In particular, for the persistent version, we get that $\mathcal{C}(\Delta_\epsilon,S)\otimes_{R}\mathbb{F}$ is equivalent to $\mathcal{C}(\Delta_{\epsilon})\otimes_{\mathbb{Z}}\mathbb{F}$ for every $\epsilon$.
		\end{theorem}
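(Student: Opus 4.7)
The plan is to exhibit an explicit isomorphism of chain complexes
$\phi_\bullet: \mathcal{C}(\Delta,S)\otimes_{R}\mathbb{F} \longrightarrow \mathcal{C}(\Delta)\otimes_{\mathbb{Z}}\mathbb{F}$
obtained by rescaling the canonical basis by the images of the labels. The guiding observation is that $\mathfrak{m}$ prime together with the UFD hypothesis forces every label $m_\sigma$ to reduce to a \emph{unit} in $\mathbb{F}$, so the diagonal matrices appearing in Lemma~\ref{proposition:homotopy_matrix} become invertible after passing to $\mathbb{F}$, turning the formal identity there into a genuine change of basis.

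The first step is to prove that $m_\sigma \notin \mathfrak{m}$ for every $\sigma\in\Delta$. Since $R$ is a UFD and $m_\sigma = \Lcm\{m_i\mid i\in\sigma\}$, we have $m_\sigma \mid \prod_{i\in\sigma}m_i$; if $m_\sigma\in\mathfrak{m}$, then $\prod_{i\in\sigma}m_i\in\mathfrak{m}$, and primality of $\mathfrak{m}$ (a consequence of maximality) would force some $m_i\in\mathfrak{m}$, contradicting the hypothesis. Therefore $\bar m_\sigma := m_\sigma+\mathfrak{m}$ is a nonzero element of $\mathbb{F}$, hence a unit.

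Next, define on the canonical basis $\mathcal{B}_k$
$\phi_k(\sigma\otimes 1):=\bar m_\sigma\,(\sigma\otimes 1)$
and extend $\mathbb{F}$-linearly. Because each $\bar m_\sigma$ is a unit, $\phi_k$ is a diagonal (hence invertible) change of basis in the canonical basis, so it is an $\mathbb{F}$-vector space isomorphism at each level. The remaining step is to check that the squares commute with the boundary maps; this is a one-line computation on basis elements: for $\sigma=\{i_1<\dots<i_{k+1}\}$,
\[
\phi_{k-1}\bigl((\tilde\delta_k\otimes\mathrm{id})(\sigma\otimes 1)\bigr)
=\sum_{u}(-1)^u\overline{\tfrac{m_\sigma}{m_{\sigma\setminus\{i_u\}}}}\,\bar m_{\sigma\setminus\{i_u\}}\,(\sigma\setminus\{i_u\}\otimes 1)
=\bar m_\sigma\,(\delta_k\otimes\mathrm{id})(\sigma\otimes 1),
\]
which equals $(\delta_k\otimes\mathrm{id})(\phi_k(\sigma\otimes 1))$. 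Equivalently, one can read off the identity directly from Lemma~\ref{proposition:homotopy_matrix}: reducing $\tilde D_k=\diag(m_{e_{k-1,i}})^{-1}D_k\,\diag(m_{e_{k,j}})$ modulo $\mathfrak{m}$ is exactly the matrix statement that the two boundary operators are conjugate by the diagonal matrices $\diag(\bar m_{e_{k,j}})$.

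With $\phi_\bullet$ a chain isomorphism, the induced maps on homology are $\mathbb{F}$-module isomorphisms, giving the non-persistent part; the persistent statement is immediate since $\phi_\bullet$ is defined vertex-by-vertex using the labels in $S$, hence is compatible with the inclusions $\Delta_\epsilon\hookrightarrow\Delta_{\epsilon'}$ that realize the filtration. The only subtle point is Step~1, where both the UFD hypothesis (to ensure $m_\sigma$ divides the product of its constituent $m_i$'s in a controlled way) and the primality of $\mathfrak{m}$ are essential; once that is in hand, everything else is a direct translation of Lemma~\ref{proposition:homotopy_matrix}.
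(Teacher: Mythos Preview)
Your proof is correct and follows essentially the same approach as the paper: both construct a diagonal chain isomorphism using the (images of the) label elements and invoke Lemma~\ref{proposition:homotopy_matrix} to verify commutation. The only cosmetic differences are that the paper's map $\mathcal{A}_k$ goes in the opposite direction (it uses $\diag(1/\bar m_{e_{k,j}})$ rather than your $\diag(\bar m_{e_{k,j}})$), and the paper dispatches the key fact $m_\sigma\notin\mathfrak{m}$ in one line ``by definition of the least common multiplier'' whereas you spell out the divisibility-plus-primality argument more carefully.
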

		\begin{proof}
			Let the $\mathbb{F}-$linear map $\mathcal{A}_k: C_k\otimes_{\mathbb{Z}}\mathbb{F}\longrightarrow C_k(S)\otimes_{R}\mathbb{F}$ be defined by  the matrix 
			$\diag(\frac{1}{m_{e_{k,1}}},\dots,\frac{1}{m_{e_{k,n_{k}}}})$
			with respect to the canonical basis $\{e_{kl}\otimes 1\}$. Since $\mathfrak{m}$ contains none of $m_i$, then by definition of the least common multiplier, $m_A\not\in \mathfrak{m}$ for any $A\in\Delta$. Note that $\mathbb{F}=R/\mathfrak{m}$ is a field, therefore the map $\mathcal{A}_k$ is well-defined in the sense that $m_A$ is understood as the equivalence class of $m_A$ in the field $\mathbb{F}$ and $\frac{1}{m_A}$ is understood as the inverse of the equivalence class of $m_A$. By Lemma \ref{proposition:homotopy_matrix}, the following diagram commutes: 
			\[
			\begin{tikzcd}[row sep=1.5cm,column sep=1.5cm]
				C_k\otimes_{\mathbb{Z}}\mathbb{F}\arrow[d,"\mathcal{A}_k"] \arrow[r,"\delta_k\otimes id_\mathbb{F}"]& C_{k-1}\otimes_{\mathbb{Z}}\mathbb{F}\arrow[d,"\mathcal{A}_{k-1}"]\\
				C_k(S)\otimes_{R}\mathbb{F} \arrow[r,"\tilde{\delta}\otimes id_{\mathbb{F}}"]& C_{k-1}(S)\otimes_{R}\mathbb{F}
			\end{tikzcd}.
			\]
			Therefore, $\mathcal{A}:=\{\mathcal{A}_k\}$ is the equivalence map between these two complexes. 
		\end{proof}
		There is another equivalence of the chain complex via localization (which may be more natural in algebraic geometry):
		\begin{theorem}[Equivalence of $\mathcal{C}(\Delta)$ and $\mathcal{C}(\Delta,S)$, II]\label{equivalence of the two chain complexes via localization}
			
			Let $U$ be a multiplicative set of $R$ which contains each $m_i$. Denote $N=U^{-1}R$ the localization at $U$. Then the following two chain complexes are equivalent:
			\[
			\mathcal{C}(\Delta)\otimes_{\mathbb{Z}} N:
			\begin{tikzcd}[row sep=1.5cm,column sep=1.5cm]
				\cdots \arrow[r, "\delta_{k+1}\otimes id_N"]&C_k\otimes_{\mathbb{Z}} N \arrow[r, "\delta_{k}\otimes id_N"]& C_{k-1}\otimes_{\mathbb{Z}} N \arrow[r, "\delta_{k-1}\otimes id_{N}"] &\cdots 
			\end{tikzcd}
			\]
			\[ \mathcal{C}(\Delta,S)\otimes_{R} N:
			\begin{tikzcd}[row sep=1.5cm,column sep=1.5cm]
				\cdots \arrow[r, "\tilde{\delta}_{k+1}\otimes id_N"]&C_k(S)\otimes_{R} N \arrow[r, "\tilde{\delta}_{k}\otimes id_{N}"]& C_{k-1}(S)\otimes_{R} N \arrow[r, "\tilde{\delta}_{k-1}\otimes id_N"] &\cdots 
			\end{tikzcd}.
			\]
			And hence the homology $N-$modules are isomorphic.
			In particular, for the persistent version, we get that $\mathcal{C}(\Delta_\epsilon,S)\otimes_{R}N$ is equivalent to $\mathcal{C}(\Delta_{\epsilon})\otimes_{\mathbb{Z}}N$ for every $\epsilon$.
		\end{theorem}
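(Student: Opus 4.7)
The plan is to mirror Theorem~\ref{equivalence of the two chain complexes via evaluation}, simply replacing the residue field $\mathbb{F} = R/\mathfrak{m}$ by the localization $N = U^{-1}R$ and constructing the equivalence by the same diagonal rescaling of the canonical bases. Specifically, for each $k$ I will define an $N$-linear map
\[
\mathcal{A}_k : C_k \otimes_{\mathbb{Z}} N \longrightarrow C_k(S) \otimes_R N
\]
whose matrix in the canonical basis $\{e_{k,l}\otimes 1\}$ is $\diag(1/m_{e_{k,1}}, \dots, 1/m_{e_{k,n_k}})$. Once these maps are well defined, Lemma~\ref{proposition:homotopy_matrix} yields commutativity of the squares
\[
\begin{tikzcd}[row sep=1.2cm,column sep=1.2cm]
C_k \otimes_{\mathbb{Z}} N \arrow[d,"\mathcal{A}_k"] \arrow[r,"\delta_k \otimes \id_N"] & C_{k-1} \otimes_{\mathbb{Z}} N \arrow[d,"\mathcal{A}_{k-1}"] \\
C_k(S) \otimes_R N \arrow[r,"\tilde{\delta}_k \otimes \id_N"] & C_{k-1}(S) \otimes_R N
\end{tikzcd}
\]
essentially formally, just as in the previous theorem, and the induced isomorphism on the homology $N$-modules is then automatic.

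The only genuine content beyond Theorem~\ref{equivalence of the two chain complexes via evaluation} is to show that every label $m_A$ with $A \in \Delta$ becomes a unit in $N$, so that $1/m_A$ makes sense on the diagonal. My plan here is as follows: each $m_i$ lies in the multiplicative set $U$, hence is invertible in $N$; since $R$ is a UFD and $m_A = \Lcm\{m_i \mid i \in A\}$ divides any common multiple of the $m_i$, it divides the product $\prod_{i \in A} m_i$. Writing $\prod_{i \in A} m_i = m_A \cdot d$ for some $d \in R$, the left-hand side is a unit in $N$ as a product of units, and hence $m_A$ is itself a unit in $N$.

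After this invertibility check is in place, the rest is bookkeeping: $\mathcal{A}_k$ is an isomorphism with inverse represented by $\diag(m_{e_{k,1}}, \dots, m_{e_{k,n_k}})$ (which trivially lies in $N$ since each $m_{e_{k,l}} \in R$); Lemma~\ref{proposition:homotopy_matrix} translates directly into the commuting squares above; and the persistent statement follows by applying the construction pointwise at each filtration parameter, since the label set $S$ and the rescaling matrices are independent of $\epsilon$. The main obstacle I anticipate is precisely the divisibility argument, which is where the UFD hypothesis on $R$ enters crucially; without unique factorization the least common multiple need not exist or interact compatibly with the localization, and the whole construction would break down.
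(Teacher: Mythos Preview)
Your proposal is correct and follows exactly the paper's approach: mirror the proof of Theorem~\ref{equivalence of the two chain complexes via evaluation} via the same diagonal rescaling maps $\mathcal{A}_k$ and invoke Lemma~\ref{proposition:homotopy_matrix}, the only new point being that each $m_A$ is a unit in $N$. Your divisibility argument (that $m_A$ divides $\prod_{i\in A} m_i \in U$ and is therefore invertible in $N$) is in fact more careful than the paper's, which simply asserts $m_A \in U$ ``by definition of the multiplicative set''---a claim that need not hold literally unless $U$ is saturated, although the needed conclusion that $m_A$ is a unit in $N$ is correct either way.
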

		\begin{proof}
			Just similar to the proof of corollary~\ref{equivalence of the two chain complexes via evaluation}. Only note that by definition of the multiplicative set, $m_A\in U$ for all $A\in\Delta$ and thus invertible. $\frac{1}{m_A}$ here is understood as the element in the localization $N$.
		\end{proof}
		\begin{remark}\label{remark on the equivalent chain complexes}
			\begin{enumerate}[(a)]
				\item Both Theorem~\ref{equivalence of the two chain complexes via evaluation} and Theorem~\ref{equivalence of the two chain complexes via localization} hold for the reduced chain complexes and their persistent version according to the proof of the theorems.
				%\item The term "evaluation" is used in the title of Theorem~\ref{equivalence of the two chain complexes via evaluation} due to the canonical quotient map $e:R\longrightarrow \mathbb{F}$, which is often viewed as an "evaluation map", especially when $R$ is the polynomial ring.
				\item Theorem~\ref{equivalence of the two chain complexes via localization} contains the special case when $R$ is an integral domain $\mathbb{D}$ and $N$ is the quotient field $\Frac(\mathbb{D})$ of $\mathbb{D}$.
				\item Theorem~\ref{equivalence of the two chain complexes via evaluation} and Theorem~\ref{equivalence of the two chain complexes via localization} ensure us to read the information of the original chain complex over some specific ring or field from the associated complex. Hence it may be more promising to use the persistent chain complex of free modules in TDA.
			\end{enumerate}
		\end{remark}

	 For the case when $m_i\in\mathfrak{m}$ for the associated complex in Theorem~\ref{equivalence of the two chain complexes via evaluation} or the case when $m_i\not\in U$ for the  one in Theorem~\ref{equivalence of the two chain complexes via localization}, the map $\mathcal{A}_k$ is well-defined. But if we restrict our consideration on some subcomplex which makes the conditions valid, we can also get the equivalence between the subcomplex. That's a way to extract some "local" information of the complex as the following corollaries, which are directly got from Theorem~\ref{equivalence of the two chain complexes via evaluation} and Theorem~\ref{equivalence of the two chain complexes via localization}, imply:
	 \begin{corollary}[Local information, I]\label{corollary: local information via evaluation}
	 	Let $\mathfrak{m}$ be a maximal ideal of $R$ and $\mathbb{F}=R/\mathfrak{m}$ be the quotient ring (which is actually a field). Let 
	 	$W:=\{i\in\{1,\dots,n\}|m_i\not\in \mathfrak{m}\}$ and $T:=\{m_i|i\in W\}$.
	 	Denote the full subcomplex of $\Delta$ over the vertex set $W$ as $\Delta'$. 	Then  the (sub-)chain complex $\mathcal{C}(\Delta')\otimes_{\mathbb{Z}} \mathbb{F}$ and $\mathcal{C}(\Delta',T)\otimes_{R} \mathbb{F}$ are equivalent . 
	 	Moreover, in the persistent version, for every $\epsilon$, denote the full subcomplce $\Delta_\epsilon$ over $W$ as  $\Delta_\epsilon'$. Then for any $\epsilon$,  the (sub-)chain complexes $\mathcal{C}(\Delta_\epsilon')\otimes_{\mathbb{Z}} \mathbb{F}$ and $\mathcal{C}(\Delta_\epsilon',T)\otimes_{R} \mathbb{F}$ are always equivalent to each other. 
	 	
	 \end{corollary}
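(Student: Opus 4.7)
The plan is to reduce the statement directly to Theorem~\ref{equivalence of the two chain complexes via evaluation} applied to the labelled complex $(\Delta', T)$, with the \emph{same} maximal ideal $\mathfrak{m}$. By construction, $\Delta'$ is the full subcomplex of $\Delta$ on $W$, so every face $\sigma \in \Delta'$ satisfies $\sigma \subset W$, and $T$ is precisely the collection of labels assigned by $S$ to the vertices in $W$. Thus $(\Delta', T)$ is a well-defined labelled complex in the sense of Definition~\ref{definition: persistent labelled complex } (viewed either as a complex over $W$ or over $\{1,\dots,n\}$ supported on $W$), and the restriction of $\tilde{\delta}_k$ to faces of $\Delta'$ coincides with the boundary map of $\mathcal{C}(\Delta', T)$, since $\sigma \subset W$ forces $\sigma \setminus \{i_u\} \subset W$ for every $i_u \in \sigma$.

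The essential step is to verify that $\mathfrak{m}$ contains none of the elements of $T$ --- which is immediate from the definition of $W$ --- and, more crucially, that $m_\sigma \notin \mathfrak{m}$ for every $\sigma \in \Delta'$, so that the map $\mathcal{A}_k$ constructed in the proof of Theorem~\ref{equivalence of the two chain complexes via evaluation} is well-defined over $\mathbb{F}$. For this I would use that $\mathfrak{m}$ is prime (since it is maximal) together with the UFD hypothesis on $R$: in any UFD, $m_\sigma = \Lcm\{m_i : i \in \sigma\}$ divides $\prod_{i \in \sigma} m_i$, so writing $\prod_{i \in \sigma} m_i = m_\sigma \cdot c$ would force $\prod_{i \in \sigma} m_i \in \mathfrak{m}$ whenever $m_\sigma \in \mathfrak{m}$. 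But each $m_i$ with $i \in \sigma \subset W$ lies outside the prime ideal $\mathfrak{m}$, so the product does as well, contradicting $m_\sigma \in \mathfrak{m}$.

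With this observation in hand, Theorem~\ref{equivalence of the two chain complexes via evaluation} applies verbatim to $(\Delta', T)$ and yields the desired equivalence $\mathcal{C}(\Delta') \otimes_{\mathbb{Z}} \mathbb{F} \simeq \mathcal{C}(\Delta', T) \otimes_R \mathbb{F}$. For the persistent statement, note that the vertex set $W$ and the label set $T$ depend only on $R$, $\mathfrak{m}$, and $S$ --- not on the filtration parameter --- so the same reasoning applies pointwise: for each $\epsilon$, the full subcomplex $\Delta_\epsilon'$ of $\Delta_\epsilon$ on $W$ inherits the labels $T$, and we invoke Theorem~\ref{equivalence of the two chain complexes via evaluation} at each level to obtain the equivalence. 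The only nontrivial point --- and hence the main obstacle --- is the UFD/primality verification above; everything else is a bookkeeping reduction to a theorem already established.
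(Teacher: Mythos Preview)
Your proposal is correct and follows essentially the same route as the paper: the paper states that the corollary is obtained directly from Theorem~\ref{equivalence of the two chain complexes via evaluation}, and you spell out precisely this reduction by checking that the hypothesis ``$\mathfrak{m}$ contains none of the label elements'' holds for $(\Delta',T)$. Your verification that $m_\sigma\notin\mathfrak{m}$ for $\sigma\subset W$ is already contained in the proof of Theorem~\ref{equivalence of the two chain complexes via evaluation} itself, so it is redundant (though not wrong) to reprove it here.
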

		
		\begin{corollary}[Local information, II]\label{corollary: local information via localization}
		Let $U$ be a multiplicative set of $R$ and $N=U^{-1}R$ the localization at $U$. Let 
			$W:=\{i\in\{1,\dots,n\}|m_i\in U\}$ and $T:=\{m_i|i\in W\}$.
			Denote the full subcomplex of $\Delta$ over the vertex set $W$ as $\Delta'$. 	Then  the (sub-)chain complexes $\mathcal{C}(\Delta')\otimes_{\mathbb{Z}} \mathbb{F}$ and $\mathcal{C}(\Delta',T)\otimes_{R} \mathbb{F}$ are equivalent. 
			Moreover, in the persistent version, for every $\epsilon$, denote the full subcomplex $\Delta_\epsilon$ over $W$ as  $\Delta_\epsilon'$. Then for any $\epsilon$,  the (sub-)chain complexes $\mathcal{C}(\Delta_\epsilon')\otimes_{\mathbb{Z}} N$  and $\mathcal{C}(\Delta_\epsilon',T)\otimes_{R} N$ are always equivalent. 
		\end{corollary}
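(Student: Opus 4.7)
The plan is to observe that Corollary~\ref{corollary: local information via localization} is essentially a direct specialization of Theorem~\ref{equivalence of the two chain complexes via localization}, rather than a genuinely new statement. The substance of the corollary is the remark that, after restricting from $\Delta$ to the full subcomplex $\Delta'$ on $W$, every vertex label that appears lies in $U$, so the hypotheses of Theorem~\ref{equivalence of the two chain complexes via localization} apply to the restricted labelled complex.

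More precisely, first I would verify that the labelled complex $(\Delta', T)$ over the vertex set $W$ satisfies the hypothesis of Theorem~\ref{equivalence of the two chain complexes via localization}: every element of $T$ lies in $U$ by the very definition $W = \{i : m_i \in U\}$. Next I would note that for any face $A \in \Delta'$ we have $A \subset W$, and hence $m_A = \operatorname{Lcm}\{m_i : i \in A\}$ involves only labels indexed by $W$, so it is exactly the label that the labelled complex $(\Delta', T)$ assigns to $A$; equivalently, the associated chain complex $\mathcal{C}(\Delta', T)$ is obtained from $\mathcal{C}(\Delta, S)$ by restricting the canonical basis to faces of $\Delta'$ while keeping the boundary formula~\eqref{the definition of the boundary map of the associated complex} verbatim. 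Applying Theorem~\ref{equivalence of the two chain complexes via localization} to the labelled complex $(\Delta', T)$ with multiplicative set $U$ then produces the diagonal equivalence map $\mathcal{A}_k = \operatorname{diag}(1/m_{e_{k,1}},\dots,1/m_{e_{k,n_k}})$ (where $1/m_A$ is interpreted in $N$), and hence the desired chain-complex equivalence $\mathcal{C}(\Delta') \otimes_{\mathbb{Z}} N \cong \mathcal{C}(\Delta', T) \otimes_R N$.

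For the persistent version, the key observation is that $W$ and $T$ depend only on $U$ and the ambient label set $\{m_1, \dots, m_n\}$, not on $\epsilon$. Hence for each $\epsilon$ the full subcomplex $\Delta'_\epsilon$ is a simplicial complex on the fixed vertex set $W$ labelled by the fixed set $T \subset U$, and the previous paragraph applies verbatim to yield $\mathcal{C}(\Delta'_\epsilon) \otimes_{\mathbb{Z}} N \cong \mathcal{C}(\Delta'_\epsilon, T) \otimes_R N$ for every $\epsilon$. There is no real obstacle in this argument; the corollary's content is entirely bookkeeping that isolates the maximal sub-vertex-set on which the hypotheses of Theorem~\ref{equivalence of the two chain complexes via localization} remain valid, and the only thing one must check carefully is that restricting $(\Delta, S)$ to $\Delta'$ really does produce the labelled complex $(\Delta', T)$, which follows from the observation about $m_A$ above.
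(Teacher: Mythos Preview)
Your proposal is correct and matches the paper's approach exactly: the paper states that Corollaries~\ref{corollary: local information via evaluation} and~\ref{corollary: local information via localization} are ``directly got from Theorem~\ref{equivalence of the two chain complexes via evaluation} and Theorem~\ref{equivalence of the two chain complexes via localization}'', which is precisely your observation that restricting to $(\Delta',T)$ places all labels in $U$ so that Theorem~\ref{equivalence of the two chain complexes via localization} applies verbatim. Your additional remark that the restriction of $(\Delta,S)$ to $\Delta'$ genuinely yields the labelled complex $(\Delta',T)$ (because $m_A$ for $A\subset W$ only involves labels from $T$) makes explicit the one piece of bookkeeping the paper leaves implicit.
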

		\begin{example}\label{eg: example of label complex}
			Consider the following labelled complex $\Delta$ with label elements  $S_0=\{m_i|1\leq i\leq 3\}$ in $R=\mathbb{C}[x_1,x_2]$ as follows:
			\[
			\begin{tikzpicture}
				\draw[blue,thick] (0,1)--(-1,0);
				\draw[blue,thick] (0,1)--(1,0);
				\draw[blue,thick] (-1,0)--(1,0);
				%\draw[blue,thick] (1,0)--(0,-1);
				%\draw[blue,thick] (-1,0)--(0,-1);
				%\filldraw[gray] (0,1.5)--(1,-0.5)--(1.5,0.5)--cycle;
				%\filldraw[gray] (0,1)--(-1,0)--(1,0)--cycle;
				\filldraw [black] (0,1) circle (2pt)node[anchor=west]{$m_1=x_2+x_1$};
				\filldraw [black] (1,0) circle (2pt)node[anchor=west]{$m_3=x_1x_2$};
				%\filldraw [black] (0,-1) circle (2pt)node[anchor=west]{$m_3=x_2+x_3$};
				\filldraw [black] (-1,0) circle (2pt)node[anchor=east]{$m_2=x_1$};
				%\filldraw [black] (1,-0.5) circle (2pt)node[anchor=west]{$m_4=x_2+x_3$};
			\end{tikzpicture}
			\]
			Let $
			\mathcal{B}_1:=\{\{1,2\},\{2,3\},\{1,3\}\}$ be the canonial basis of the chain groups $C_1(S_0)$ and $C_1$,  and $
			\mathcal{B}_0:=\{\{1\},\{2\},\{3\}\}$ the canonical basis of the chain groups $C_0(S_0)$ and $C_0$.
			Then with respect to these bases, the matrices of the boundary maps $\tilde{\delta}_1$ and $\delta_1$ are
	$ \tilde{Y}_1:=\begin{pmatrix}
					x_1+x_2&0&x_1+x_2\\
					-x_1&1&0\\
					0&-x_2&x_1x_2
					\end{pmatrix}$ 
			and
				$Y_1:=\begin{pmatrix}
					1&0&1\\
					-1&1&0\\
					0&-1&-1\\
				\end{pmatrix}$ respectively.
				
			By Hilbert's Nullstellensatz\cite[Chapter 4]{cox1997ideals}, $\mathfrak{m}:=\langle x_1-1,x_2-1\rangle_{R}$ is a maximal ideal of $R$ and $m_i\not\in\mathfrak{m}, 1\leq i\leq 4$. Thus $\mathbb{F}:=R/\mathfrak{m}$ and $\mathcal{S}:=R_{\mathfrak{m}}=U^{-1}R$, where $U=R\setminus \mathfrak{m}$, satisfies the conditions of Theorem~\ref{equivalence of the two chain complexes via evaluation} and Theorem~\ref{equivalence of the two chain complexes via localization} respectively. 
			Let 
			$D_0:=\diag([x_1+x_2],[x_1],[x_1x_2])$ and 
			$D_1:=\diag([x_1(x_1+x_2)],[x_1x_2],[x_1x_2(x_1+x_2)])$
			where the square brackets means the equivalent class in $\mathbb{F}$. Via easy calculation, we get that $D_0^{-1}[Y_1]D_1=[\tilde{Y}_1]$.
			 Here $[Y_1]$ and $[\tilde{Y}_1]$ are the matrices of the $\mathbb{F}-$linear mappings $\tilde{\delta}_1\otimes id_{\mathbb{F}}$ and $\delta_1\otimes id_{\mathbb{F}}$  under the canonical $\mathbb{F}-$bases $\{\sigma\otimes 1|\sigma\in\mathcal{B}_1\}$  and $\{\sigma\otimes 1|\sigma\in\mathcal{B}_{0}\}$ respectively. 
			 This verifies Theorem~\ref{equivalence of the two chain complexes via evaluation}. Note that in this case, the matrix of the mapping $\mathcal{A}_1$, defined in the Theorem~\ref{equivalence of the two chain complexes via evaluation}, under the canonical basis is  $D_1^{-1}$. In a similar way, we can validate Theorem~~\ref{equivalence of the two chain complexes via localization}.
			 
			 Now consider $\mathfrak{m}_1:=\langle x_1-1,x_2+1\rangle_{R}$, then by Hilbert's Nullstellensatz\cite[Chapter 4]{cox1997ideals}, $\mathfrak{m}_1$ is a maximal ideal of $R$, and $m_2,m_3\not\in\mathfrak{m}_1$ but $m_1\in\mathfrak{m}_1$.
			 Take consideration of $\mathcal{C}(\Delta_1)\otimes_{\mathbb{Z}}R_{\mathfrak{m}_1}$ and $\mathcal{C}(\Delta_1,S_0)\otimes_{R}R_{\mathfrak{m}_1}$, then 
			 the matrix of $\delta_1\otimes id_{R_{\mathfrak{m}_1}}$ under the canonical basis $\{\sigma\otimes 1|\sigma\in\mathcal{B}_1\}$ and $\{\sigma\otimes 1|\sigma\in\mathcal{B}_0\}$ is $W_{1}:=[Y_1]_1$ while the matrix of $\tilde{\delta}_1\otimes id_{R_{\mathfrak{m}_1}}$ is $\tilde{W}_{1}:=[\tilde{Y}_1]_1$,where $[]_1$ denotes the equivalence class in $R_{\mathfrak{m}_1}$.
			 
			 Note that in this case $D_1':=\diag([x_1(x_1+x_2)]_1,[x_1x_2]_1,[x_1x_2(x_1+x_2)]_1)$ is not invertible, since $\det(D_1')=[x_1]_1^3[x_2(x_2+x_1)]_1^2$ is not a unit in the integral domain $R_{\mathfrak{m_1}}$. So we can't obtain the mapping $\mathcal{A}_1$ defined in the Theorem~\ref{equivalence of the two chain complexes via localization}.
			 
			 But if we consider the full subcomplex $\Delta_2$ of $\Delta_1$ defined by the vertex set $\{2,3\}$, then the condition of Corollary~\ref{corollary: local information via localization} is satisfied, and we can verify that the result of Corollary~\ref{corollary: local information via localization} holds in the same way as above. This enables us to extract the "local" information of $\Delta_2$.
		\end{example}

				\subsection{More properties of persistent chain complex of free modules when $R=\mathbb{F}[x_1,\dots,x_t]$}	
					
					Now let's focus on the case when $R=\mathbb{S}_t=\mathbb{F}[x_1,\dots,x_t]$. Since $\mathbb{S}_t$ is an $\mathbb{N}^t$-graded ring and thus a $\mathbb{Z}^t-$graded ring as in the the section~\ref{appendix:graded rings} of the Appendix, we expect to get more information from the associated complex and that's indeed possible. Because we need to use the graded structure, then we always assume that the label elements $S:=\{m_1,\dots,m_n\}$ are all \emph{monomials} in $\mathbb{S}_t$. And in the following, $\deg(m)$ for a monomial $m$ denotes the $\mathbb{N}^t$ graded degree of $m$, namely, for $\alpha\in\mathbb{N}^t$ and $m=x^{\alpha}$, $\deg(m)=\alpha$.
					
					Endow the $\mathbb{S}_t-$modules $C_k(S)$ with an $\mathbb{N}^t-$graded structure as follows: we assume $\deg(\sigma):=\deg(m_\sigma)\in\mathbb{N}^t$ for  $\sigma$ a $k-$dimensional face in $\Delta$.   Then $C_{k}(S)=\bigoplus_{\alpha\in \mathbb{N}^t}(C_{k}(S))_{\alpha}$, where
					\begin{equation}\label{degree alpha part of the chain groups}
						(C_{k}(S))_{\alpha}:=\bigoplus_{\sigma\in\Delta,\#\sigma=k+1}\spn_{\mathbb{F}}\{q_\sigma\sigma|\deg(\sigma)+\deg(q_{\sigma})=\alpha, q_{\sigma} \text{ is a monomial.}\}
					\end{equation}
					makes $C_k(S)$ an $\mathbb{N}^t-$graded $\mathbb{S}_t-$modules.
					 Then it is easy to see from the equation~\eqref{the definition of the boundary map of the associated complex} that the boundary map $\tilde{\delta}$ is of degree zero. For $m\in\mathbb{S}_t$ and $m$ monomial, we define $\Delta_m$ to be the subcomplex of $\Delta$ which contains all the faces whose label elemnet divides $m$. Note that by definition, $\Delta_m$ is always a full subcomplex over certain subset of vertices. By considering each homogeneous part, we can get the following theorem:
					\begin{theorem}[local indormation, III ]\label{local information}
						Let $\Delta$ be a simplicial complex over $\{1,\dots,n\}$ and $S_0:=\{m_1,\dots,m_n\}$ the set of label elements.  For any $\alpha\in\mathbb{N}^t$, let $m_\alpha=x^\alpha$.
						Then the following  reduced complex of $\mathbb{K}-$vector space:
						\[
						\mathcal{C}(\Delta,S_0)_{\alpha}:
						\begin{tikzcd}
							\cdots \arrow[r,"\tilde{\delta}_{k+1}"]&(C_{k}(S_0))_\alpha \arrow[r,"\tilde{\delta}_k"]& (C_{k-1}(S_0))_\alpha \arrow[r,"\tilde{\delta}_{k-1}"]&\cdots
						\end{tikzcd}
						\]
						where $(C_k(S_0))_\alpha=\{X\in C_k(S_0)|\deg(X)=\alpha\}$ is the degree-$\alpha$ part of $C_{k}(S_0)$,  and the reduced subcomplex $\mathcal{C}(\Delta_{m_\alpha})$ with coefficients in $\mathbb{K}$ are equivalent as chain complex of $\mathbb{F}-$vector spaces. 
					In particular, the homology groups of the two complexes are isomorphic.
					Furthermore, for the persistent version, the choice of $m_\alpha$ is independent of the parameter $\epsilon$ in the persistent associated chain complex of free modules, thus we can always have the equivalence between $\mathcal{C}(\Delta_\epsilon,S_0)_{\alpha}$ and $(\Delta_\epsilon)_{m_\alpha}$ for any $\epsilon$.
					\end{theorem}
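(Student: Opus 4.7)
The plan is to construct an explicit $\mathbb{F}$-linear isomorphism of chain complexes $\phi_\bullet : \mathcal{C}(\Delta, S_0)_\alpha \to \mathcal{C}(\Delta_{m_\alpha}) \otimes_{\mathbb{Z}} \mathbb{F}$ that matches a natural basis on each side and intertwines the boundary maps. Once such a basis-level bijection is produced, everything else reduces to a direct computation with the formula~\eqref{the definition of the boundary map of the associated complex}.

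First I would unpack the definition of $(C_k(S_0))_\alpha$ from~\eqref{degree alpha part of the chain groups}. A basis element of $C_k(S_0)$ has the form $q_\sigma\,\sigma$ with $\sigma\in\Delta$, $\#\sigma = k+1$, and $q_\sigma$ a monomial in $\mathbb{S}_t$; it lies in the degree-$\alpha$ part exactly when $\deg(q_\sigma) + \deg(m_\sigma) = \alpha$. Since monomials in $\mathbb{S}_t$ are determined by their degrees, this forces $q_\sigma = m_\alpha/m_\sigma$, and such a quotient is an honest monomial in $\mathbb{S}_t$ if and only if $m_\sigma \mid m_\alpha$, which by the definition of $\Delta_{m_\alpha}$ is the same as saying $\sigma \in \Delta_{m_\alpha}$. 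Hence $(C_k(S_0))_\alpha$ is an $\mathbb{F}$-vector space with basis
\[
\bigl\{(m_\alpha/m_\sigma)\,\sigma \;\bigm|\; \sigma\in\Delta_{m_\alpha},\ \#\sigma = k+1\bigr\},
\]
which is in canonical bijection with the standard $\mathbb{F}$-basis of $C_k(\Delta_{m_\alpha})\otimes_{\mathbb{Z}}\mathbb{F}$. I define $\phi_k$ on this basis by $(m_\alpha/m_\sigma)\,\sigma \mapsto \sigma\otimes 1$ and extend $\mathbb{F}$-linearly.

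The key verification is that $\phi_\bullet$ commutes with the boundary maps. Applying $\tilde{\delta}_k$ to a basis element and using~\eqref{the definition of the boundary map of the associated complex} gives
\[
\tilde{\delta}_k\bigl((m_\alpha/m_\sigma)\,\sigma\bigr)
 = \sum_u (-1)^u \,\frac{m_\alpha}{m_\sigma}\cdot\frac{m_\sigma}{m_{\sigma\setminus\{i_u\}}}\,(\sigma\setminus\{i_u\})
 = \sum_u (-1)^u \,\frac{m_\alpha}{m_{\sigma\setminus\{i_u\}}}\,(\sigma\setminus\{i_u\}),
\]
and since $m_{\sigma\setminus\{i_u\}}\mid m_\sigma\mid m_\alpha$ each term $\sigma\setminus\{i_u\}$ lies in $\Delta_{m_\alpha}$, so the right-hand side is a legitimate element of $(C_{k-1}(S_0))_\alpha$. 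Under $\phi_{k-1}$ it maps to $\sum_u(-1)^u(\sigma\setminus\{i_u\})\otimes 1 = (\delta_k\otimes\mathrm{id}_{\mathbb{F}})(\sigma\otimes 1)$, as desired. For the reduced version I use the convention that $\emptyset$ carries label $1$ and that $\tilde{\delta}_0(\{i\}) = -m_i\,\emptyset$: the basis element $(m_\alpha/m_i)\{i\}$ maps to $-m_\alpha\,\emptyset = -(m_\alpha/m_\emptyset)\emptyset \in (C_{-1}(S_0))_\alpha$, which is then sent by $\phi_{-1}$ to $-\emptyset$, matching $\delta_0(\{i\})$ in the reduced complex $\mathcal{C}(\Delta_{m_\alpha})\otimes_{\mathbb{Z}}\mathbb{F}$. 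Consequently $\phi_\bullet$ is an isomorphism of chain complexes of $\mathbb{F}$-vector spaces, and the homology groups coincide.

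Finally, for the persistent statement, I would note that the choice of $m_\alpha = x^\alpha$ depends only on $\alpha$ and not on the filtration parameter $\epsilon$, so the construction of $\phi_\bullet$ applies verbatim at each $\epsilon$ with $\Delta$ replaced by $\Delta_\epsilon$ and $\Delta_{m_\alpha}$ replaced by $(\Delta_\epsilon)_{m_\alpha}$. The main obstacle is really only bookkeeping: one must verify that the quotients $m_\alpha/m_\sigma$ and $m_\sigma/m_{\sigma\setminus\{i_u\}}$ are genuine monomials in $\mathbb{S}_t$ for every face appearing in the relevant sums, and track the sign conventions through the reduced degree $-1$ part; both follow transparently from the divisibility chain $m_{\sigma\setminus\{i_u\}}\mid m_\sigma\mid m_\alpha$.
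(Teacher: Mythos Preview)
Your proposal is correct and follows essentially the same approach as the paper: both identify the $\mathbb{F}$-basis of $(C_k(S_0))_\alpha$ as $\{(m_\alpha/m_\sigma)\,\sigma : \sigma\in\Delta_{m_\alpha},\ \#\sigma=k+1\}$ and define the chain isomorphism by matching this basis with the standard basis of $C_k(\Delta_{m_\alpha})$ (the paper's map $\mathcal{F}_k$ is just the inverse of your $\phi_k$). If anything, you supply more detail than the paper, which leaves the commutativity check $\mathcal{F}_{k-1}\circ\delta_k=\tilde{\delta}_k\circ\mathcal{F}_k$ to the reader.
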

					\begin{proof}
						We only need to prove the first assertion and the second one follows. The proof of the first one is essentially inspired by the proof of \cite[Theorem 2.2]{eisenbud2005syzygies}. 
						
						 By the definition of the graded structure on $C_{k}(S_0)$ as Eq~\eqref{degree alpha part of the chain groups} shows, we can get that 
						 \begin{equation}
						 	(C_{k}(S_0))_{\alpha}=\bigoplus_{\sigma\in\Delta, \#\sigma =k+1, m_{\sigma}|m_{\alpha}} \spn_\mathbb{F}\{\frac{m_\alpha}{m_{\sigma}}\sigma\},
						 \end{equation}
						By definition of $\Delta_{m_{\alpha}}$, we see that the $\mathbb{F}-$vector space $(C_{k}{S_0})_{\alpha}$ is isomorphic to the $\mathbb{F}-$vector space $C_{k}(\Delta_{m_\alpha})$, and the isomorphism can be taken as 
						\begin{equation}
							\begin{aligned}
								\mathcal{F}_{k}:C_{k}(\Delta_{m_\alpha})&\longrightarrow (C_{k}{S_0})_{\alpha}\\
								\sum_{\sigma}l_\sigma \sigma\mapsto & \sum_{\sigma}l_\sigma \frac{m_\alpha}{m_{\sigma}} \sigma
							\end{aligned}
						\end{equation}
					It is easy to check that $\mathcal{F}_{k-1}\circ \delta_{k}=\tilde{\delta}_{k}\circ\mathcal{F}_{k}$, so the family of maps $\{\mathcal{F}_k\}$ can be taken as the chain map between $\mathcal{C}(\Delta,S_0)_{\alpha}$ and $\mathcal{C}(\Delta_{m_{\alpha}})$ with coefficients in $\mathbb{F}$. 
					\end{proof}
					\begin{remark}
						\begin{enumerate}[(a)]
							\item Theorem~\ref{local information} enables us to extract information of the \emph{full subcomplex} of $\Delta$ over some given subset of the vertex set, including the whole complex which can be got by choosing $\alpha$ as the degree of the least common multiplier of all the $m_i$, by considering the homogeneous part of the persistent chain complex of free modules. In fact, we can choose the label polynomials so that for any subset $W$ of the vertex set, we can choose a degree $\alpha\in\mathbb{N}^t$ such that $\Delta_{m_{\alpha}}$ is exactly the full subcomplex on $W$. For instance, we choose the label polynomial $m_i$ to be $x_i$, then if we want to extract the full subcomplex of $\Delta$ over the subset $W$ of the vertex set, we just choose the indicator vector $\alpha_{W}$ defined by $\alpha_i=\begin{cases}
								1& \text{ if }i\in W\\
								0&  \text{ if }i\not\in W
							\end{cases}$, and the subcomplex $\Delta_{m_{\alpha_W}}$ is just what we want.
							\item Actually by the proof of the theorem, we can replace all the reduced complex as the ordinary complex. See the following example.
						\end{enumerate}
					\end{remark}
					\begin{example}
						Consider the simplicial complex 
						$$\Delta:=\{\{1\},\{2\},\{3\},\{4\},\{1,2\},\{1,3\},\{2,3\},\{1,2,3\},\{1,4\}\}$$ with the vertex set $\{1,2,3,4\}$ with the label polynomials $m_1=x_1,x_2=x_2x_3,x_3=x_2x_4,m_4=x_3x_4$, as the picture shows:  \[\begin{tikzpicture}
							\draw[blue,thick] (0,0)--(0,1);
							\filldraw[gray] (0,0)--(-1,-1)--(1,-1)--cycle;
							\filldraw [black] (0,1) circle (2pt)node[anchor=west]{$m_4=x_3x_4$};
							\filldraw [black] (0,0) circle (2pt)node[anchor=west]{$m_1=x_1$};
							\filldraw [black] (1,-1) circle (2pt)node[anchor=west]{$m_3=x_2x_4$};
							\filldraw [black] (-1,-1) circle (2pt)node[anchor=east]{$m_2=x_2x_3$};
						\end{tikzpicture}.\]
						Under the canonical basis, the matrices of the boundary maps can be calculated as follows:
						$$\tilde{\delta}_2\leadsto \begin{pmatrix}
							-x_4\\x_3\\-x_1\\0
						\end{pmatrix}\qquad 
						\tilde{\delta}_1\leadsto \begin{pmatrix}
							x_2,x_3&x_2x_4&0&x_3x_4\\
							-x_1&0&x_4&0\\
							0&-x_1&-x_3&0\\
							0&0&0&-x_1\\
						\end{pmatrix}$$
						Choose $\alpha=(0,1,1,1)$, then $m_\alpha=x_2x_3x_4$; so
						{\small \[ \mathcal{C}(\Delta_{m_\alpha}):
							\begin{tikzcd}
								0\arrow[r,"0"]&\spn_{\mathbb{F}}(\{23\})\arrow[r,"\delta_1"]&\spn_{\mathbb{F}}(\{2,3,4\})\arrow[r,"\delta_0"]&\spn_{\mathbb{F}}(\emptyset)\arrow[r,"0"]&0
							\end{tikzcd}
							\]}
						and 
						$\delta_1\leadsto \begin{pmatrix}
							1\\-1\\0
						\end{pmatrix}$\quad $\delta_0\leadsto\begin{pmatrix}
							-1&-1&-1
						\end{pmatrix}$. 
						On the other hand, degree-$\alpha$ part of the associated complex is as follows
						{\small \[
							\begin{tikzcd}
								0\arrow[r,"0"]&\spn_{\mathbb{F}}(\{2,3\})\arrow[r,"\tilde{\delta}_1"]&\spn_\mathbb{F}(\{x_2\{4\},x_3\{3\},x_4\{2\}\})\arrow[r,"\tilde{\delta}_0"]&\spn_{\mathbb{F}}(\{x_2x_3x_4\emptyset\})\arrow[r,"0"]&0
							\end{tikzcd}
							\]}
						and $\tilde{\delta}_1\leadsto \begin{pmatrix}
							1\\-1\\0
						\end{pmatrix}$\quad $\tilde{\delta}_0\leadsto\begin{pmatrix}
							-1&-1&-1
						\end{pmatrix}$
						which are the same as the ones above.
					\end{example}

					\bibliographystyle{plain}
					\bibliography{ref}
				%%%%%%%	
				\appendix	\section{Commutative algebra dictionary}
					In this Appendix, we will briefly introduce basic knowledge from commutative algebra, mainly about the properties of the basic objects: the ideals in a commutative ring $R$ and the $R$-modules. %We will mainly focus on how the properties change with respect to the variations of the basic objects.
					Restricted to the length of the paper, many details will be omitted.   There are abundant references to commutative algebra for those who are interested, such as \cite{Atiyah,eisenbud2013commutative,Kemper2011,miller2006combinatorial,eisenbud2005syzygies}. And \cite{cox1997ideals,cox2003using} are good references for the computational methods of commutative algebra. 
					
					Let $R$ be a ring and $M$ an $R-$module. The basic examples of modules we should bear in mind are the $\mathbb{F}-$vector spaces over a field $\mathbb{F}$ (as $\mathbb{F}-$modules), the abelian groups as $\mathbb{Z}-$modules and the ring $R$ itself as an $R-$module. For the last example, we emphasize that the submodules are merely the ideals of $R$ and the quotient modules are simply the quotient rings by some ideal. Also notice that for any ring $R$, we can "forget" the structure of multiplication and view it as a $\mathbb{Z}-$module for the abelian group defined by the structure of addition.
					%thus the theory of ideals of a ring can be viewed special kind of theory of modules and conversely many of the results in the theory of ideals of a ring can be generalized to the counterpart in the module theory. In the following, we will mainly discuss the theory of ideals and leave the counterpart in the module theory for the interested reader to learn themselves.
					We will use $ \langle S\rangle_{M}$ or $\spn_R(S)$ for the submodule generated by $S\subset M$.

						\subsubsection{Localization}
						An important and useful tool in the ring/module theory is the \emph{localization}. We'll only introduce the localization of a ring $R$. For a ring $R$, a subset $U\subset R$ is called \emph{a multiplicative set} if $1\in U$ and $U$ is \emph{closed} under the multiplication. Now suppose $U$ is a multiplicative set of $R$. The \emph{localization 
							$U^{-1}R$ of $R$ at $U$} is the quotient set $U^{-1}R:=\{(a,b)|a\in R,b\in U\}/\sim$, where the equivalence relation $\sim$ is defined by $(a_1,b_1)\sim(a_2,b_2)$ if and only if $\exists u\in U$, such that $u(a_1b_2-a_2b_1)=0$. Define the addition and multiplication on $U^{-1}R$ as:
						\begin{equation}\label{addition and multiplication in the localization}
							\begin{aligned}
								(a_1,b_1)+(a_2,b_2):=(a_1b_2+a_2b_1,b_1b_2);\\
								(a_1,b_1)\cdot (a_2,b_2):=(a_1a_2,b_1b_2).
							\end{aligned}
						\end{equation}
						One can check that under this addition and multiplication, $U^{-1}R$ is a commutative ring with unit--the equivalence class of $(1,1)$. We also often write the equivalence class of $(a,b)$ as $\frac{a}{b}$. A familiar example is that for an integral domain $\mathbb{D}$ like $\mathbb{Z}$, $U=\mathbb{D}\setminus \{0\}$, then $U^{-1}\mathbb{D}$ is just the \emph{quotient field or field of fractions} of $\mathbb{D}$. We usually denote it by $\Frac(\mathbb{D})$.  Another typical example is that for a proper \emph{prime ideal} $P$ of $R$, the complementary set $P^c$ of $P$ is a multiplicative set, and thus we can do the localization at $P^c$. In this case, we will denote the localization as $R_{P}$. %An example of this from geometry is that for the coordinate ring $\mathbb{F}[\mathbf{V}]$ of an affine variety $\mathbf{V}$ (see the definition below), and $x\in\mathbf{V}$. Let $\mathfrak{m}_x:=\{f\in\mathbb{F}[\mathbf{V}]|f(x)= 0\}$. It's easy to check that $\mathfrak{m}_x$ is a \emph{maximal ideal} and thus we can constuct the localization $\mathbb{F}[\mathbf{V}]_{x}:=\mathbb{F}[\mathbf{V}]_{\mathfrak{m}_x}$ and call it \emph{the localization of $\mathbb{F}[\mathbf{V}]$ at the point $x$} or \emph{the set of locally rational functions}. %Moreover, the function germ can be also similarly defined on a manifold in differential geometry.
						\subsubsection{Primary and prime decompositions}

						Prime and primary ideals play a central role in commutative algebra. Specifically, a proper ideal $P$ of $R$ is \emph{prime} if $R/P$ is an \emph{integral domain}, i.e., the only zero-divisor of $R/P$ is the zero element.  More generally, a proper ideal $Q$ of $R$ is \emph{primary} if each zero-divisor in $R/Q$ is \emph{nilpotent}. In particular, every prime ideal is primary. The \emph{radical ideal} $\sqrt{I}$ of an ideal $I \subset R$ is defined as the ideal $\sqrt{I}:=\{ u \in R \mid u^n\in I \text{ for some } n\in\mathbb{Z}_+ \}$. It is obvious that $I \subset \sqrt{I}$. Additionally, an ideal $I$ is said to be \emph{radical} if $\sqrt{I}=I$. By definition, every prime ideal $P$ of a ring is radical, and the radical ideal $\sqrt{Q}$ of a primary ideal $Q$ is prime.
						
						An ideal $I$ of $R$ is called \emph{primary decomposable} if there exist finitely many distinct primary ideals $Q_1, \dots, Q_n$ such that $I = \bigcap_{i = 1}^n Q_i$. Furthermore, this primary decomposition is said to be \emph{minimal} (or \emph{reduced}) if it satisfies the following properties: (a) the radical ideals $\sqrt{Q_i}$ are all distinct and (b) $\bigcap_{j \neq i} Q_j \nsubseteq Q_i$ for every $i \in \{ 1, 2, \dots, n \}$. Similarly, an ideal $J$ of $R$ is called \emph{prime decomposable} if there exist finitely many distinct prime ideals $P_1, \dots, P_n$ such that $J = \bigcap_{i = 1}^n P_i$, and this prime decomposition is said to be \emph{minimal} (or \emph{reduced}) if  $\bigcap_{j \neq i} P_j \nsubseteq P_i$ for each $i \in \{ 1, 2, \dots, n \}$. Evidently, a prime decomposition is also a primary decomposition since every prime ideal is primary. Moreover, any primary decomposition \( \bigcap_{i = 1}^{n} Q_i \) always admits a minimal one by removing redundant primary ideals from the decomposition (cf. \cite[Chapter 3]{eisenbud2013commutative} or \cite[Chapter 4]{Atiyah}). Furthermore, in a more general setting, both decompositions can be extended to \( R \)-modules (cf. \cite[Chapter 3]{eisenbud2013commutative}). 
						
						It is well known that an ideal in a ring can have different minimal primary decompositions corresponding to different families of primary ideals \cite[Chapter 4]{Atiyah}. However, the following proposition shows that by examining the family of so-called \emph{associated primes} induced by a minimal primary decomposition, the primary decomposition exhibits a form of uniqueness based on the associated prime ideals. In particular, every prime decomposable ideal in a ring admits a unique minimal prime decomposition.
						\begin{proposition}\label{Strong uniqueness of prime decomposition}
							Let $R$ be a ring, $I$ be a primary decomposable ideal, and 
							$I = \bigcap_{i = 1}^n Q_i$ be a minimal primary decomposition of $I$. 
							Then the set of prime ideals $\Ass(I):=\{\sqrt{Q_i}\}_{1\leq i\leq n}$, called the set of \emph{associated primes} of $I$, is independent of the minimal primary decomposition of $I$. 
							
							Furthermore, if $I$ admits a minimal prime decomposition $I=\bigcap_{i=1}^n P_i$ with prime ideals $P_i$, then $\{ P_i \}_{1\leq i\leq n}$ is exactly the set of associated primes of $I$. In other words, if $I$ is prime decomposable, then the minimal prime decomposition of $I$ is unique. 
						\end{proposition}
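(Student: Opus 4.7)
The plan is to use the classical characterization of associated primes via colon ideals $(I : x) := \{r \in R : rx \in I\}$, which is intrinsic to $I$ and hence independent of any chosen decomposition. Specifically, I will show that $\Ass(I)$ coincides with the set $\mathcal{P}(I) := \{P : P \text{ is prime and } P = \sqrt{(I:x)} \text{ for some } x \in R\}$; since the right-hand side depends only on $I$, this immediately yields the first assertion.

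First I would establish the following colon-ideal lemma for a primary ideal $Q$ with $\sqrt{Q} = P$: $(Q : x) = R$ whenever $x \in Q$, and $\sqrt{(Q : x)} = P$ whenever $x \notin Q$. The first case is trivial. For the second, the inclusion $\sqrt{(Q:x)} \supseteq \sqrt{Q} = P$ follows from $(Q:x) \supseteq Q$; conversely, if $r^n x \in Q$ with $x \notin Q$, then viewing the equation in $R/Q$ and invoking the primary property applied to $r^n$ and $x$ forces $r^n \in \sqrt{Q}$, hence $r \in P$.

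Next, I would apply this to the minimal primary decomposition $I = \bigcap_{i=1}^n Q_i$ with $P_i := \sqrt{Q_i}$. Since taking colon by $x$ and taking radicals each commute with finite intersections, I obtain
\[
\sqrt{(I : x)} \;=\; \bigcap_{i\,:\,x \notin Q_i} P_i.
\]
For the inclusion $\{P_i\} \subseteq \mathcal{P}(I)$, I would invoke minimality condition (b) to pick, for each fixed $i$, an element $x_i \in \bigcap_{j \neq i} Q_j$ with $x_i \notin Q_i$; the displayed formula then collapses to the single term $\sqrt{(I : x_i)} = P_i$, which is prime. For the reverse inclusion, suppose $P = \sqrt{(I : x)}$ is prime; the formula gives $P = \bigcap_{i\,:\,x \notin Q_i} P_i$, so prime avoidance forces $P \supseteq P_j$ for some index $j$ with $x \notin Q_j$, while the intersection formula gives $P \subseteq P_j$, whence $P = P_j$.

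The main obstacle will be the reverse-inclusion step, which hinges on the standard fact that a prime ideal containing a finite intersection of ideals must contain one of them; this must be applied carefully to a finite family of primes rather than to a single ideal. Finally, the \emph{furthermore} statement is immediate: any minimal prime decomposition $I = \bigcap_{i=1}^n P_i$ is automatically a minimal primary decomposition, since every prime ideal is primary, $\sqrt{P_i} = P_i$, and distinctness of the $P_i$ together with minimality condition (b) are preserved. Applying the first part yields $\Ass(I) = \{P_i\}_{1 \leq i \leq n}$, so the minimal prime decomposition is uniquely determined by $I$.
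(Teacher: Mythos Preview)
Your argument is correct and is precisely the classical proof of the first uniqueness theorem for primary decompositions (via the intrinsic description $\Ass(I)=\{P\text{ prime}:P=\sqrt{(I:x)}\text{ for some }x\}$). The paper does not prove this statement itself but simply cites it as the first uniqueness proposition in \cite[Chapter~4]{Atiyah}; you have supplied exactly the details behind that citation, and your deduction of the ``furthermore'' clause matches the paper's one-line reduction.
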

						\begin{proof}
							The first assertion corresponds to the first uniqueness proposition in \cite[Chapter 4]{Atiyah}. The second follows directly from the definition, the fact that every prime ideal is primary, and the first assertion.
						\end{proof}
						
						Every primary ideal $Q$ of a ring $R$ is primary decomposable by definition, with $\Ass(Q) = \{\sqrt{Q}\}$. Similarly, every prime ideal $P$ is prime decomposable, and $\Ass(P) = \{P\}$. Additionally, for a radical ideal, the associated primes are minimal in the following sense: 
						
						\begin{proposition}[Proposition 4.6 and Exercise 2, Chapter 4, \cite{Atiyah}]\label{Proposition: Associated primes are precisely the minimal elements}
							Let $I$ be a primary decomposable ideal in a ring $R$ and $I$ be radical. Then the associated primes of $I$ are precisely the minimal elements in the set of all prime ideals containing $I$.
						\end{proposition}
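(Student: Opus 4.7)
Since $I$ is radical and primary decomposable, the strategy is to upgrade a minimal primary decomposition of $I$ into a minimal prime decomposition whose components are exactly $\Ass(I)$, and then to verify that the components of such a prime decomposition coincide with the minimal elements of the set of primes containing $I$.

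First, I would fix a minimal primary decomposition $I = \bigcap_{i=1}^{n} Q_i$ and apply the radical operation. Using that the radical commutes with finite intersections and that $\sqrt{Q_i} = P_i$ is prime, together with the hypothesis $\sqrt{I}=I$, I get $I = \bigcap_{i=1}^{n} P_i$ with $\Ass(I) = \{P_1,\dots,P_n\}$ by Proposition~\ref{Strong uniqueness of prime decomposition}. A short argument then shows this intersection is itself a minimal decomposition in the sense that $\bigcap_{j \ne i} P_j \not\subseteq P_i$: minimality of the primary decomposition together with the standard fact (from the proof of the first uniqueness theorem in \cite[Chapter 4]{Atiyah}) that redundant components could be discarded yields pairwise incomparability of the $P_i$, i.e.\ $P_j \not\subseteq P_i$ whenever $j \neq i$.

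Next I would prove the two containments of the claimed equality. For the inclusion $\Ass(I) \subseteq \{\text{minimal primes over }I\}$, I note each $P_i$ contains $I = \bigcap_j P_j$. If some prime $P$ satisfied $I \subseteq P \subsetneq P_i$, then applying the standard fact that a prime containing a finite intersection of ideals must contain one of them (via the inclusion $\prod_j P_j \subseteq \bigcap_j P_j \subseteq P$), we obtain $P_j \subseteq P \subsetneq P_i$ for some $j$; pairwise incomparability forces $j=i$, a contradiction. For the reverse inclusion, if $P$ is a minimal prime over $I$, the same prime-avoidance step gives $P_j \subseteq P$ for some $j$, and minimality of $P$ together with $I \subseteq P_j$ forces $P = P_j \in \Ass(I)$.

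The only technical point requiring care is the ``standard fact'' that a prime containing a finite intersection of ideals contains one of them; this is an elementary induction using primeness, and is the main lemma the argument hinges on. Everything else is bookkeeping that follows directly from Proposition~\ref{Strong uniqueness of prime decomposition} and the definitions of radical, primary, and minimal decomposition, so no obstacle of real substance should remain.
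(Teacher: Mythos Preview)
Your argument is correct and is essentially the standard route (the one behind the cited results in \cite{Atiyah}); the paper itself does not supply a proof but simply cites Atiyah, so there is nothing substantively different to compare. The only point worth tightening is your justification of pairwise incomparability of the $P_i$: the clean way is to note that after discarding any redundant $P_i$ from $I=\bigcap_i P_i$ you obtain a minimal prime decomposition whose components, by Proposition~\ref{Strong uniqueness of prime decomposition}, are exactly $\Ass(I)=\{P_1,\dots,P_n\}$, so nothing was discarded; then $P_j\subseteq P_i$ for $j\neq i$ would make $P_i$ redundant, a contradiction.
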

						
						In general, an ideal may not always be prime or primary decomposable. However, certain properties of the ring and the ideal can guarantee the existence of such decompositions. For example, as shown below, in a \emph{Noetherian} ring, every ideal admits a primary decomposition (Proposition \ref{Existence of primary decomposition}).

						%And also we have the following conclusion:
						%\begin{proposition}[\cite{Atiyah}, Proposition 4.1]
						%\label{Proposition: radical of a primary ideal is prime}
						%Let $\mathfrak{Q}$ be a primary ideal in a ring $\mathcal{R}$. Then $\sqrt{\mathfrak{Q}}$ is the smallest prime ideal containing $\mathfrak{Q}$.
						%\end{proposition}
						
						%With these concepts, we can define two most important decompositions of ideals : 

						\subsubsection{Noetherian rings }

						A ring $R$ is called \emph{Noetherian} if every ideal of it is finitely generated. The Noetherian property frequently appears in rings studied in commutative algebra, such as \emph{principal ideal domains} (PIDs), including classical examples like \(\mathbb{Z}\) and the univariate polynomial ring \(\mathbb{F}[x]\) over a field \(\mathbb{F}\). More generally, every multivariate polynomial ring \( R[x_1, x_2, \dots, x_n]\) over a Noetherian ring \(R\) is itself Noetherian, as established by the well-known Hilbert basis theorem (see, for example, \cite[Chapter 7]{Jacobsonn1989vol2}). Notably, polynomial rings $\mathbb{Z}[x_1,\dots,x_n]$ and $\mathbb{F}[x_1,\dots,x_n]$, where $\mathbb{F}$ is a field, are Noetherian. The following proposition indicates that every proper ideal in a Noetherian ring has primary decompositions.
						%Equivalently, it satisfies the ascending chain condition (ACC): every ascending chain of ideals $\mathcal{I}_1 \subset \mathcal{I}_2 \subset \cdots$ has finite \emph{length}, i.e., there is an $n\in\mathbb{N}$ such that $\mathcal{I}_k = \mathcal{I}_{n}$ whenever $k\geq n$, where the smallest $n$ satisfies the condition is called the \emph{length} of the chain.
						
						\begin{proposition}[Theorem 3.10, Chapter 3, \cite{eisenbud2013commutative}]\label{Existence of primary decomposition}
							Let $R$ be a Noetherian ring and $I$ a proper ideal of $R$. Then, $I$ admits a primary decomposition.
						\end{proposition}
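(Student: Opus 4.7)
The plan is to prove the existence of a primary decomposition in two stages, following the classical strategy: first, reduce primary decomposition to a decomposition into \emph{irreducible} ideals (those not expressible as a non-trivial intersection of two strictly larger ideals), and second, show that in the Noetherian setting every irreducible ideal is already primary. Both stages rely on the ascending chain condition in an essential way.

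For the first stage, I would consider the family $\Sigma$ of all proper ideals of $R$ that cannot be written as a finite intersection of irreducible ideals, and argue by contradiction that $\Sigma$ is empty. Assuming $\Sigma \neq \emptyset$, the Noetherian hypothesis guarantees a maximal element $I \in \Sigma$. Such an $I$ cannot itself be irreducible (otherwise $I = I$ is a trivial such intersection), so $I = J \cap K$ with $J, K \supsetneq I$. Maximality of $I$ in $\Sigma$ forces $J, K \notin \Sigma$, hence each is a finite intersection of irreducibles, and so is $I$, a contradiction.

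For the second stage, I would take an irreducible ideal $Q$ and show that if $xy \in Q$ with $y \notin Q$, then some power $x^n \in Q$. The key move is to consider the ascending chain of colon ideals $(Q:x) \subseteq (Q:x^2) \subseteq \cdots$, which stabilizes by the Noetherian hypothesis at some index $n$, giving $(Q:x^n) = (Q:x^{n+1})$. I would then establish the identity $Q = \bigl(Q + \langle x^n\rangle\bigr) \cap \bigl(Q + \langle y\rangle\bigr)$. The nontrivial inclusion is $\supseteq$: if $z$ lies in both ideals, write $z = q_1 + rx^n = q_2 + sy$; multiplying by $x$ and using $xy \in Q$ yields $rx^{n+1} \in Q$, so $r \in (Q:x^{n+1}) = (Q:x^n)$, whence $rx^n \in Q$ and $z \in Q$. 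Since $y \notin Q$ means $Q + \langle y\rangle \supsetneq Q$, irreducibility of $Q$ forces $Q + \langle x^n\rangle = Q$, i.e.\ $x^n \in Q$, so $x \in \sqrt{Q}$ and $Q$ is primary.

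Combining the two stages yields that every proper ideal $I$ of $R$ is a finite intersection of irreducible ideals, each of which is primary, which is the desired primary decomposition. I expect the main obstacle to be the second stage, specifically producing the identity $Q = (Q + \langle x^n\rangle) \cap (Q + \langle y\rangle)$ and recognizing that the stabilization of the colon-ideal chain is exactly the ingredient needed to convert ``$rx^{n+1} \in Q$'' into ``$rx^n \in Q$''; this is the subtle step where the Noetherian hypothesis is leveraged, whereas the first stage is a routine maximal-element argument.
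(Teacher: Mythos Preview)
Your proof is correct and is the classical Noether--Lasker argument via irreducible ideals. Note, however, that the paper does not supply its own proof of this proposition: it simply cites \cite[Theorem~3.10, Chapter~3]{eisenbud2013commutative} and moves on, so there is nothing in the paper to compare against beyond the reference itself. Your two-stage argument (Noetherian induction to reduce to irreducibles, then stabilization of the colon-ideal chain to show irreducible $\Rightarrow$ primary) is exactly the standard proof one finds in the cited source and in Atiyah--Macdonald, so you are in line with the intended background.
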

						
						\begin{proposition}\label{Equivalent condition for prime decomposable ideals in a Noetherian ring}
							Let $R$ be a Noetherian ring and $I$ be a proper ideal of $R$, then $I$ has a prime decomposition if and only if $I$ is a radical ideal.
						\end{proposition}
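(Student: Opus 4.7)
The plan is to prove both implications directly, leveraging the already-established existence of primary decompositions in Noetherian rings (Proposition~\ref{Existence of primary decomposition}) together with the standard compatibility of radicals with finite intersections.

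For the forward direction, suppose $I = \bigcap_{i=1}^n P_i$ is a prime decomposition. Since each prime ideal is radical (as noted in the appendix after the definition), and since $\sqrt{\bigcap_i J_i} = \bigcap_i \sqrt{J_i}$ for any finite family of ideals, we get $\sqrt{I} = \bigcap_i \sqrt{P_i} = \bigcap_i P_i = I$, so $I$ is radical. This direction is essentially formal and requires no Noetherian hypothesis.

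For the reverse direction, assume $I$ is a proper radical ideal. By Proposition~\ref{Existence of primary decomposition}, since $R$ is Noetherian, $I$ admits a primary decomposition $I = \bigcap_{i=1}^n Q_i$. Taking radicals and using the same identity $\sqrt{\bigcap_i Q_i} = \bigcap_i \sqrt{Q_i}$ together with $I = \sqrt{I}$, we obtain $I = \bigcap_{i=1}^n \sqrt{Q_i}$. Each $\sqrt{Q_i}$ is prime by the definition of a primary ideal, so this is a prime decomposition of $I$. If the resulting primes are not all distinct or if some are redundant (contained in the intersection of the others), one discards or merges them to obtain a minimal prime decomposition; this is the same standard pruning used for primary decompositions in Proposition~\ref{Strong uniqueness of prime decomposition}.

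The only non-trivial ingredient is the identity $\sqrt{\bigcap_i Q_i} = \bigcap_i \sqrt{Q_i}$, which is the main step but is a standard and elementary fact (one inclusion is immediate, and the other follows because if $u^n \in \bigcap_i Q_i$ then $u^n \in Q_i$ for each $i$, so $u \in \sqrt{Q_i}$ for each $i$). Thus there is no real obstacle: the proof is a direct combination of the existence theorem quoted from the appendix with this radical-intersection identity. It is worth remarking that the Noetherian hypothesis is used \emph{only} in the reverse direction, via Proposition~\ref{Existence of primary decomposition}; the forward direction holds in any commutative ring.
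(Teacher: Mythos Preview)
Your proof is correct and follows essentially the same route as the paper: both directions use the identity $\sqrt{\bigcap_i J_i}=\bigcap_i\sqrt{J_i}$, with the reverse implication invoking Proposition~\ref{Existence of primary decomposition} to obtain a primary decomposition and then passing to radicals, and the forward implication using that primes are radical. Your remark on pruning to a minimal decomposition is a harmless addition, though the statement itself does not require minimality.
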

						\begin{proof}
							Recall that the radical ideal of $\bigcap_{j = 1}^m I_j$ of ideals $I_1, I_2, ..., I_m$ in $R$ equals the intersection $\bigcap_{j = 1}^m \sqrt{I_j}$ (Exercise 1.13, \cite{Atiyah}). Suppose $I$ is a radical ideal. Then, by Theorem \ref{Existence of primary decomposition}, $I$ can be expressed as $I = \bigcap_{i=1}^n Q_i$, where each $Q_i$ is a primary ideal. Consequently, $I = \sqrt{I} = \bigcap_{i=1}^n \sqrt{Q_i}$. Since each $Q_i$ is a primary ideal for \( i \in \{ 1, 2, \dots, n \} \), it follows that each $\sqrt{Q_i}$ is a prime ideal. This demonstrates that $I$ admits a prime decomposition. Conversely, if $I = \bigcap_{i=1}^n P_i$, where each $P_i$ is a prime ideal, then $\sqrt{I} = \bigcap_{i=1}^n \sqrt{P_i} = \bigcap_{i=1}^n P_i = I$. In other words, $I$ is a radical ideal in $R$.
						\end{proof}

						\subsubsection{Graded rings and modules}\label{appendix:graded rings}
						%With the natural grading of monomials in the polynomial rings $\mathbb{F}[x_1, ..., x_n]$, the graded structures of rings and modules are also taken into consideration in the investigations. 
						Let $(\Lambda, +, 0)$ be a commutative, additive monoid (e.g., $\Lambda = \mathbb{N}^n$). A ring $R$ is called an \emph{$\Lambda$-graded ring} if it admits a \emph{graded structure} indexed by $\Lambda$, i.e., $R$ is a direct sum $R=\bigoplus_{\alpha\in\Lambda} R_\alpha$ as \emph{abelian groups} with the property that $R_\alpha R_\beta \subset R_{\alpha + \beta}$ whenever $\alpha, \beta \in \Lambda$. Similarly, a module $M$ over a graded ring $R=\bigoplus_{\alpha\in\Lambda} R_\alpha$ is referred to a \emph{graded $R$-module} if $M = \bigoplus_{\alpha \in \Lambda} M_\alpha$ is a direct sum as \emph{abelian groups} so that $R_\alpha M_\beta \subset M_{\alpha + \beta}$ whenever $\alpha, \beta \in \Lambda$. %The most common example is the polynomial rings $R[x_1, ..., x_n]$ over a ring $R$ with its natural grading.
						
						A nonzero element in $R_\alpha$ (resp. $M_\alpha$) is called a \emph{homogeneous elements of degree $\alpha$} , and the abelian group $R_\alpha$ (resp. $M_\alpha$) is called the  \emph{homogeneous part of $R$} (resp. $M$) \emph{of degree $\alpha$}. A \emph{homogeneous ideal} (resp. \emph{homogeneous submodule}) of a graded ring (resp. graded module) is an ideal (resp. submodule) that is generated by homogeneous elements. Notice that 
						\begin{equation*}
							R/I \simeq \bigoplus_{\alpha \in \Lambda} R_\alpha/I_\alpha \text{ and } M/N \simeq \bigoplus_{\alpha \in \Lambda} M_\alpha/N_\alpha   
						\end{equation*}
						for any homogeneous ideal $I = \bigoplus_{\alpha \in \Lambda} I_\alpha$ of $R = \bigoplus_{\alpha \in \Lambda} R_\alpha$ and homogeneous submodule $N = \bigoplus_{\alpha \in \Lambda} N_\alpha$ of a graded $R$-module $M = \bigoplus_{\alpha \in \Lambda} M_\alpha$; in other words, the quotient ring (resp. module) of a homogeneous ideal (resp. submodule) inherits a naturally graded structure from the original one.
						
						%	\begin{example}
							
							%	\end{example}
						%
						%It's easy to see that the quotient module/ring of a homogeneous submodule/ideal will inherit a natural graded stucture from the orignal one. Also a direct product/sum of an $\mathcal{A}-$graded ring/module is naturally a $\mathcal{A}^\Lambda-$graded ring/module, where $\Lambda$ is the index set of the direct product/sum. 
						\begin{example}
							
							A typical example is the multivariant polynomial ring $\mathcal{S}:=R[x_1,\dots,x_n]$ over a ring $R$. It can be viewed as $\mathbb{N}-$graded ring via the decomposition $\mathcal{S}=\bigotimes_{i\in \mathbb{N}}\mathcal{S}_i$ with $\mathcal{S}_i:=\{f\in\mathcal{S}|f\text{ is a homogenous polynomial of degree } i\}\cup \{0\}$. Another natural and important graded structure on $\mathcal{S}$ is the $\mathbb{N}^n-$graded structure: 
							$\mathcal{S}=\oplus_{\alpha\in \mathbb{N}^n}\mathcal{S}_\alpha$, where $\mathcal{S}_{\alpha}:=\spn_{R}(x^\alpha)$. 
						\end{example}
						For an $\mathbb{N}^n-$graded module $M$, usually by convention, we can assume that $M_\alpha=0$ for all $\alpha\in \mathbb{Z}^n\setminus \mathbb{N}^n$ and then view $M$ as a $\mathbb{Z}^n-$graded module. 
						A homomorphism between two $\mathbb{Z}^n-$graded modules $\delta:M\longrightarrow N$ is called \emph{a graded homomorphism of (homogeneous) degree $\alpha\in \mathbb{Z}^n$} if $\delta(M_\beta)\subset N_{\alpha+\beta}$ for all $\beta\in \mathbb{Z}^n$.  
						Sometimes for simplicity, we may use the \emph{twisted graded module} $N(\alpha)$ defined by $N(\alpha)_\beta:=N_{\alpha+\beta}$ to replace $N$ and force the degree of the map to be $0$.
						The same consideration can be applied to an $\mathbb{N}^n-$graded or $\mathbb{Z}^n-$graded ring.
						A common example is that the usual differential operator $\frac{d}{dt}:\mathbb{F}[t]\longrightarrow \mathbb{F}[t]$ over the polynomial ring $\mathbb{F}[t]$ is graded homomorphism of degree $-1$.
						
						The primary/prime decomposition of ideals/submodules can be extended to the graded case, see for example, \cite[Chapter 3]{eisenbud2013commutative} and \cite[Chapter 6]{cox2003using}.

						\subsubsection{Monomial ideals.} An ideal $I$ of $\mathbb{S}_n:=\mathbb{F}[x_1,\dots,x_n]$, with $\mathbb{F}$ a field, is called a \emph{monomial ideal} if it has a set of generators consisting of \emph{only} monomials.  The class of monomial ideals is very important for both theoretical research and computational issues due to the following simple but good properties:
						\begin{lemma}[Criterion for members of monomial ideals]\cite[Lemma 2, Chapter 2]{cox1997ideals}\label{Criterion for members of monomial ideals}
							Let $\Lambda\subset \mathbb{N}^n$ and $I = \langle x^{\mathbf{u}} \mid \mathbf{u} \in \Lambda \rangle_{\mathbb{S}_n}$. Then $x^{\mathbf{v}} \in I$ if and only if $\mathbf{u} \leq \mathbf{v}$ for some $\mathbf{u} \in \Lambda$. 
						\end{lemma}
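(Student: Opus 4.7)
The plan is to split the biconditional into its two directions. The ``if'' direction is immediate from the definition of an ideal generated by a set: if $\mathbf{u}\leq\mathbf{v}$ for some $\mathbf{u}\in\Lambda$, then $\mathbf{v}-\mathbf{u}\in\mathbb{N}^n$, so $x^{\mathbf{v}}=x^{\mathbf{u}}\cdot x^{\mathbf{v}-\mathbf{u}}$ is an $\mathbb{S}_n$-multiple of a generator of $I$ and therefore lies in $I$.

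For the ``only if'' direction, I would start from the hypothesis $x^{\mathbf{v}}\in I$ and use the definition of $\langle\cdot\rangle_{\mathbb{S}_n}$ given in Section~\ref{Section: Notations and Terminologies} to write
\[
x^{\mathbf{v}} \;=\; \sum_{i=1}^{k} f_i\, x^{\mathbf{u}_i}
\]
for finitely many $f_i\in\mathbb{S}_n$ and $\mathbf{u}_i\in\Lambda$. The next step is to expand each coefficient $f_i$ in the monomial basis as $f_i=\sum_j c_{ij}\, x^{\mathbf{w}_{ij}}$ with $c_{ij}\in\mathbb{F}$ and $\mathbf{w}_{ij}\in\mathbb{N}^n$, so that
\[
x^{\mathbf{v}} \;=\; \sum_{i,j} c_{ij}\, x^{\mathbf{w}_{ij}+\mathbf{u}_i}.
\]
The key step is then to invoke the fact that the monomials $\{x^{\boldsymbol{\alpha}}:\boldsymbol{\alpha}\in\mathbb{N}^n\}$ form an $\mathbb{F}$-basis of $\mathbb{S}_n$, so two polynomials agree iff their coefficients at every monomial agree. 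Matching the coefficient of $x^{\mathbf{v}}$ on both sides forces at least one pair $(i,j)$ with $c_{ij}\neq 0$ and $\mathbf{w}_{ij}+\mathbf{u}_i=\mathbf{v}$, and for this pair $\mathbf{u}_i\in\Lambda$ visibly satisfies $\mathbf{u}_i\leq\mathbf{v}$, as required.

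The only real subtlety is the appeal to the $\mathbb{F}$-linear independence of monomials; once this is granted, the proof reduces to an elementary comparison of exponent vectors under componentwise order. No Noetherian machinery or primary decomposition from the rest of the appendix is needed, and the argument works verbatim for infinite $\Lambda$ because any witness of ideal membership is by definition a \emph{finite} sum of multiples of generators.
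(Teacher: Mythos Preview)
Your argument is correct and is exactly the standard proof of this elementary lemma; the paper itself does not supply a proof but simply cites \cite[Lemma~2, Chapter~2]{cox1997ideals}, whose proof proceeds in the same way you outline (write $x^{\mathbf v}$ as a finite $\mathbb S_n$-combination of the generators, expand into monomials, and compare coefficients using the $\mathbb F$-linear independence of monomials).
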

						\begin{lemma}[Minimal basis of monomial ideals]\cite[Proposition 7, Chapter 2]{cox1997ideals}\label{minimal basis of monomial ideals}
							Every monomial ideal \( I \) of  \( \mathbb{S}_n \) admits a family of generators \( \Lambda = \{ x^{\mathbf{u}_1}, \dots, x^{\mathbf{u}_m} \} \) of monomials, called a \emph{minimal basis} of \( I \), which satisfies that \( x^{\mathbf{u}_i} \nmid x^{\mathbf{u}_j} \) for all \( i \neq j \). 
						\end{lemma}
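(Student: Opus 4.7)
The plan is to exhibit a minimal monomial basis by extracting the divisibility-minimal monomials from any monomial generating set of $I$, and to invoke Dickson's lemma to guarantee that this extraction yields a finite set. Since $I$ is a monomial ideal, it admits some (possibly infinite) generating family $\Lambda_0$ consisting solely of monomials. I would define $\Lambda \subseteq \Lambda_0$ to be the subset of those $x^{\mathbf{u}} \in \Lambda_0$ that are minimal with respect to divisibility, i.e., no $x^{\mathbf{v}} \in \Lambda_0$ with $\mathbf{v} \neq \mathbf{u}$ divides $x^{\mathbf{u}}$. In terms of exponents in $\mathbb{N}^n$ with the componentwise partial order, this is exactly the set of minimal elements of the exponent set $\{\mathbf{u} : x^{\mathbf{u}} \in \Lambda_0\}$.

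Finiteness of $\Lambda$ is the core step and the principal obstacle. It follows from Dickson's lemma, which asserts that $(\mathbb{N}^n, \leq)$ is a well-partial-order, so every subset of $\mathbb{N}^n$ has only finitely many minimal elements. Alternatively, one may bypass this combinatorial input by invoking the Hilbert basis theorem to see that $I$ is finitely generated, and then apply Lemma~\ref{Criterion for members of monomial ideals} to reduce to a finite \emph{monomial} generating set from which $\Lambda$ is extracted. Either route delivers a finite set $\Lambda = \{x^{\mathbf{u}_1}, \dots, x^{\mathbf{u}_m}\}$.

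With $\Lambda$ in hand, I would verify that it still generates $I$: for any $x^{\mathbf{v}} \in \Lambda_0$, either $x^{\mathbf{v}} \in \Lambda$, or some $x^{\mathbf{u}} \in \Lambda_0$ strictly divides $x^{\mathbf{v}}$. Since the componentwise order on $\mathbb{N}^n$ admits no infinite strictly descending chains, iterating the descent terminates at an element of $\Lambda$ dividing $x^{\mathbf{v}}$. Hence $x^{\mathbf{v}} \in \langle \Lambda \rangle_{\mathbb{S}_n}$, giving $I = \langle \Lambda_0 \rangle_{\mathbb{S}_n} \subseteq \langle \Lambda \rangle_{\mathbb{S}_n} \subseteq I$. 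Finally, the minimality condition $x^{\mathbf{u}_i} \nmid x^{\mathbf{u}_j}$ for $i \neq j$ is immediate from the defining property of $\Lambda$, since its distinct elements are pairwise incomparable under divisibility within $\Lambda_0$. The only nontrivial ingredient is the finiteness step supplied by Dickson's lemma; the remaining verifications are routine.
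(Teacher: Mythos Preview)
Your argument is correct and follows the standard route via Dickson's lemma (or equivalently via the Hilbert basis theorem combined with Lemma~\ref{Criterion for members of monomial ideals}). Note, however, that the paper does not supply its own proof of this lemma: it is simply quoted from \cite[Proposition~7, Chapter~2]{cox1997ideals} as a background fact, so there is no in-paper argument to compare against. Your proof is essentially the one given in that reference.
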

						\begin{lemma}[Radical ideal of a monomial ideal]\cite[Proposition 1.2.4 \& Corollary 1.2.5]{Hibi2011monomial}\label{radical monomial ideal}
							
							Let $I=\langle m_1,\dots,m_t\rangle_{\mathbb{S}_n}$ be a monomial ideal with $m_1,\dots,m_t$ monomials, then $\sqrt{I}=\langle\Sqrt(m_i)\rangle_{\mathbb{S}_n}$, where $\Sqrt(m_i):=x_{i_1}\dots x_{i_k}$ if $m_i=x_{i_1}^{\alpha_1}\dots x_{i_k}^{\alpha_k}$ with $\alpha_{j}\in \mathbb{Z}_+,j\in \{i_1,\dots,i_k\}$.
							
							In particular, a monomial ideal $I$ is radical if and only if it's generated by \emph{square-free monomials}, i.e., monomials $m$ satisfying $m=\Sqrt(m)$. %In this case, we will call it a \textbf{square-free monomial ideal}.
						\end{lemma}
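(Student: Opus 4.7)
The plan is to prove the two inclusions $\langle\Sqrt(m_i)\rangle_{\mathbb{S}_n} \subseteq \sqrt{I}$ and $\sqrt{I} \subseteq \langle\Sqrt(m_i)\rangle_{\mathbb{S}_n}$ separately, and then deduce the ``in particular'' statement as a direct corollary. The first inclusion is the easier one: for each generator $m_i = x_{i_1}^{\alpha_1}\cdots x_{i_k}^{\alpha_k}$, setting $N_i := \max_j \alpha_j$ makes $\Sqrt(m_i)^{N_i} = x_{i_1}^{N_i}\cdots x_{i_k}^{N_i}$ divisible by $m_i$, hence an element of $I$, so $\Sqrt(m_i) \in \sqrt{I}$. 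Since $\sqrt{I}$ is an ideal, $\langle\Sqrt(m_i)\rangle \subseteq \sqrt{I}$ follows.

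For the reverse inclusion I would write $J := \langle\Sqrt(m_i)\rangle$ and reduce as follows. Since $\Sqrt(m_i) \mid m_i$, each $m_i$ lies in $J$, so $I \subseteq J$ and therefore $\sqrt{I} \subseteq \sqrt{J}$; it then suffices to prove that $J$ is already radical, i.e., that every monomial ideal generated by square-free monomials is radical. I would split this into two substeps: (i) show that if $f = \sum_\beta c_\beta x^\beta \in \sqrt{J}$, then every monomial $x^\beta$ appearing in $f$ is in $\sqrt{J}$; (ii) for a single monomial $x^\alpha \in \sqrt{J}$, show $x^\alpha \in J$. For (i) I would fix a monomial order (lex will do) and induct on the number of terms of $f$: since the leading monomial of $f^N$ is a nonzero scalar multiple of $x^{N\alpha}$ where $x^\alpha$ is the leading monomial of $f$, and since $f^N \in J$ forces every monomial of $f^N$ to lie in $J$ by Lemma~\ref{Criterion for members of monomial ideals}, we get $x^{N\alpha} \in J$ and hence $x^\alpha \in \sqrt{J}$; subtracting $c_\alpha x^\alpha$ from $f$ keeps us inside $\sqrt{J}$ and strictly decreases the number of terms, so iteration completes the argument. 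For (ii), $x^{N\alpha} \in J$ forces some square-free generator $\Sqrt(m_i)$ to divide $x^{N\alpha}$; but divisibility by a square-free monomial depends only on the support of the divided monomial, and $x^\alpha$ and $x^{N\alpha}$ have the same support, so $\Sqrt(m_i) \mid x^\alpha$ as well, giving $x^\alpha \in J$.

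For the ``in particular'' statement: if $I$ is generated by square-free monomials, then $m_i = \Sqrt(m_i)$ for all $i$, hence $I = J = \sqrt{I}$ by the first part, and $I$ is radical; conversely, if $I = \sqrt{I}$, then $I = \langle\Sqrt(m_i)\rangle$ exhibits $I$ as generated by square-free monomials. I expect the main obstacle to be substep (i), where one must prevent the possibility that all of the ``witnessing'' monomials in $f^N$ cancel each other out. The monomial-order trick cleanly handles this because the leading monomial of $f^N$ cannot be cancelled, and finiteness of the number of terms of $f$ ensures termination of the induction; without such a device, the cancellation issue is genuinely subtle and is the only non-formal point in the argument.
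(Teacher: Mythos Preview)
The paper does not supply its own proof of this lemma; it merely records the statement with a citation to \cite{Hibi2011monomial}. Your argument is correct and is essentially the standard one found in that reference: the inclusion $\langle\Sqrt(m_i)\rangle\subseteq\sqrt{I}$ follows because a high enough power of $\Sqrt(m_i)$ is divisible by $m_i$, and the reverse inclusion is reduced to showing that a square-free monomial ideal is radical, which you handle by the support argument in substep~(ii). One small remark: in substep~(i) you invoke Lemma~\ref{Criterion for members of monomial ideals} to conclude that $f^N\in J$ forces every monomial of $f^N$ to lie in $J$; strictly speaking that lemma only treats a single monomial, but the polynomial version (a polynomial lies in a monomial ideal iff each of its terms does) is the equally standard companion fact stated immediately after it in \cite{cox1997ideals}, and is clearly what you intend.
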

						\begin{lemma}[Minimal primary decomposition of a monomial ideal]\cite[Theorem 1.3.1]{Hibi2011monomial}\label{primary decomposition of a radical monomial ideal}
							
							Let $I\subset \mathbb{S}_n$ be a monomial ideal. Then $I$ can be decomposed as $I=\bigcap_{i=1}^l Q_i$
							where each primary ideal $Q_i$ is of the form $\langle x_{i_1}^{d_1},x_{i_2}^{d_2},\dots, x_{i_s}^{d_s}\rangle$. Moreover, a minimal primary decomposition of this form is unique.
						\end{lemma}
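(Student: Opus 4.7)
The plan is to establish existence through an iterative splitting procedure that breaks monomial generators into coprime factors, and to extract uniqueness from the irreducibility of ideals generated solely by pure powers of variables.

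First I would prove the key splitting identity: if $I=\langle m_1,\ldots,m_t\rangle_{\mathbb{S}_n}$ is a monomial ideal and some generator $m_j$ factors as $m_j=u\cdot v$ with $\gcd(u,v)=1$ (so $u$ and $v$ involve disjoint variables), then
\[
I=\langle m_1,\ldots,\widehat{m_j},\ldots,m_t,u\rangle_{\mathbb{S}_n}\cap\langle m_1,\ldots,\widehat{m_j},\ldots,m_t,v\rangle_{\mathbb{S}_n}.
\]
The inclusion $\subseteq$ is immediate. For $\supseteq$, both sides are monomial ideals, and by Lemma~\ref{Criterion for members of monomial ideals} it suffices to verify on monomials: any monomial in the intersection must be divisible either by some $m_i$ with $i\neq j$ or, using the coprimality of $u$ and $v$, by $uv=m_j$.

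Given the splitting lemma, existence would follow by induction on a complexity measure of the minimal basis, such as the total number of variables appearing in generators that are not pure prime powers. Each application of the splitting strictly reduces this measure, so the process terminates with $I=\bigcap_{i=1}^{l}Q_i$ where each $Q_i$ is of the required form $\langle x_{i_1}^{d_1},\ldots,x_{i_s}^{d_s}\rangle_{\mathbb{S}_n}$. Lemma~\ref{radical monomial ideal} then gives $\sqrt{Q_i}=\langle x_{i_1},\ldots,x_{i_s}\rangle_{\mathbb{S}_n}$, which is prime, and $Q_i$ is primary because in $\mathbb{S}_n/Q_i$ the variables $x_{i_k}$ are nilpotent while the remaining variables are nonzerodivisors, so every zero divisor is nilpotent. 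Discarding redundant terms yields a minimal primary decomposition.

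The main obstacle is uniqueness. The associated primes, and hence the variable supports $\{i_1,\ldots,i_s\}$ of each component, are already pinned down by Proposition~\ref{Strong uniqueness of prime decomposition}. What remains is to show the exponents $d_1,\ldots,d_s$ are determined. My strategy is to prove that each ideal $Q=\langle x_{i_1}^{d_1},\ldots,x_{i_s}^{d_s}\rangle_{\mathbb{S}_n}$ is \emph{irreducible} in the lattice of monomial ideals, i.e., cannot be written as $J_1\cap J_2$ with monomial ideals $J_1,J_2\supsetneq Q$. This can be read from the set of standard monomials of $Q$, which has a unique maximal element $x_{i_1}^{d_1-1}\cdots x_{i_s}^{d_s-1}$ in the divisibility order; any two proper monomial overideals of $Q$ must both omit this maximal standard monomial, hence their intersection strictly contains $Q$. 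Granting irreducibility, a standard comparison of two minimal decompositions sharing the same associated primes forces them to agree component by component, finishing the uniqueness claim.
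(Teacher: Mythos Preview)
The paper does not supply its own proof of this lemma; it is quoted in the appendix as background, with a citation to \cite[Theorem~1.3.1]{Hibi2011monomial}. So there is nothing in the paper to compare your argument against. Your overall plan---split mixed generators via the coprime-factor identity, iterate until every generator is a pure power, then argue uniqueness through irreducibility---is the standard route and is essentially what one finds in the cited reference.

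That said, your irreducibility step has a real gap. You claim that the standard monomials of $Q=\langle x_{i_1}^{d_1},\ldots,x_{i_s}^{d_s}\rangle_{\mathbb{S}_n}$ have a unique maximal element $x_{i_1}^{d_1-1}\cdots x_{i_s}^{d_s-1}$ under divisibility. This is only true when $s=n$. If $s<n$ there are infinitely many standard monomials and no maximal one: for $Q=\langle x^2\rangle\subseteq\mathbb{F}[x,y]$ the monomial $y$ lies outside $Q$ but does not divide $x$. Consequently a proper monomial overideal of $Q$ need not contain your ``socle'' monomial, and the argument collapses.

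The fix is short. Show instead that pure-power ideals are \emph{prime-like for intersections of monomial ideals}: if $Q\supseteq J_1\cap J_2$ with $J_1,J_2$ monomial, then $Q\supseteq J_1$ or $Q\supseteq J_2$. Indeed, if $m_1\in J_1\setminus Q$ and $m_2\in J_2\setminus Q$, then every $x_{i_k}$-exponent of $m_1$ and of $m_2$ is below $d_k$, hence the same holds for $\operatorname{lcm}(m_1,m_2)\in J_1\cap J_2$, contradicting $J_1\cap J_2\subseteq Q$. From this, any irredundant decomposition $I=\bigcap_i Q_i$ into pure-power ideals consists exactly of the minimal pure-power ideals containing $I$, which is an intrinsic description and gives uniqueness. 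With this correction your plan goes through.
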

						\begin{corollary}[Prime decomposition of a radical monomial ideal]\label{prime decomposition of a radical monomial ideal}
							
							Any radical monomial ideal \( I \subset \mathbb{S}_n\) admits a prime decomposition of the form:
							$I = \bigcap_{i=1}^{l} P_i$,
							where each \( P_i \) is a \emph{linear prime ideal} of the form $P_i = \langle  x_{i_1},x_{i_2},\dots, x_{i_s} \rangle$.
						\end{corollary}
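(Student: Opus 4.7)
The plan is to combine the three preceding lemmas---on radical monomial ideals, on their primary decomposition, and on radicals commuting with intersections---to reduce the claim to a direct computation on each primary component.

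First, since $I \subset \mathbb{S}_n$ is a radical monomial ideal, Lemma~\ref{radical monomial ideal} lets us replace any given generating set by one consisting of square-free monomials, and in particular $I$ is a genuine monomial ideal. By Lemma~\ref{primary decomposition of a radical monomial ideal}, $I$ therefore admits a minimal primary decomposition of the form
\begin{equation*}
I = \bigcap_{i=1}^{l} Q_i, \qquad Q_i = \langle x_{i_1}^{d_{i,1}}, x_{i_2}^{d_{i,2}}, \dots, x_{i_{s_i}}^{d_{i,s_i}} \rangle_{\mathbb{S}_n},
\end{equation*}
with each $d_{i,j} \in \mathbb{Z}_+$. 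Since $I$ is radical, applying the identity $\sqrt{\bigcap_j I_j} = \bigcap_j \sqrt{I_j}$ (used already in the proof of Proposition~\ref{Equivalent condition for prime decomposable ideals in a Noetherian ring}) yields
\begin{equation*}
I = \sqrt{I} = \bigcap_{i=1}^{l} \sqrt{Q_i}.
\end{equation*}

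The second step is to identify each $\sqrt{Q_i}$ explicitly. Because $Q_i$ is generated by the pure powers $x_{i_1}^{d_{i,1}}, \dots, x_{i_{s_i}}^{d_{i,s_i}}$, Lemma~\ref{radical monomial ideal} applied to $Q_i$ gives
\begin{equation*}
\sqrt{Q_i} = \langle \Sqrt(x_{i_1}^{d_{i,1}}), \dots, \Sqrt(x_{i_{s_i}}^{d_{i,s_i}}) \rangle_{\mathbb{S}_n} = \langle x_{i_1}, x_{i_2}, \dots, x_{i_{s_i}} \rangle_{\mathbb{S}_n},
\end{equation*}
which is a linear prime ideal (it is the kernel of the evaluation map sending each $x_{i_j}$ to $0$ and the remaining variables to themselves, so the quotient is a polynomial ring in the complementary variables, hence an integral domain). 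Setting $P_i := \sqrt{Q_i}$, we obtain the desired prime decomposition $I = \bigcap_{i=1}^{l} P_i$ by linear prime ideals.

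There is essentially no obstacle here: the work has already been done by the square-free criterion for radicality and by the structure of the primary components in Lemma~\ref{primary decomposition of a radical monomial ideal}. The only minor point to be careful about is that the resulting decomposition need not a priori be the \emph{minimal} prime decomposition---redundant $P_i$'s could in principle appear if two distinct $Q_i, Q_j$ had the same radical---but by minimality of the original primary decomposition the $\sqrt{Q_i}$'s are already distinct, and by Proposition~\ref{Strong uniqueness of prime decomposition} the resulting $\{P_i\}$ is then exactly $\Ass(I)$. Thus the existence claim is proved, and in fact we obtain the unique minimal prime decomposition of $I$ in the form asserted.
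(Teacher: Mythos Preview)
Your proof is correct and follows exactly the route the paper intends: the corollary is stated without an explicit proof, but it is meant to follow from Lemma~\ref{primary decomposition of a radical monomial ideal} by taking radicals componentwise and invoking Lemma~\ref{radical monomial ideal}, precisely as you do. Your additional remark about minimality and the identification with $\Ass(I)$ is a nice bonus not spelled out in the paper.
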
 

				\end{document}